\documentclass[11pt]{article}
\usepackage[a4paper,margin=1in]{geometry}
\usepackage[utf8]{inputenc}
\usepackage[T1]{fontenc}
\usepackage{authblk}
\usepackage{amsmath,amssymb,amsfonts,amsthm}
\usepackage{array}
\usepackage{booktabs}
\usepackage{longtable}
\usepackage{graphicx}
\usepackage{tikz}
\usetikzlibrary{positioning, arrows.meta, calc}
\usepackage[authoryear,round]{natbib}
\usepackage[hidelinks]{hyperref}

\def\bs{\boldsymbol}

\def\doubleR{\mathbb{R}}

\def\ci{\perp\!\!\!\!\perp}

\newcommand{\rmref}{\mathrm{ref}}
\newcommand{\obs}{\mathrm{obs}}

\theoremstyle{plain}

\newtheorem{proposition}{Proposition}

\theoremstyle{definition}
\newtheorem{definition}{Definition}

\title{Prognosis-equivalent mapping of clinical measurements via survival analysis}
\author[1]{Kazuharu Harada\thanks{Corresponding author: \texttt{haradak@tokyo-med.ac.jp}}}
\author[1]{Mitsunori Ogawa}
\affil[1]{Department of Health Data Science, Tokyo Medical University,
6-1-1 Shinjuku, Shinjuku-ku, Tokyo 160-8402, Japan}
\date{}

\begin{document}

\maketitle

\begin{abstract}
A continuous clinical measurement recorded in fixed physical units may have different prognostic meaning across patients when its effect depends on a patient-level modifier.
For example, the same tumor diameter may imply markedly different prognosis in an infant and an adult, because patient size can modify its prognostic effect.
We formalize this problem through prognosis-equivalent mapping for time-to-event outcomes.
The resulting estimand maps a measurement value observed at one modifier level to the value under a reference modifier level that yields the same conditional prognostic quantity.
The basic operation is to equate a monotone conditional prognostic score and invert the reference-side curve.
In observational data, however, treatment may be selected according to the modifier and the measurement, so a direct equate-and-invert procedure may reflect differences in treatment assignment as well as the prognostic meaning of the measurement itself.
We therefore define, in addition to a direct approach, a policy-standardized mapping that standardizes treatment under a common policy using the g-formula.
We also introduce an origin-referenced mapping that compares changes in prognostic score from a common anchor, thereby separating modifier-specific prognostic levels from modifier dependence in the measurement--prognosis relationship.
Using Cox models, we develop linear and flexible inversion estimators with analytic and bootstrap confidence bands.
Simulations evaluate finite-sample performance and illustrate how treatment standardization and origin referencing clarify what is being mapped.
\end{abstract}

\medskip
\noindent\textbf{Key words:} Prognosis-equivalent mapping; Effect modification; Causal inference; G-computation; Survival analysis.

\section{Introduction}
% ===========================================================================

In clinical practice, a value measured on the same physical scale may have different prognostic implications depending on the patient's clinical background.
This issue is well illustrated by tumor size in childhood soft-tissue sarcoma: because the prognostic weight of a given diameter depends on the size of the child's body, the same absolute tumor size is more advanced in a smaller (younger) patient than in a larger (older) one, so that staging by raw diameter can misrepresent severity. 
\citet{Ferrari2009-wk} accordingly interpreted tumor size relative to body surface area and derived a size conversion across children of different body sizes.

The same structure appears across medicine. In lung adenocarcinoma, the prognostic information conveyed by total tumor size differs from that conveyed by the invasive component \citep{Kameda2018-uc}. 
Outside oncology, for example, creatinine-based estimates of the glomerular filtration rate are influenced by non-GFR determinants such as muscle mass \citep{Hsu2021-lp}. 
These examples share a common feature: a measurement scale calibrated in one subpopulation does not necessarily preserve the same clinical meaning in another.

These examples can be viewed as cases in which patient characteristics modify the clinical or prognostic meaning of a measurement. 
Although this phenomenon is related to effect modification and interaction \citep{VanderWeele2009} and to calibration in prognostic modeling \citep{Harrell1996, VanCalster2019}, the present problem is more specific. It asks how a measurement observed under one modifier value should be translated to a reference modifier value so that the translated value carries the same prognostic meaning. 
To formalize this idea, we define the \emph{prognosis-equivalent mapping}, which maps a measurement $X=x$ observed under modifier $M=m$ to the value under a reference modifier $m_{\rm ref}$ that carries the same prognostic meaning.

The equate-and-invert construction underlying this mapping has precedents in several areas. 
In bioassay, relative potency is defined by equating biological responses and inverting a dose--response model \citep{Finney1978-assay, DinseUmbach2011-rp}. 
In epidemiology, risk and rate advancement periods solve for the shift in age that gives the same risk or hazard between exposed and reference groups \citep{Brenner1993-rap}. 
Similarly, vascular and biological ages invert risk or mortality models to express an individual's prognosis as the age of a reference population with the same risk \citep{DAgostino2008-fram, Levine2018-phenoage}. 
These examples share the basic operation of equating an outcome-related quantity and solving on a reference scale. 
The present problem differs in its target: it maps measurement values across levels of a patient-level modifier by equating their conditional time-to-event prognosis.

Several statistical frameworks are also adjacent but answer different questions. 
Conditional and counterfactual distributional methods compare distributions across groups \citep{Chernozhukov2013-cf}, whereas the present target equates conditional prognosis. 
Measurement-error methods, including regression calibration and SIMEX, correct covariate-error bias in regression parameters \citep{Prentice1982-qn, Cook1994-sa}, rather than estimating a between-modifier mapping. Varying-coefficient models estimate how a covariate effect changes with another variable \citep{Hastie1993-zj}, but do not by themselves define a reference-scale value carrying equivalent prognosis. 

Clinical examples of outcome-based recalibration exist, but they have largely remained ad hoc, problem-specific constructions. 
The tumor-size conversion of \citet{Ferrari2009-wk}, based on iso-risk curves from a Cox interaction model, is a motivating special case of the prognosis-equivalent mapping formalized here.
What has been missing is a general estimand, together with its causal interpretation, identification assumptions, and inferential procedures.
This is particularly important when prognosis depends not only on the measurement and modifier themselves but also on treatment decisions that may vary with both the measurement and the modifier.
In that setting, the target mapping depends on whether treatment-mediated heterogeneity is retained as part of observed clinical practice or standardized under a common policy.

This article aims to fill that gap with a three-layered contribution. 
First, we define the prognosis-equivalent mapping through a general prognostic quantity and separate modifier-specific prognostic levels from modifier dependence in the measurement--prognosis relationship (Section~\ref{sec:formulation}).
Second, we give these mappings an explicit causal interpretation using causal diagrams in which the modifier may affect prognosis through a modifier-specific prognostic level, modification of the measurement--prognosis relationship, and treatment choice.
This structure motivates two versions: the \emph{observed-practice mapping}, based on observed clinical practice, and the \emph{policy-standardized mapping}, which standardizes treatment choice under a common policy (Section~\ref{sec:causal}).
Third, under a Cox proportional hazards model, we develop estimation and inference procedures for right-censored time-to-event data, including a closed-form mapping under a linear interaction model, numerical inversion under a flexible model with monotone rearrangement, and a g-computation procedure for the policy-standardized mapping (Sections~\ref{sec:estimation} and \ref{sec:inference}). 
The article then presents a simulation study (Section~\ref{sec:simulation}) and concluding discussion (Section~\ref{sec:discussion}).

% ===========================================================================
\section{Problem formulation and estimation target}
\label{sec:formulation}
% ===========================================================================

\subsection{Setup}

We consider a setting in which the prognostic meaning of a continuous measurement $X$ varies with a modifier $M$, conditional on baseline covariates $Z$. 
The event time $T$ may be right-censored by a censoring time $C$, so we observe $Y=\min(T,C)$ and $\Delta=I(T\le C)$. 
Estimation under censoring is described in Section~\ref{sec:estimation}.

Let $\mathcal{Q}_t(x,m,z)$ denote a prognostic quantity at time $t$, with the conditional survival function as the default choice; its concrete observed-practice and policy-standardized forms are introduced in Section~\ref{sec:causal}. 
Throughout we maintain the \emph{prognostic-monotonicity} condition that $\mathcal{Q}_t(\cdot,m,z)$ is strictly monotone in $x$, in a common direction, for every $(m,z)$. 
This fixes the prognostic ordering of $x$ and, together with the existence conditions stated below, makes the mappings well defined. 
It is convenient to work on an unbounded scale, so let $\varphi$ be a known strictly monotone transformation and define the \emph{prognostic score}
\begin{gather}
\Psi_t^{\mathcal{Q}}(x,m,z)
:=
\varphi\left\{\mathcal{Q}_t(x,m,z)\right\}.
\end{gather}
A natural choice is the complementary log-log transformation $\varphi(q)=\log\{-\log(q)\}$, which gives the log-cumulative-hazard scale when $\mathcal{Q}_t$ is a survival probability and corresponds to the linear predictor scale of a Cox model \citep{Cox1972, AndersenGill1982}. 
Since $\varphi$ is strictly monotone, prognostic monotonicity carries over to $\Psi_t^{\mathcal{Q}}(\cdot,m,z)$. 

The goal is a mapping that sends $X=x$ observed under $M=m$ to the value carrying equivalent prognostic score under $M=m_{\rmref}$. We define the absolute and origin-referenced mappings on the generic quantity $\mathcal{Q}_t$ (Section~\ref{sec:peq-mappings}). 

\subsection{Prognosis-equivalent mappings}
\label{sec:peq-mappings}

Write $\mathcal{X}_{m,z}$ for a prespecified measurement domain within the conditional support of $X$ given $(M,Z)=(m,z)$, over which the prognostic score is evaluated and inversion is considered.
For a chosen reference modifier value $m_{\rm ref}$, $\mathcal{X}_{m_{\rm ref},z}$ denotes the corresponding reference-side domain in which the mapped value $L$ is sought.

The most direct mapping equates the full prognostic score.

\begin{definition}[Absolute prognosis-equivalent mapping]
\label{def:abs-map}

For fixed $t$, $(m,z)$, and $m_{\rm ref}$, define the \emph{supported domain} as the set of all $x\in\mathcal{X}_{m,z}$ satisfying
\begin{gather}
\Psi_t^{\mathcal Q}(x,m,z)
\in
\left\{
\Psi_t^{\mathcal Q}(u,m_{\rm ref},z)
:
u\in\mathcal{X}_{m_{\rm ref},z}
\right\}.
\end{gather}
For each $x$ in this supported domain, prognostic monotonicity ensures that the \emph{absolute prognosis-equivalent mapping} $L_t^{\mathcal Q}(x,m,z;m_{\rm ref})$ is the unique value
$L\in\mathcal{X}_{m_{\rm ref},z}$ satisfying
\begin{gather}
\Psi_t^{\mathcal Q}(L,m_{\rm ref},z)
=
\Psi_t^{\mathcal Q}(x,m,z).
\end{gather}

\end{definition}

Equivalently, $\mathcal{Q}_t(L,m_{\rm ref},z)=\mathcal{Q}_t(x,m,z)$: the absolute mapping returns the value under $M=m_{\rm ref}$ that carries the same total prognosis as $X=x$ under $M=m$, for the same $z$.

Because the absolute mapping equates the full prognostic score, its mapped value reflects two sources of prognostic difference: modifier groups may have different prognostic scores at a common measurement value, and the prognostic score may vary differently with $X$ across groups.
To distinguish these two sources, we express each modifier group’s prognostic score relative to a common measurement anchor.
For a prespecified anchor $x_0$, this is done by subtracting each group’s prognostic score at $X=x_0$.
The resulting excess score is zero at the anchor in every group and captures how the prognostic score changes as $X$ moves away from it.
Because the resulting excess score and mapping generally depend on $x_0$, the choice of anchor is part of the estimand.
In the motivating tumor-size setting, the origin $x_0=0$ is a natural choice because it represents the absence of the measured lesion or component across modifier groups.
Here, $X=0$ serves only as a prognostic reference point and does not imply zero mortality risk.

Using $x_0=0$, we define the excess prognostic score as
\begin{gather}
\Psi_t^{\mathcal Q,\circ}(x,m,z)
:=
\Psi_t^{\mathcal Q}(x,m,z)
-
\ell_t(m,z),
\end{gather}
where $\ell_t(m,z):=\Psi_t^{\mathcal Q}(0,m,z)$ is the \emph{modifier-specific prognostic level} at the origin.
Thus
\begin{gather}
\Psi_t^{\mathcal Q}(x,m,z)
=
\ell_t(m,z)
+
\Psi_t^{\mathcal Q,\circ}(x,m,z).
\end{gather}
Differences in $\ell_t(m,z)$ across modifier values represent modifier-associated differences in prognostic level at the origin, for fixed $z$.
Dependence of the excess score $\Psi_t^{\mathcal Q,\circ}$ on $m$ represents modification of the relationship between $X$ and prognosis relative to that origin.
These level and interaction interpretations are defined relative to the chosen prognostic-score scale and anchor.

Equating the excess prognostic scores defines the second mapping.

\begin{definition}[Origin-referenced prognosis-equivalent mapping]
\label{def:oref-map}

For fixed $t$, $(m,z)$, and $m_{\rm ref}$, suppose that $0 \in\mathcal X_{m,z}\cap\mathcal X_{m_{\rm ref},z},$ and that both anchor scores $\Psi_t^{\mathcal Q}(0,m,z)$ and $\Psi_t^{\mathcal Q}(0,m_{\rm ref},z)$ are finite.

Define the \emph{supported domain} as the set of all $x\in\mathcal{X}_{m,z}$ satisfying
\begin{gather}
\Psi_t^{\mathcal Q,\circ}(x,m,z)
\in
\left\{
\Psi_t^{\mathcal Q,\circ}(u,m_{\rm ref},z)
:
u\in\mathcal X_{m_{\rm ref},z}
\right\}.
\end{gather}
For each $x$ in this supported domain, prognostic monotonicity ensures that the \emph{origin-referenced prognosis-equivalent mapping} $L_t^{\mathcal Q,\circ}(x,m,z;m_{\rm ref})$ is the unique value $L\in\mathcal X_{m_{\rm ref},z}$ satisfying
\begin{gather}
\Psi_t^{\mathcal Q,\circ}(L,m_{\rm ref},z)
=
\Psi_t^{\mathcal Q,\circ}(x,m,z).
\end{gather}

\end{definition}

The origin-referenced mapping therefore equates the change in prognostic score from $X=0$, rather than the full prognostic score. 
Unlike the absolute mapping, it generally depends on the prespecified transformation $\varphi$ because centering is performed after the transformation. 
Because the excess score is zero at the origin in both modifier groups, prognostic monotonicity implies the fixed-point property $L_t^{\mathcal Q,\circ}(0,m,z;m_{\rm ref})=0$.

The following proposition gives the condition under which the two mappings coincide.
\begin{proposition}[Equivalence of the absolute and origin-referenced mappings]
\label{prop:recovery}
Fix $t$, $(m,z)$, and $m_{\rm ref}$, and suppose that the supported
domains of the absolute and origin-referenced mappings have a
nonempty intersection.
If
\begin{gather}
\ell_t(m,z)
=
\ell_t(m_{\rm ref},z),
\end{gather}
then the two supported domains coincide and
\begin{gather}
L_t^{\mathcal Q}(x,m,z;m_{\rm ref})
=
L_t^{\mathcal Q,\circ}(x,m,z;m_{\rm ref})
\end{gather}
at every $x$ in that domain.
Conversely, if the two mappings agree at any point in their common
supported domain, then
$\ell_t(m,z)=\ell_t(m_{\rm ref},z)$.
Consequently, the two mappings coincide throughout their supported
domains for all $(m,z)$ under consideration if and only if the
prognostic level is modifier-invariant.
% Under prognostic monotonicity, suppose that, for each $(m,z)$ under consideration, the supported domains of the absolute and origin-referenced mappings have a nonempty intersection.
% The two mappings coincide at every $x$ in their common supported domain, for all $(m,z)$,
% \begin{gather}
% L_t^{\mathcal{Q}}(x,m,z;m_{\rmref})
% =
% L_t^{\mathcal{Q},\circ}(x,m,z;m_{\rmref}),
% \end{gather}
% if and only if the prognostic level is modifier-invariant,
% \begin{gather}
% \ell_t(m,z)=\ell_t(m_{\rmref},z)
% \end{gather}
% for all $(m,z)$.
% Otherwise, wherever the relevant inverse exists, the absolute mapping additionally absorbs the level difference $\ell_t(m,z)-\ell_t(m_{\rmref},z)$ through the inverse score under $m_{\rmref}$.
\end{proposition}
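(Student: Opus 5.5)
The plan is to rewrite both defining equations in terms of the untransformed score and compare them directly; the whole argument then reduces to bookkeeping of the two anchor levels. Write $\Delta_\ell:=\ell_t(m,z)-\ell_t(m_{\rm ref},z)$, which is finite by the standing assumption in Definition~\ref{def:oref-map}. By the decomposition $\Psi_t^{\mathcal Q}=\ell_t+\Psi_t^{\mathcal Q,\circ}$, the origin-referenced equation $\Psi_t^{\mathcal Q,\circ}(L,m_{\rm ref},z)=\Psi_t^{\mathcal Q,\circ}(x,m,z)$ is equivalent to
\begin{gather}
\Psi_t^{\mathcal Q}(L,m_{\rm ref},z)=\Psi_t^{\mathcal Q}(x,m,z)-\Delta_\ell .
\end{gather}
Likewise, the origin-referenced supported-domain condition says that $\Psi_t^{\mathcal Q}(x,m,z)-\Delta_\ell$ lies in the range $R:=\{\Psi_t^{\mathcal Q}(u,m_{\rm ref},z):u\in\mathcal X_{m_{\rm ref},z}\}$. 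The absolute supported domain is instead the condition $\Psi_t^{\mathcal Q}(x,m,z)\in R$.

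\emph{Sufficiency.} If $\Delta_\ell=0$, the two supported-domain conditions are literally the same, so the domains coincide. For each $x$ in that domain, the two defining equations for $L$ are identical. Strict monotonicity of $\Psi_t^{\mathcal Q}(\cdot,m_{\rm ref},z)$ (prognostic monotonicity carried through $\varphi$) makes the solution in $\mathcal X_{m_{\rm ref},z}$ unique, so $L_t^{\mathcal Q}=L_t^{\mathcal Q,\circ}$ pointwise. The hypothesis of a nonempty intersection is only needed so that the statement is not vacuous.

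\emph{Converse.} Suppose that at some $x^\ast$ in the common supported domain both mappings equal the same value $L^\ast\in\mathcal X_{m_{\rm ref},z}$. The absolute equation gives $\Psi_t^{\mathcal Q}(L^\ast,m_{\rm ref},z)=\Psi_t^{\mathcal Q}(x^\ast,m,z)$. The rewritten origin-referenced equation gives $\Psi_t^{\mathcal Q}(L^\ast,m_{\rm ref},z)=\Psi_t^{\mathcal Q}(x^\ast,m,z)-\Delta_\ell$. Subtracting, which is legitimate because all quantities are finite real scores, yields $\Delta_\ell=0$. This is the step needing the most care: one must check that the scores at $x^\ast$ and $L^\ast$ are finite. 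I will justify this by noting that $\Psi_t^{\mathcal Q}$ is real-valued on the domains because $\varphi$ maps to an unbounded real scale, and that the anchor scores are finite by assumption.

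\emph{Global statement.} Combining the two directions pointwise in $(m,z)$ gives the result. If $\ell_t(m,z)=\ell_t(m_{\rm ref},z)$ for all $(m,z)$ under consideration, the mappings coincide everywhere by sufficiency. Conversely, if they coincide throughout their supported domains and each common domain is nonempty, the converse forces equality of the levels for every $(m,z)$, which is modifier invariance of the prognostic level.
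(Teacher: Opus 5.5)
Your proposal is correct and is essentially the paper's argument: both use $\Psi_t^{\mathcal Q}=\ell_t+\Psi_t^{\mathcal Q,\circ}$ to show that the two defining equations and supported-domain conditions differ only by the shift $\ell_t(m,z)-\ell_t(m_{\rm ref},z)$, and both obtain the converse by subtracting the two equations at a point of agreement; the only difference is that the paper writes this on the excess-score scale, as $L=[\Psi_t^{\mathcal Q,\circ}(\cdot,m_{\rm ref},z)]^{-1}\{\Psi_t^{\mathcal Q,\circ}(x,m,z)+\ell_t(m,z)-\ell_t(m_{\rm ref},z)\}$, whereas you use the full-score scale. The finiteness at $x^\ast$ and $L^\ast$ that you flag does not follow from $\varphi$ alone: if $\mathcal Q_t$ reaches $0$ at an endpoint of both the source and reference domains, both mappings send the source endpoint to the reference endpoint whatever the levels, but the paper's proof relies on the same implicit real-valuedness of $\Psi_t^{\mathcal Q}$, so this is a shared standing assumption rather than a gap in your argument.
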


\noindent The result follows directly from the two equations and monotonicity; see \ref{appA}.

A linear interaction model illustrates the decomposition. Suppose the score is additive with an $m$-free baseline,
\begin{gather}
\Psi_t^{\mathcal{Q}}(x,m,z)=b_t+\eta(x,m,z),
\qquad
\eta(x,m,z)=\beta_X x+\beta_M m+\beta_{XM}mx+\gamma^\top z;
\end{gather}
the complementary log-log score of a Cox model has this form, with $b_t=\log\Lambda_0(t)$, and this model is adopted in Section~\ref{sec:monotone-inversion}. The baseline cancels in the excess score, giving
\begin{gather}
\ell_t(m,z)=b_t+\beta_M m+\gamma^\top z,
\qquad
\Psi_t^{\mathcal{Q},\circ}(x,m,z)
=
(\beta_X+\beta_{XM}m)x.
\end{gather}
Therefore
\begin{gather}
L_t^{\mathcal{Q},\circ}(x,m,z;m_{\rmref})
=
\frac{\beta_X+\beta_{XM}m}
{\beta_X+\beta_{XM}m_{\rmref}}\,x,
\end{gather}
which depends on the $M\times X$ interaction, whereas the absolute mapping additionally includes the level offset $\beta_M(m-m_{\rmref})/(\beta_X+\beta_{XM}m_{\rmref})$.
Both mappings are also independent of $t$ in this model, since the time-dependent baseline term $b_t$ cancels from the defining equations.
Thus, in this model, the two mappings coincide exactly when $\beta_M=0$.

% ===========================================================================
\section{Causal structure, identification, and interpretation}
\label{sec:causal}
% ===========================================================================

The mappings in Section~\ref{sec:formulation} are defined through conditional prognosis, but treatment decisions complicate their interpretation because treatment may depend on $X$, $M$, and $Z$ and may itself affect the outcome.
The causal formulation is not intended to interpret changes in $X$ or $M$ as interventions; rather, it distinguishes prognosis under observed treatment practice from prognosis standardized to a common treatment policy and states the conditions identifying the latter.

\subsection{Causal diagrams and target quantities}

Using causal diagrams \citep{Pearl1995, Greenland1999}, Figure~\ref{fig:dag} summarizes the assumed data-generating structure.
We then define the corresponding observed-practice and policy-standardized prognostic quantities.

\begin{figure}[htbp]
    \centering
    \begin{minipage}[t]{0.48\textwidth}
        \centering
        \begin{tikzpicture}[
            node distance=0.9cm and 1.2cm,
            every node/.style={draw, circle, minimum size=0.8cm, inner sep=0pt, font=\small},
            arr/.style={-{Stealth[length=6pt]}, thick},
            mod/.style={-{Stealth[length=5pt]}, thick, dashed},
            bidir/.style={{Stealth[length=6pt]}-{Stealth[length=6pt]}, thick, dashed}
        ]
        \node (M)              {$M$};
        \node (Z) [above=of M] {$Z$};
        \node (X) [below=of M] {$X$};
        \node (A) [right=of M] {$A$};
        \node (T) [right=of A] {$T$};
        \draw[arr] (M) to[bend left=30] (T);
        \draw[arr] (M) -- (A);
        \draw[arr] (X) to[bend right=45] (T);
        \draw[arr] (X) -- (A);
        \draw[arr] (Z) -- (A);
        \draw[arr] (Z) to[bend left=35] (T);
        \draw[arr] (A) -- (T);
        \draw[bidir] (Z) to[bend right=35] (X);
        \draw[bidir] (M) -- (Z);
        \draw[bidir] (M) -- (X);
        \end{tikzpicture}

        {\small (a) Effect modification: $M$ moderates $X\to T$.}
    \end{minipage}
    \hfill
    \begin{minipage}[t]{0.48\textwidth}
        \centering
        \begin{tikzpicture}[
            node distance=0.9cm and 1.2cm,
            every node/.style={draw, circle, minimum size=0.8cm, inner sep=0pt, font=\small},
            latent/.style={dashed},
            arr/.style={-{Stealth[length=6pt]}, thick},
            mod/.style={-{Stealth[length=5pt]}, thick, dashed},
            bidir/.style={{Stealth[length=6pt]}-{Stealth[length=6pt]}, thick, dashed}
        ]
        \node (M)                         {$M$};
        \node (Z)     [above=of M]        {$Z$};
        \node (A)     [right=of M]        {$A$};
        \node (T)     [right=of A]        {$T$};
        \node (Xstar) [below=of M, latent] {$X^*$};
        \node (X)     [right=of Xstar]        {$X$};
        \draw[arr] (Xstar) -- (X);
        \draw[arr] (Xstar) to[bend right=60] (T);
        \draw[arr] (M) to[bend left=30] (T);
        \draw[arr] (M) -- (X);
        \draw[arr] (M) -- (A);
        \draw[arr] (X) -- (A);
        \draw[arr] (Z) -- (A);
        \draw[arr] (Z) to[bend left=35] (T);
        \draw[arr] (A) -- (T);
        \draw[bidir] (Z) to[bend right=35] (Xstar);
        \draw[bidir] (M) -- (Z);
        \draw[bidir] (M) -- (Xstar);
        \end{tikzpicture}

        {\small (b) Measurement special case: $X=g_M(X^*)$, $X^*\to T$.}
    \end{minipage}
    \caption{Two readings of the same observed data for the prognosis-equivalent mapping problem. 
    Solid arrows indicate directed causal effects, whereas dashed bidirectional links indicate possible unobserved common causes between baseline variables.
    \textbf{(a)} Effect-modification reading, with no latent quantity: $M$ acts on prognosis through a modifier-specific \emph{level} ($M\to T$) and on treatment choice ($M\to A$), and the observed measurement carries the prognostic effect $X\to T$.
    The \emph{$M\times X$ interaction}, by which the prognostic effect of $X$ depends on $M$, is a property of the functional form of the joint effect of $M$ and $X$ on $T$ rather than a separate edge. 
    \textbf{(b)} Measurement special case of Proposition~\ref{prop:units}: the prognostic effect instead runs from a latent common-unit quantity $X^*$ ($X^*\to T$), and the observed $X$ is a modifier-dependent measurement of $X^*$, $X=g_M(X^*)$, encoded by the edges $X^*\to X$ and $M\to X$.}
    \label{fig:dag}
    \vspace{3mm}
    % \raggedright\noindent\textit{Alt text:} Two causal diagrams sharing nodes M (modifier), Z (pretreatment covariates), X (observed measurement), A (treatment), and T (event time). In panel (a) directed edges run from M to T, X to T, M to A, X to A, Z to A, Z to T, and A to T, and dashed bidirectional links connect M with Z, M with X, and Z with X; the M-by-X interaction is not a drawn edge but the dependence of the X-to-T effect on M. Panel (b) routes the prognostic effect through a latent node X-star instead of through X: directed edges run from X-star to T, from X-star to X, and from M to X, so that X is a modifier-dependent measurement of X-star (the measurement map depends on M), together with M to T, M to A, X to A, Z to A, Z to T, and A to T, and dashed bidirectional links connect M with Z, M with X-star, and Z with X-star. Panel (b) is the measurement special case in which the observed X is a modifier-specific reading of the latent X-star.
\end{figure}

Panel~(a) shows the basic causal diagram for the observed variables.
The modifier $M$ may affect prognosis through its own prognostic level ($M\to T$), through an interaction with $X$, and through treatment choice ($M\to A\to T$). 
Because treatment is chosen based on $X$, $M$, and $Z$, the prognostic meaning of $X$ in the observed data depends on the prevailing treatment policy. 
This motivates two target quantities for $\mathcal{Q}_t$ in Section~\ref{sec:formulation}.

The \emph{observed-practice (OP)} quantity is the conditional survival under the treatment process actually operating in the data,
\begin{gather}
    Q_t^{\obs}(x,m,z)
    :=
    P(T>t \mid X=x,M=m,Z=z).
\end{gather}
It reflects all three pathways by which $M$ acts in panel~(a), including the treatment-mediated one ($M\to A\to T$), and so represents the composite prognostic meaning of $X$ under current practice. 
Setting $\mathcal{Q}_t=Q_t^{\obs}$ in Definitions~\ref{def:abs-map}--\ref{def:oref-map} yields the absolute and origin-referenced OP mappings $L_t^{\obs}$ and $L_t^{\obs,\circ}$.

The \emph{policy-standardized (PS)} quantity replaces the prevailing treatment mechanism with a common policy $\pi$.
Let $T^a$ denote the potential event time under the intervention $A=a$, and let $T^\pi$ denote the counterfactual event time under policy $A\sim\pi$; define
\begin{gather}
    Q_t^\pi(x,m,z)
    :=
    P(T^\pi>t \mid X=x,M=m,Z=z).
\end{gather}
Setting $\mathcal{Q}_t=Q_t^\pi$ yields the absolute and origin-referenced PS mappings $L_t^\pi$ and $L_t^{\pi,\circ}$. 
The common policy $\pi$ may be a fixed regime, such as assigning $A=a$ to everyone, or a conditional rule that assigns treatment with probability $\pi(a\mid\cdot)$.
The variables allowed to enter $\pi$ determine which treatment-mediated heterogeneity is retained or standardized away \citep{HernanTaubman2008, Young2014}.

Choosing an $M$-independent policy $\pi(a\mid x,z)$ standardizes modifier-specific treatment differences operating through $M\to A\to T$ at fixed $(x,z)$ and imposes a common relationship between the recorded measurement and treatment choice across modifier groups.
It therefore removes the influence of clinical behavior in which the same recorded value of $X$ is interpreted differently for treatment according to $M$, and defines prognostic equivalence when all groups are managed using the same function of recorded $X$ and $Z$.

If instead a fixed regime $A=a$ or a policy $\pi(a\mid z)$ independent of both $M$ and $X$ is chosen, treatment assignment is (conditionally) independent of $(M,X)$.
Such a policy standardizes prognostic differences arising through modifier- or measurement-dependent treatment selection, including the pathways $M\to A\to T$, $X\to A\to T$, and $M\to X\to A\to T$.
The resulting mapping compares prognosis under a common treatment regime or treatment distribution.
It does not, however, remove the prognostic effect of treatment itself or effect modification of treatment by $M$ or $X$.

Panel~(b) adds a measurement interpretation. 
The observed measurement is a modifier-dependent reading, $X=g_M(X^*)$, of a latent common-unit quantity $X^*$.
Under this model, modifier dependence in the observed $X$--prognosis relationship may arise from the measurement map $g_M$, the latent-scale prognostic relationship, or both.
This interpretation is natural for examples such as invasive-component tumor size and creatinine-based eGFR. 
In contrast, the childhood-sarcoma example follows the panel-(a) effect-modification interpretation, in which tumor size is not viewed as a modifier-dependent measurement of a latent common-unit quantity.
Section~\ref{sec:recovery} further discusses this measurement-based interpretation.

\subsection{Identification of the policy-standardized mapping}
\label{sec:identification}

To evaluate the PS mapping from observed data, $Q_t^\pi(x,m,z)$ must be identifiable. 
For a baseline point treatment, we assume the standard conditions for g-formula identification \citep{Robins1986, HernanRobins2020}: consistency of the treatment-specific counterfactual event times $T^a$, conditional exchangeability $T^a \ci A\mid X,M,Z$ for treatment values assigned positive probability under $\pi$ at the target $(x,m,z)$, treatment positivity at those values of $(X,M,Z)$, and independent censoring conditional on at least $(X,M,Z,A)$ with positive probability of remaining uncensored up to time $t$.
These conditions are stated formally as Assumptions~(A1)--(A4) in \ref{appA}.

\begin{proposition}[Identification of $Q_t^\pi$]
\label{prop:identification}
Under Assumptions (A1)--(A4), the PS prognostic quantity $Q_t^\pi(x,m,z)$ is identified from the observed-data distribution by
\begin{gather}
    Q_t^\pi(x,m,z)
    =
    \sum_a
    P(T>t \mid X=x,M=m,Z=z,A=a)\,\pi(a\mid x,m,z).
    \label{eq:gcomp-identification}
\end{gather}
Consequently, suppose that the g-formula quantities are identified at the source point and throughout the reference domain $\mathcal{X}_{m_{\rmref},z}$.
When the reference side $Q_t^\pi(\cdot,m_{\rmref},z)$ is strictly monotone in $x$ and $Q_t^\pi(x,m,z)$ belongs to its range over $\mathcal{X}_{m_{\rmref},z}$, the unique PS mapping $L_t^\pi(x,m,z;m_{\rmref})$ is identified from the observed-data distribution.
The analogous conclusion holds for $L_t^{\pi,\circ}$ when the anchor scores are identified and finite and the source excess score belongs to the range of the reference excess score.
\end{proposition}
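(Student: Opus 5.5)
The plan is the standard g-formula argument, followed by a transfer of identification to the mapping through uniqueness of the inverse. Define $T^\pi$ as $T^{A^\pi}$, where $A^\pi$ is drawn from $\pi(\cdot\mid x,m,z)$ independently of the potential outcomes given $(X,M,Z)$; this is the randomization built into the policy. We then condition on the assigned value:
\begin{gather*}
P(T^\pi>t\mid X=x,M=m,Z=z)=\sum_a P(T^a>t\mid X=x,M=m,Z=z)\,\pi(a\mid x,m,z).
\end{gather*}
Terms with $\pi(a\mid x,m,z)=0$ drop out. Each remaining term is rewritten in turn:
\begin{gather*}
P(T^a>t\mid x,m,z)=P(T^a>t\mid x,m,z,A=a)=P(T>t\mid x,m,z,A=a).
\end{gather*}
The first equality uses conditional exchangeability (A2). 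The second uses consistency (A1), since $T=T^a$ on $\{A=a\}$. Positivity (A3) guarantees that the conditioning event has positive probability, or positive density in the continuous-covariate sense. This yields \eqref{eq:gcomp-identification}.

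Next we check that $P(T>t\mid x,m,z,A=a)$ is a functional of the observed data $(Y,\Delta,X,M,Z,A)$ despite censoring. Under independent censoring given $(X,M,Z,A)$ (A4), together with a positive probability of remaining uncensored through $t$, the conditional hazard of $T$ equals the observable cause-specific hazard of $\{Y,\Delta=1\}$ on $[0,t]$. The conditional survival is then the product integral of that hazard, $\prod_{[0,t]}\{1-d\Lambda(s\mid x,m,z,a)\}$. This is the step that most needs care: the positivity of the at-risk probability $P(Y\ge s\mid\cdot)$ for all $s\le t$ is exactly what makes the hazard well defined on the whole interval. I would state it via the usual identity
\begin{gather*}
d\Lambda(s\mid\cdot)=P(Y\in ds,\Delta=1\mid\cdot)/P(Y\ge s\mid\cdot).
\end{gather*}

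For the mapping, suppose $Q_t^\pi(\cdot,m_{\rmref},z)$ is identified on $\mathcal{X}_{m_{\rmref},z}$ and $Q_t^\pi(x,m,z)$ is identified, by applying the preceding steps pointwise wherever the assumptions hold. Strict monotonicity of $\varphi$ means that the defining equation of Definition~\ref{def:abs-map} is equivalent to $Q_t^\pi(L,m_{\rmref},z)=Q_t^\pi(x,m,z)$. Strict monotonicity of the reference curve gives at most one solution, and the range condition gives existence. The solution $L$ is therefore a deterministic functional of identified quantities: any two observed-data-compatible laws yield the same curves and hence the same $L$.

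For $L_t^{\pi,\circ}$ the same argument applies to the excess scores $\Psi^\circ=\varphi(Q^\pi)-\varphi\{Q^\pi(0,\cdot)\}$. These are identified because the anchor values are identified and finite. Subtracting a constant preserves strict monotonicity, and the range condition supplies existence.
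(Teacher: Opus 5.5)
Your proposal is correct and follows the same route as the paper's proof. In both, you condition on the policy-drawn treatment and apply (A2), (A1) and (A3) to obtain the g-formula, identify each arm-specific survival under (A4), and then get existence from the range condition and uniqueness from strict monotonicity, with identified finite anchors for the origin-referenced case. The only difference is that you make the censoring step explicit through the product-integral of the observable crude hazard, whereas the paper simply asserts that each arm-specific survival is an identified functional of the right-censored data.
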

\noindent The proof is given in \ref{appA}. 

In practice, the prognostic function $Q_t^\pi$ must be consistently estimated, requiring either a correctly specified parametric model or a sufficiently flexible estimation approach.

\subsection{Measurement special case: a reference-unit conversion}
\label{sec:recovery}

To gain further insight into the mappings, we consider the structured measurement setting introduced in panel~(b) of Figure~\ref{fig:dag}, in which $X$ is a modifier-specific measurement of an underlying quantity on a common scale as $X=g_M(X^*)$.
Under suitable additional conditions, an appropriate mapping coincides with $L(x)=g_{m_{\rmref}}\!\left\{g_m^{-1}(x)\right\}$, which converts measurements to the corresponding reference-group scale and is therefore termed the \emph{reference-unit conversion}.

In this subsection, we allow for treatment-mediated heterogeneity and focus on the PS mapping under either a fixed treatment or a policy $\pi(a\mid z)$.
These interventions remove treatment-assignment pathways through both $M$ and $X$ and therefore provide the clearest basis for a reference-unit interpretation.
In the absence of treatment-mediated heterogeneity, the corresponding OP mapping may also admit the same interpretation.

For the chosen policy $\pi$, define the latent-scale prognostic score by
\begin{gather}
\Psi_t^{\pi,\ast}(x^*,m,z)
:=
\Psi_t^\pi\!\left(g_m(x^*),m,z\right),
\end{gather}
and its excess relative to the latent origin by
\begin{gather}
\Psi_t^{\pi,\ast,\circ}(x^*,m,z)
:=
\Psi_t^{\pi,\ast}(x^*,m,z)
-
\Psi_t^{\pi,\ast}(0,m,z).
\end{gather}
Then, we consider the following conditions.
\begin{enumerate}
\item[\textnormal{(G1)}] {\bf Invertibility.}
For each modifier value $m$, the measurement map $g_m(x^*)$ is strictly increasing in $x^*$, hence invertible.

\item[\textnormal{(G2)}] {\bf Origin preservation.}
$g_m(0)=0$ for every modifier value $m$, so $X=0$ corresponds to $X^*=0$.

\item[\textnormal{(G3)}] {\bf Latent excess-score invariance.}
There exists a function $\widetilde\Psi_t^{\pi,\circ}(x^*,z)$ that does not depend on $m$ such that
\begin{gather}
\Psi_t^{\pi,\ast,\circ}(x^*,m,z)
=
\widetilde\Psi_t^{\pi,\circ}(x^*,z)
\end{gather}
for every $(x^*,m,z)$ under consideration.

\item[\textnormal{(G4)}] {\bf No residual modifier effect.}
$T^\pi \ci M \mid X^*,Z$. 
\end{enumerate}

\begin{proposition}[Reference-unit conversion]
\label{prop:units}
Under the measurement model $X=g_M(X^*)$ and the maintained prognostic monotonicity:
\begin{enumerate}
\item[\textnormal{(i)}]
Under (G1)--(G3), the origin-referenced PS mapping is the reference-unit conversion
\begin{gather}
L_t^{\pi,\circ}(x,m,z;m_{\rmref})
=
g_{m_{\rmref}}\!\left\{g_m^{-1}(x)\right\}.
\end{gather}
\item[\textnormal{(ii)}]
Under (G1) and (G4), the absolute PS mapping equals the same conversion,
\[
L_t^\pi(x,m,z;m_{\rmref})
=
g_{m_{\rmref}}\!\left\{g_m^{-1}(x)\right\}.
\]
\end{enumerate}
\end{proposition}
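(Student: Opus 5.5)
The approach is to exhibit a candidate value, show that it satisfies the defining equation of the relevant mapping, and then invoke the uniqueness guaranteed by prognostic monotonicity (Definitions~\ref{def:abs-map} and~\ref{def:oref-map}). In both parts the candidate is $L^\dagger:=g_{m_{\rmref}}\{g_m^{-1}(x)\}$. Put $x^*:=g_m^{-1}(x)$, which is well defined by (G1). Then $x=g_m(x^*)$ and $L^\dagger=g_{m_{\rmref}}(x^*)$. It remains to check that $L^\dagger$ equates the appropriate scores. I will also assume that $L^\dagger$ lies in the reference domain $\mathcal X_{m_{\rmref},z}$; this is implicit in $x$ belonging to the supported domain. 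The same fact yields the supported-domain membership needed to apply the definitions.

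For part (i), I will rewrite the excess score in latent coordinates. By (G2), $g_m(0)=0$ for every $m$, so
\[
\Psi_t^{\pi,\circ}(g_m(x^*),m,z)=\Psi_t^\pi(g_m(x^*),m,z)-\Psi_t^\pi(0,m,z)=\Psi_t^{\pi,\ast}(x^*,m,z)-\Psi_t^{\pi,\ast}(0,m,z),
\]
and the right-hand side equals $\Psi_t^{\pi,\ast,\circ}(x^*,m,z)$. This is exactly where origin preservation is needed: the observed anchor $X=0$ must coincide with the latent anchor. Applying (G3) gives $\Psi_t^{\pi,\circ}(x,m,z)=\widetilde\Psi_t^{\pi,\circ}(x^*,z)$. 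The same computation at $m_{\rmref}$ gives $\Psi_t^{\pi,\circ}(L^\dagger,m_{\rmref},z)=\widetilde\Psi_t^{\pi,\circ}(x^*,z)$. The two excess scores are therefore equal, so $x$ lies in the supported domain and $L^\dagger$ solves the defining equation. Uniqueness then identifies $L^\dagger$ with $L_t^{\pi,\circ}$. The anchor scores need to be finite, as Definition~\ref{def:oref-map} requires; I will take this as part of the hypotheses.

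For part (ii), I will work with $Q_t^\pi$ directly rather than with the score, since $\varphi$ is strictly monotone and equality of $Q$ is equivalent to equality of $\Psi$. By (G1), the event $\{X=x,M=m,Z=z\}$ coincides with $\{X^*=x^*,M=m,Z=z\}$. Hence
\[
Q_t^\pi(x,m,z)=P(T^\pi>t\mid X^*=x^*,M=m,Z=z).
\]
By (G4) this equals $P(T^\pi>t\mid X^*=x^*,Z=z)$, which does not depend on $m$. Repeating the argument at $(L^\dagger,m_{\rmref},z)$ gives the same value, since $g_{m_{\rmref}}^{-1}(L^\dagger)=x^*$. Therefore $Q_t^\pi(L^\dagger,m_{\rmref},z)=Q_t^\pi(x,m,z)$, and uniqueness in Definition~\ref{def:abs-map} gives the claim.

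The main obstacle is the conditioning step in (ii). Conditioning on the continuous variables $X$ versus $X^*$ has to be made rigorous. I will handle it by noting that, given $M=m$, the map $x^*\mapsto g_m(x^*)$ is a bijection, so the $\sigma$-fields generated by $(X,M,Z)$ and $(X^*,M,Z)$ coincide. The regular conditional distributions then agree up to null sets, and I will take versions that are evaluated pointwise, as the paper implicitly does. A secondary point is whether $T^\pi$ is well defined for a policy depending only on $z$, so that (G4) is meaningful. I will treat this as given by the setup, and where needed relate it to the g-formula of Proposition~\ref{prop:identification}.
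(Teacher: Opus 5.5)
Your proposal is correct and follows the paper's proof essentially step for step. Part (i) is the same latent-coordinate rewriting of the excess score via (G2), followed by (G3) and uniqueness; part (ii) uses (G4) in the same way to make the score depend on $(x^*,z)$ alone, except that the paper inverts that latent score directly, whereas you verify the candidate $g_{m_{\rmref}}\{g_m^{-1}(x)\}$ and invoke uniqueness on the reference side. You also make explicit, through the identity $\{X=x,M=m,Z=z\}=\{X^*=g_m^{-1}(x),M=m,Z=z\}$, a conditioning step the paper leaves implicit, and your assumptions that $g_{m_{\rmref}}\{g_m^{-1}(x)\}\in\mathcal X_{m_{\rmref},z}$ and that the anchor scores are finite match the conditions the paper imposes at the start of its proof.
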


\noindent The proof is given in \ref{appA}.

Condition (G2) connects the observed anchor $X=0$ used by the origin-referenced mapping to the latent origin $X^*=0$ used in (G3).
Under (G3), modifier-specific prognostic levels are allowed, but the change in prognosis from the latent origin is common across modifier groups.
Condition (G4) is stronger: it requires the full counterfactual outcome distribution at fixed $(X^*,Z)$ to be invariant across modifier groups and therefore implies (G3).
Under (G4), the absolute PS mapping also reduces to the reference-unit conversion and does not require (G2), because it does not use the observed origin as an anchor.
These interpretations require substantive latent measurement assumptions and cannot be established from the observed-data distribution alone.

% ===========================================================================
\section{Estimation under Cox proportional hazards model}
\label{sec:estimation}
% ===========================================================================

We describe plug-in estimation of the prognosis-equivalent mappings under Cox proportional hazards model \citep{Cox1972, AndersenGill1982}. 
The observed data are $(Y,\Delta,X,M,Z)$ for the observed-practice mapping and $(Y,\Delta,X,M,Z,A)$ for the policy-standardized mapping. 
Let $\eta(\cdot)$ be a prognostic score that does not depend on time $t$, and write the conditional hazard function in the Cox form $\lambda(t\mid \cdot)=\lambda_0(t)\exp\{\eta(\cdot)\}$, where $\lambda_0(t)$ is the baseline hazard function. 
Then the conditional survival function is given by $S(t\mid \cdot) = \exp\left[-\Lambda_0(t)\exp\{\eta(\cdot)\}\right]$, where $\Lambda_0(t)$ is the baseline cumulative hazard function.

The Cox model is used here as a working model for the prognostic score. When correctly specified, the plug-in estimators below are consistent for the corresponding mappings.

\subsection{Plug-in mapping estimators}
\label{sec:plugin-estimators}

For the observed-practice mapping, we fit a Cox model to $(Y,\Delta,X,M,Z)$ and obtain the fitted score $\hat\eta(x,m,z)$. 
For the OP Cox fit, censoring is assumed independent of $T$ conditional on $(X,M,Z)$.
Since the same baseline cumulative hazard $\Lambda_0(t)$ appears on both sides of the prognostic-equivalence equation, equating the OP survival is equivalent to equating the Cox score:
\begin{gather}
\eta(L,m_{\rmref},z)=\eta(x,m,z).
\label{eq:eta-equate}
\end{gather}
Thus, under a single Cox score, the OP mapping does not depend on $t$. 
Its absolute plug-in estimator $\hat L^\obs(x,m,z;m_{\rmref})$ solves $\hat\eta(L,m_{\rmref},z)=\hat\eta(x,m,z)$, and the origin-referenced estimator solves $\hat\eta(L,m_{\rmref},z)-\hat\eta(0,m_{\rmref},z) = \hat\eta(x,m,z)-\hat\eta(0,m,z)$.

For the PS mapping, we fit a Cox model that includes treatment $A$, to $(Y,\Delta,X,M,Z,A)$, and write the fitted score as $\hat\eta_Y(x,m,z,a)$. 
The treatment-specific survival is estimated by 
\begin{gather}
\widehat S(t\mid x,m,z,a)
=
\exp\left[-\hat\Lambda_0(t)\exp\{\hat\eta_Y(x,m,z,a)\}\right],
\end{gather}
and the common-policy prognostic quantity is estimated by
\begin{gather}
\widehat Q_t^\pi(x,m,z) = \sum_a \widehat S(t\mid x,m,z,a)\,\pi(a\mid x,m,z).
\end{gather}

Here, $\pi$ is a prespecified target policy that forms part of the definition of the PS estimand and is treated as fixed throughout estimation.
The absolute PS estimator $\hat L_t^\pi(x,m,z;m_{\rmref})$ solves $\widehat Q_t^\pi(L,m_{\rmref},z) = \widehat Q_t^\pi(x,m,z)$, whereas the origin-referenced estimator $\hat L_t^{\pi, \circ}$ solves $\widehat\Psi_t^{\pi,\circ}(L,m_{\rmref},z) = \widehat\Psi_t^{\pi,\circ}(x,m,z)$, with
\begin{gather}
\widehat\Psi_t^{\pi,\circ}(x,m,z)
:=
\log\{-\log\widehat Q_t^\pi(x,m,z)\}
-
\log\{-\log\widehat Q_t^\pi(0,m,z)\}.
\end{gather}
Unlike the OP mapping, the PS mapping generally depends on $t$, because $\widehat Q_t^\pi$ is a mixture of treatment-specific survival functions and need not reduce to a single Cox score. 
If $\pi=\delta_{a_0}$ assigns the same treatment $a_0$ to everyone, the mixture collapses to a single fitted Cox survival curve, and the mapping again reduces to equating $\hat\eta_Y(\cdot,\cdot,\cdot,a_0)$.

\subsection{Score specification and numerical inversion}
\label{sec:monotone-inversion}

The Cox score may be specified either parametrically or flexibly. 
A useful parametric illustration is the linear interaction score
\begin{gather}
\eta(x,m,z)
=
\beta_X x+\beta_M m+\beta_{XM}mx+\gamma^\top z.
\label{eq:linear-interaction-score}
\end{gather}
In this case, the OP absolute mapping has the closed form
\begin{gather}
L^\obs(x,m,z;m_{\rmref})
=
\frac{
\beta_M(m-m_{\rmref})+(\beta_X+\beta_{XM}m)x
}{
\beta_X+\beta_{XM}m_{\rmref}
},
\end{gather}
whereas the origin-referenced mapping is
\begin{gather}
L^{\obs,\circ}(x,m,z;m_{\rmref})
=
\frac{\beta_X+\beta_{XM}m}
{\beta_X+\beta_{XM}m_{\rmref}}\,x.
\end{gather}
Thus the absolute mapping contains both the level shift and the modifier-specific slope change, while the origin-referenced mapping removes the level shift and retains only the excess prognostic contribution of $X$ relative to the origin.
Order preservation requires the reference and target slopes, $\beta_X+\beta_{XM}m_{\rmref}$ and $\beta_X+\beta_{XM}m$, to have the same sign.

In real data, the score need not be linear in $X$ or in its interaction with $M$. 
We therefore allow $\eta$ and $\eta_Y$ to be flexible Cox scores, estimated by maximizing the Cox partial likelihood. 
The mappings are then obtained from the same equating equations as above, using numerical inversion over the supported reference range.

The mapping is evaluated only when the target value lies within the supported range of the reference curve and the reference-side curve is prognostically monotone. 
If the underlying curve is genuinely non-monotone, a global mapping is not well defined. When the underlying curve is monotone, flexible fitted curves may still show minor finite-sample departures from monotonicity.
We therefore check monotonicity on the evaluation grid and, when needed, apply grid-based monotone rearrangement before inversion \citep{Chernozhukov2009-rr}.

For origin-referenced mappings with flexible scores, the anchor itself must also be well supported. 
If the usual anchor $X=0$ lies outside the observed support, its score is not nonparametrically identified and becomes a model-dependent extrapolation; near the support boundary it may also be unstable.
We therefore do not recommend flexible origin-referenced mappings in such settings, unless a clinically meaningful interior anchor is available.

\subsection{Consistency of plug-in inversion}
\label{sec:plugin-consistency}

We state sufficient conditions for consistency of the plug-in inverse estimators under correct specification of the Cox models.
Let $\mathcal{D}$ be a compact set containing all source, mapped reference, and, where applicable, anchor arguments required for inversion.

For the absolute OP mapping, suppose
\begin{gather}
\sup_{(x,m,z)\in\mathcal{D}}
\left|
\hat\eta(x,m,z)-\eta(x,m,z)
\right|
\overset{p}{\longrightarrow}0,
\end{gather}
and suppose that the reference curve $\eta(\cdot,m_{\rmref},z)$ is continuous and strictly monotone on the relevant compact interval.
Then
\begin{gather}
\sup_{(x,m,z)\in\mathcal{D}}
\left|
\hat L^\obs(x,m,z;m_{\rmref})
-
L^\obs(x,m,z;m_{\rmref})
\right|
\overset{p}{\longrightarrow}0.
\end{gather}

For the absolute PS mapping, fix $t$ and suppose
\begin{gather}
\sup_{(x,m,z)\in\mathcal{D}}
\left|
\widehat Q_t^\pi(x,m,z)-Q_t^\pi(x,m,z)
\right|
\overset{p}{\longrightarrow}0,
\end{gather}
and suppose that the reference curve $Q_t^\pi(\cdot,m_{\rmref},z)$ is continuous and strictly monotone on the relevant compact interval.
Then
\begin{gather}
\sup_{(x,m,z)\in\mathcal{D}}
\left|
\hat L_t^\pi(x,m,z;m_{\rmref})
-
L_t^\pi(x,m,z;m_{\rmref})
\right|
\overset{p}{\longrightarrow}0.
\end{gather}

For the origin-referenced mappings, additionally suppose that the anchor $x=0$ belongs to the relevant source and reference domains and that the corresponding anchor scores are finite.
For the PS mapping, also assume that $Q_t^\pi$ is bounded away from $0$ and $1$ over the relevant source, reference, and anchor ranges.
Uniform consistency of $\hat\eta$ and $\widehat Q_t^\pi$ then implies uniform consistency of the corresponding origin-centered scores, on the Cox-score and complementary log-log scales, respectively.
Under the same assumptions, the analogous consistency results follow for both $\hat L^{\obs,\circ}$ and $\hat L_t^{\pi,\circ}$.
Monotone rearrangement in the maintained direction preserves these consistency conclusions.
Details are given in \ref{appA}.

% ===========================================================================
\section{Inference: confidence bands}
\label{sec:inference}
% ===========================================================================

We construct pointwise intervals and simultaneous confidence bands for the estimators in Section~\ref{sec:estimation}. 
The nonparametric bootstrap \citep{Efron1994-xd} is used for flexible-score and dynamic-policy estimators, whereas the linear OP and linear fixed-treatment PS estimators use the analytic delta method described in \ref{appB}.

In the nonparametric bootstrap, we resample individuals with replacement and, on each bootstrap sample, re-run all estimation stages. 
This gives bootstrap curves $\hat{L}^{(b)}$ on the evaluation grid for $b=1,\ldots,B$.

Pointwise confidence intervals use the percentile method.
Writing $\alpha$ for the noncoverage probability, at a fixed $x_0$ the interval is the $\alpha/2$ and $1-\alpha/2$ quantiles of $\hat{L}^{(1)}(x_0),\ldots,\hat{L}^{(B)}(x_0)$.

For simultaneous inference over $x_1<\cdots<x_K$, let $Q_p(\cdot)$ denote the pointwise $p$-quantile of the bootstrap replicates. 
For a common tail probability $\zeta$, define
\begin{gather}
    C_k
    =
    \Bigl[\, Q_{\zeta}\bigl(\hat{L}^{(b)}(x_k)\bigr),\
              Q_{1-\zeta}\bigl(\hat{L}^{(b)}(x_k)\bigr) \,\Bigr],
    \qquad k = 1,\dots,K.
    \label{eq:sim-band}
\end{gather}
We calibrate $\zeta$ as the largest value for which at least $1-\alpha$ of the bootstrap curves fall inside the percentile rectangle $\prod_{k=1}^K C_k$, giving the narrowest percentile band with simultaneous bootstrap coverage \citep{MontielOleaPlagborgMoller2019}.

When the prognostic score $\eta$ based on a linear interaction model is directly equated, analytic confidence intervals based on the asymptotic normality of the regression parameter estimates are available \citep{Scheffe1953}.
Details on the analytic confidence intervals and related tests are provided in \ref{appB}.

% ===========================================================================
\section{Simulation studies}
\label{sec:simulation}
% ===========================================================================

\subsection{Design and scenarios}
\label{sec:sim-scenarios}

We evaluate the finite-sample performance of the estimation and inference methods described in Sections~\ref{sec:estimation}--\ref{sec:inference} using Monte Carlo simulation. 
The performance measures are the bias and RMSE of the mapping estimators, the pointwise and simultaneous $95\%$ coverage probabilities of the confidence bands, and the band width. 
The full data-generating process, including the modifier, covariate, and latent-measurement distributions, the deterministic measurement map $X=g_M(X^*)$, the Weibull baseline, the treatment and censoring models, and the parameter values, is provided in \ref{appC}. 
In the main text, we report only the features that distinguish the simulation scenarios.
All mappings are evaluated at $Z=0$ on the grid $x\in\{0.5,1.0,1.5,2.0,2.5\}$, and the modifier-dependent stretch uses $\delta=0.5$, which fixes the reference-unit slope at $1/(1+\delta)=2/3$.
Regular performance summaries are restricted to grid points whose target score lies in the interior of the reference score range; we call these points \emph{interior-supported} below.
This is stricter than mapping existence at an attainable boundary.
A prespecified operational fallback for lower-range failures is described below.

Each scenario has a \emph{without-treatment} version (-1) that evaluates the OP mapping and a \emph{with-treatment} version (-2) that evaluates the OP mapping under the natural treatment mechanism together with the PS mapping under the fixed regime assigning $A=1$ to everyone (PS-stc).
The with-treatment OP analysis does not standardize $A$ and therefore retains treatment-mediated differences induced by confounded treatment assignment.

We report the following three scenario families.

\emph{S1a--S1b (binary $M$; G1--G4 hold)} assess whether the mappings are correctly identified under two measurement maps.
In \emph{S1a} (the \emph{linear} map), $L_{\mathrm{true}}(x)=x/(1+\delta)=2x/3$ for both the absolute and origin-referenced mappings (Proposition~\ref{prop:units}).
This scenario evaluates the validity of the confidence bands, as well as the adequacy of confounding adjustment, in a setting where the target mapping is correctly specified. 
In \emph{S1b} (the \emph{nonlinear} map), the measurement map is a smooth concave hinge: $L_{\mathrm{true}}$ rises with near-unit slope near the origin and flattens to slope $1/10$ beyond a knot, passing through the origin and lying below $y=x$. 
In this scenario, the linear model is misspecified.

\emph{S2a--S2c (binary $M$; single violations)} introduce one violation at a time relative to the S1a map: a direct level effect of $M$ (violating G4), an $M\times X^*$ interaction (violating G3), or an origin shift (violating G2).
At supported points, the estimators still cover their identified estimands.
The key difference is how each estimand departs from the reference-unit conversion, and this departure occurs asymmetrically for the absolute and origin-referenced mappings (Proposition~\ref{prop:recovery}).
For the without-treatment OP and static-policy PS targets, under S2a the unconstrained absolute solution is offset by $\gamma_M/\beta_X^{*}\approx-0.71$, whereas the origin-referenced mapping recovers the reference-unit conversion.
Under S2c, the roles are reversed: the origin-referenced mapping is offset by $c_1/(1+\delta)\approx0.20$.
Under S2b, both mappings diverge from the reference-unit conversion.

\emph{S3a--S3b (continuous $M$; G1--G4 hold)} use a generalized reference value $m_{\rmref}$.
In \emph{S3a}, the prognostic score is linear in $X$ and in the $X\times M$ interaction, with no main effect of $M$: $\eta(x,m,z)=\beta_X^{*}(1-\tfrac{m}{3})x+\beta_Z z=\beta_X^{*}x-\tfrac{\beta_X^{*}}{3}\,mx+\beta_Z z$.
The additive covariate term $\beta_Z z$ ($Z\sim\mathcal{N}(0,1)$) is common to the source and reference curves and therefore cancels in the mapping.
The true mapping is $L_{\mathrm{true}}(x;m\to m_{\rmref})=x\,\kappa(m)/\kappa(m_{\rmref})$, where $\kappa(m)=1-m/3$.
Thus, the linear model with an $X\times M$ interaction is correctly specified, and we evaluate the generalized reference values $m_{\rmref}\in\{0,0.5,1\}$.
In \emph{S3b}, the prognostic score is linear in $m$ but has a smoothed hinge in $x$: $\eta(x,m,z)=\beta_X^{*}\{(1-m)x+m\,h(x)\}+\beta_Z z$, where $h$ is the S1b hinge.
Thus, the source map at $m=1$ equals that hinge.
We report results for $m_{\rmref}=0$, because mappings to non-origin reference values require unstable inversion of a nearly flat reference curve.

\subsection{Estimators, inference, and metrics}
\label{sec:sim-estimators}

Two learners are used to estimate the prognostic score. 
\textbf{LL} is a linear learner that includes $X$, $M$, and their $X\times M$ interaction. 
For the binary contrast $m=1\to m_{\rmref}=0$ under OP and the static policy, LL yields the closed-form estimators $\hat\phi=(\hat\beta_X+\hat\beta_{XM})/\hat\beta_X$ and $\hat\alpha=\hat\beta_M/\hat\beta_X$, together with the delta-method analytic band.
\textbf{SL} uses a natural-spline basis in $X$ (\texttt{splines::ns}, $\mathrm{df}=3$), together with its interaction with $M$.
For SL, we first apply monotone rearrangement to the estimated prognostic-score curve and then invert the curve numerically \citep{DurrlemanSimon1989, Gray1992}. 
Inference for SL is based on the bootstrap.

Policy standardization is implemented using the g-computation procedure in Section~\ref{sec:plugin-estimators}.
PS-stc is the policy-standardized mapping under the static regime $\pi=\delta_{1}$ that sets $A=1$ for every individual.
The g-computation then reduces to the fitted Cox model evaluated at $A=1$.
For inference, LL under OP and PS-stc uses the analytic delta-method bands, consisting of pointwise Wald intervals and the Gaussian plug-in sup-$t$ simultaneous band detailed in \ref{appB}.
These settings are evaluated at $n\in\{500,1000,2000\}$.
SL under OP or PS-stc uses the nonparametric bootstrap.
For these settings, pointwise intervals are percentile intervals, and simultaneous bands are calibrated percentile bands with common tail probability $\zeta$, as described in Section~\ref{sec:inference}.
They are evaluated at $n\in\{1000,2000\}$.

At each interior-supported grid point, we summarize the results over $R$ replications by reporting bias, RMSE, empirical SE, mean reported SE, pointwise coverage, and band width.
Simultaneous coverage is the proportion of replications in which the simultaneous band contains the true mapping jointly over the interior-supported grid points.
For LL, we additionally report the bias, RMSE, and coverage of the scalar parameters $\phi$ and $\alpha$, together with the rejection rate for $H_0:\phi=1$.

For S2a--S2c, we also report the mean signed difference between the estimated mapping and the reference-unit conversion $L^{\rm rec}$ over the interior-supported points.
A separate full-grid summary applies the prespecified $X=0$ fallback when the target score falls below the reference score range.

\subsection{Results}
\label{sec:sim-results}

\paragraph{S1a--S1b (correct identification)}
Table~\ref{tab:sim-S1} reports grid-averaged bias, RMSE, and coverage for the absolute prognosis-equivalent mapping under the linear map (S1a) and nonlinear hinge map (S1b), with $m{=}1\to m_{\rm ref}{=}0$.

Under \emph{S1a} (linear), both the absolute and origin-referenced LL mappings are essentially unbiased, and their RMSE decreases at the expected $\sqrt n$ rate.
The analytic delta-method bands attain nominal coverage.
The origin-referenced band is anchored at the origin and widens in proportion to $x$, with all uncertainty carried by the slope $\hat\phi$ (\ref{appB}).
The scalar parameter $\phi$ is also unbiased, with approximately $95\%$ coverage, and the power of the test of $H_0:\phi=1$ increases with $n$, from $0.35$ at $n{=}500$ to $0.79$ at $n{=}2000$.

When treatment assignment is confounded, the absolute LL-OP mapping differs from the reference-unit conversion by about $-0.20$, and its bands cover that conversion only $60\%$ pointwise and $46\%$ simultaneously at $n=2000$.
These are discrepancies from the reference-unit conversion, not estimation bias or coverage for the OP estimand itself.
The corresponding origin-referenced LL-OP mapping removes this level discrepancy and remains close to the reference-unit conversion (Web~Table~\ref{tab:sim-comp-treat}).
In contrast, the absolute PS-stc mapping targets the fixed-treatment quantity and has approximately nominal coverage of its corresponding truth.

Under \emph{S1b} (nonlinear hinge), the linear model is misspecified.
For the absolute LL-OP mapping, pointwise coverage remains close to nominal, but simultaneous coverage decreases with $n$, from $92\%$ to $88\%$ and then to $82\%$, as the irreducible approximation bias of about $+0.05$ becomes detectable.
For the absolute mapping, the spline learner SL maintains simultaneous coverage between $92\%$ and $94\%$, at the cost of higher variance.
With confounded treatment assignment, the absolute SL-PS-stc mapping has simultaneous coverage of $90\%$ for its corresponding truth, compared with $58\%$ coverage of the reference-unit conversion by the absolute SL-OP bands.

\begin{table}[h]
\centering
\caption{S1a--S1b (binary $M$; G1--G4 hold for the without-treatment OP and PS-stc targets): grid-averaged bias, RMSE, and coverage for the absolute prognosis-equivalent mapping ($m{=}1\to m_{\rm ref}{=}0$).
Cov$_{\rm pt}$/Cov$_{\rm sim}$ are pointwise/simultaneous coverage (\%).
For the with-treatment absolute OP mapping, the reported bias, RMSE, and coverage are calculated relative to the reference-unit conversion and are not performance measures for the OP estimand.}
% \caption{S1a--S1b (binary $M$, G1--G4 hold): bias, RMSE, and coverage of the mapping, grid-averaged ($m{=}1\to m_{\rmref}{=}0$).
% Cov$_{\rm pt}$/Cov$_{\rm sim}$ are pointwise/simultaneous coverage (\%).
% For with-treatment OP, the reported bias, RMSE, and coverage are calculated relative to the reference-unit conversion and are not performance measures for the OP estimand.}
\label{tab:sim-S1}
\vspace{2mm}
\begin{tabular}{ll r r r r}
\hline
Estimator & $n$ & Bias & RMSE & Cov$_{\rm pt}$ & Cov$_{\rm sim}$ \\
\hline
\multicolumn{6}{l}{\textit{S1a (linear map), without treatment}}\\
\quad LL-OP & 500  & $-0.007$ & 0.249 & 96 & 96 \\
\quad LL-OP & 1000 & $-0.001$ & 0.164 & 95 & 95 \\
\quad LL-OP & 2000 & $+0.003$ & 0.110 & 96 & 95 \\
\multicolumn{6}{l}{\textit{S1a (linear map), with treatment ($n=2000$)}}\\
\quad LL-OP (confounded) & 2000 & $-0.204$ & 0.240 & 60 & 46 \\
\quad LL-PS-stc          & 2000 & $-0.001$ & 0.140 & 96 & 96 \\
\multicolumn{6}{l}{\textit{S1b (nonlinear hinge), without treatment}}\\
\quad LL-OP & 500  & $+0.064$ & 0.310 & 94 & 92 \\
\quad LL-OP & 1000 & $+0.056$ & 0.219 & 93 & 88 \\
\quad LL-OP & 2000 & $+0.049$ & 0.165 & 90 & 82 \\
\quad SL-OP & 1000 & $-0.082$ & 0.394 & 96 & 94 \\
\quad SL-OP & 2000 & $-0.085$ & 0.307 & 94 & 92 \\
\multicolumn{6}{l}{\textit{S1b (nonlinear hinge), with treatment ($n=2000$)}}\\
\quad LL-OP (confounded) & 2000 & $-0.154$ & 0.240 & 67 & 39 \\
\quad LL-PS-stc          & 2000 & $+0.041$ & 0.197 & 91 & 86 \\
\quad SL-OP (confounded) & 2000 & $-0.296$ & 0.456 & 75 & 58 \\
\quad SL-PS-stc          & 2000 & $-0.101$ & 0.384 & 94 & 90 \\
\hline
\end{tabular}
\end{table}

\paragraph{S2a--S2c (identification asymmetry)}
In all S2 scenarios, the estimator covers its identified estimand at approximately the nominal $95\%$ level (Table~\ref{tab:sim-S2}). 
The main difference across scenarios is how the identified estimand departs from the reference-unit conversion, and this departure is asymmetric between the absolute and origin-referenced mappings. 
Under S2a (G4), only the absolute mapping is offset, by $\gamma_M/\beta_X^{*}\approx-0.71$, whereas the origin-referenced mapping recovers the reference-unit conversion.
Under S2c (G2), the roles are reversed, and the origin-referenced mapping is offset by $c_1/(1+\delta)\approx0.20$. Under S2b (G3), both mappings diverge from the reference-unit conversion, and the gap cannot be removed even with the flexible learner. 
This reflects an identification limit and confirms Proposition~\ref{prop:units}.

Across the S2 scenarios, pointwise coverage on the interior-supported inversion range is approximately nominal (Table~\ref{tab:sim-S2}).
In S2a, the absolute mapping is interior-supported at $x\in\{1.5,2.0,2.5\}$ only, and its supported-point difference is close to the theoretical offset $\gamma_M/\beta_X^{*}\approx-0.71$.
The remaining two points are excluded from regular performance summaries; the full-grid difference includes them using the $X=0$ fallback.
The S2a origin-referenced mapping is interior-supported at all five points and recovers the reference-unit conversion.
Under S2c, the roles are reversed and the origin-referenced mapping has the constant recovery offset $c_1/(1+\delta)\approx0.20$.
Under S2b, both mappings are interior-supported but have an $x$-dependent difference from the reference-unit conversion arising from their slope difference.
These patterns demonstrate target-asymmetric failure of reference-unit recovery.

\begin{table}[h]
\centering
\small
\setlength{\tabcolsep}{2.5pt}
\caption{S2: mean difference from the reference-unit conversion and pointwise coverage ($n=2000$, LL--OP without treatment, grid-averaged). 
Supp gives the number of regular interior-supported inversion points over the five-point evaluation grid. 
The mean signed difference from $L^{\rm rec}$ and Cov$_{\rm pt}$ are averaged only over the interior-supported points; the stated theoretical offsets have the same scope. 
The full-grid mean signed difference instead uses the full requested grid under the prespecified rule that clamps to $X=0$ when the target score is below the reference score range.}
\label{tab:sim-S2}
\vspace{2mm}
\begin{tabular}{llccccc}
\hline
\shortstack{Scenario\\(departure)} & Target & Supp & \multicolumn{2}{c}{\shortstack{Mean difference\\from $L^{\rm rec}$}} & \shortstack{Theoretical\\offset} & \shortstack{Cov$_{\rm pt}$\\(\%)} \\
 & & & \shortstack{Supported\\points} & \shortstack{Full grid\\with fallback} & & \\
\hline
S2a (G4) & abs & 3/5 & $-0.712$ & $-0.619$ & $\gamma_M/\beta_X^{*}=-0.71$ & 94 \\
 & origin & 5/5 & $+0.012$ & $+0.012$ & $0$ & 95 \\
S2b (G3) & abs & 5/5 & $-0.286$ & $-0.286$ & \shortstack{$-0.190x$\\(grid mean $-0.286$)} & 95 \\
 & origin & 5/5 & $-0.280$ & $-0.280$ & \shortstack{$-0.190x$\\(grid mean $-0.286$)} & 95 \\
S2c (G2) & abs & 5/5 & $+0.002$ & $+0.002$ & $0$ & 95 \\
 & origin & 5/5 & $+0.209$ & $+0.209$ & $c_1/(1+\delta)=0.20$ & 96 \\
\hline
\end{tabular}
\end{table}

\paragraph{S3a--S3b (continuous modifier)}
With a continuous modifier (Table~\ref{tab:sim-S3}), S3a is recovered well by LL, which includes the required $X\times M$ interaction. 
The bias is close to zero, and coverage is approximately $95\%$ for every $m_{\rmref}$. RMSE is lowest at the interior reference value $m_{\rmref}=0.5$. 
SL has approximately nominal pointwise coverage, but its simultaneous coverage is lower at $m_{\rmref}=1$ ($83\%$), and it is less efficient than LL. 
In the with-treatment rows, the OP entries are comparisons with the reference-unit conversion.
At the origin reference, the OP mappings differ from that conversion and their bands less often contain it, whereas PS-stc has simultaneous coverage of $94\%$ for LL and $96\%$ for SL relative to its corresponding truth.

Under the hinge scenario S3b, the linear model is again misspecified in $x$. 
At the origin reference, SL has lower bias and better simultaneous coverage than LL, with simultaneous coverage of $95\%$ versus $91\%$ at $n{=}2000$. 
The OP bands likewise less often contain the reference-unit conversion, whereas PS-stc has simultaneous coverage of $93\%$ for LL and $96\%$ for SL relative to its corresponding truth.

\begin{table}[h]
\centering
\caption{S3 (continuous $M$, absolute mapping): bias, RMSE, and coverage, grid-averaged.
S3a (linear, $X\times M$ interaction) is shown across the generalized reference $m_{\rmref}$ at $n=2000$; S3b (hinge in $x$) at the origin reference $m_{\rmref}=0$.
Cov$_{\rm pt}$/Cov$_{\rm sim}$ are pointwise/simultaneous coverage (\%).
For with-treatment OP, the reported bias, RMSE, and coverage are calculated relative to the reference-unit conversion.}
\label{tab:sim-S3}
\vspace{2mm}
\begin{tabular}{ll r r r r}
\hline
Estimator & $m_{\rmref}$ & Bias & RMSE & Cov$_{\rm pt}$ & Cov$_{\rm sim}$ \\
\hline
\multicolumn{6}{l}{\textit{S3a (linear, $X\times M$ interaction), $n=2000$, without treatment}}\\
\quad LL-OP & 0   & $+0.000$ & 0.106 & 96 & 96 \\
\quad SL-OP & 0   & $+0.009$ & 0.288 & 96 & 94 \\
\quad LL-OP & 0.5 & $+0.000$ & 0.047 & 97 & 96 \\
\quad SL-OP & 0.5 & $-0.002$ & 0.100 & 97 & 91 \\
\quad LL-OP & 1   & $+0.039$ & 0.281 & 94 & 94 \\
\quad SL-OP & 1   & $+0.133$ & 0.635 & 94 & 83 \\
\multicolumn{6}{l}{\textit{S3a (linear, $X\times M$ interaction), with treatment ($m_{\rmref}=0$, $n=2000$)}}\\
\quad LL-OP (confounded) & 0 & $-0.107$ & 0.165 & 84 & 79 \\
\quad LL-PS-stc          & 0 & $-0.004$ & 0.140 & 95 & 94 \\
\quad SL-OP (confounded) & 0 & $-0.109$ & 0.326 & 92 & 90 \\
\quad SL-PS-stc          & 0 & $+0.008$ & 0.353 & 96 & 96 \\
\hline
Estimator & $n$ & Bias & RMSE & Cov$_{\rm pt}$ & Cov$_{\rm sim}$ \\
\hline
\multicolumn{6}{l}{\textit{S3b (hinge in $x$), $m_{\rmref}=0$, without treatment}}\\
\quad LL-OP & 1000 & $+0.027$ & 0.165 & 96 & 92 \\
\quad LL-OP & 2000 & $+0.029$ & 0.115 & 94 & 91 \\
\quad SL-OP & 1000 & $+0.000$ & 0.350 & 97 & 96 \\
\quad SL-OP & 2000 & $-0.016$ & 0.234 & 96 & 95 \\
\multicolumn{6}{l}{\textit{S3b (hinge in $x$), $m_{\rmref}=0$, with treatment ($n=2000$)}}\\
\quad LL-OP (confounded) & 2000 & $-0.073$ & 0.150 & 83 & 73 \\
\quad LL-PS-stc          & 2000 & $+0.024$ & 0.145 & 95 & 93 \\
\quad SL-OP (confounded) & 2000 & $-0.128$ & 0.307 & 93 & 87 \\
\quad SL-PS-stc          & 2000 & $-0.023$ & 0.305 & 97 & 96 \\
\hline
\end{tabular}
\end{table}

% ===========================================================================
\section{Discussion}
\label{sec:discussion}
% ===========================================================================

We proposed prognosis-equivalent mapping as a framework for translating a clinical measurement across levels of a patient-level modifier using a time-to-event outcome as the anchor. The prognosis-equivalent mapping is defined by equating a conditional prognostic quantity and inverting the reference-side curve. In contrast to distributional matching or correction of measurement error in regression coefficients, the proposed target directly asks which value on a reference modifier scale carries the same prognosis under an explicitly defined clinical and causal target.

The framework separates two choices that are central in observational clinical data. First, the observed-practice mapping retains the treatment process operating in the data, whereas the policy-standardized mapping evaluates prognosis under a common treatment policy using the g-formula. The former is appropriate when current treatment practice is part of the clinical meaning to be described, while the latter is appropriate when treatment-mediated heterogeneity should be removed. Second, the absolute mapping equates total prognosis, whereas the origin-referenced mapping subtracts the modifier-specific prognostic level at a clinically justified anchor and thereby isolates modification of the prognostic effect of the measurement. The choice among these mappings is therefore part of the estimand definition.

The measurement-scale interpretation requires additional subject-matter assumptions. When the observed measurement is a modifier-specific reading of a latent common-unit quantity and the origin is preserved across modifier levels, the origin-referenced mapping can be interpreted as a reference-unit conversion. Under the stronger assumption of no residual modifier effect after conditioning on the latent quantity and covariates, the absolute mapping has the same interpretation. These assumptions are generally not testable from observed data.

The simulation study supports these distinctions. Under correct specification, the proposed estimators had small bias and approximately nominal coverage. When treatment assignment depended on the measurement and modifier, observed-practice mappings were biased, whereas policy-standardized mappings based on g-computation restored performance. When the prognostic relationship was nonlinear, spline-based inversion captured curvature missed by the linear interaction model, although with increased variability. The simulations also show that departures from a reference-unit conversion can arise from the identified estimand itself. Modifier-specific level effects, failure of origin preservation, and latent $M\times X^*$ interaction affect the absolute and origin-referenced mappings in different ways.

Several practical cautions follow. Policy-standardized mappings require the usual causal assumptions for treatment standardization, including consistency, conditional exchangeability, positivity, and appropriate handling of censoring. All mappings also require a monotone reference-side prognostic curve over the relevant range. Monotone rearrangement can stabilize finite-sample irregularities, but it cannot make a genuinely non-monotone global mapping well defined. Origin-referenced flexible mappings require particular care when the anchor is poorly supported, since the anchor score may be extrapolative. In such cases, unless a clinically meaningful interior anchor is available, an absolute mapping with a flexible prognostic score is often the safer default.

Future work should develop sensitivity analyses for the latent-measurement assumptions, especially imperfect origin preservation and latent score variation. Further extensions should consider policy-standardized mappings for longitudinal treatment regimes and adapt the framework to other prognostic quantities, such as restricted mean survival time, competing-risk cumulative incidence, and absolute risk at multiple time points. These extensions would preserve the main principle of the proposed framework, which is to translate measurements according to the prognosis they imply rather than according to their physical units alone.

\section*{Acknowledgments}
Kazuharu Harada (K.H.) was partially supported by the Japan Society for the Promotion of Science
(JSPS) KAKENHI under Grant Number 25K21165.\vspace*{-8pt}

\bibliographystyle{plainnat}
\bibliography{refs}

% =====================================================================
% Supplementary Material (Web Appendices A--E)
% =====================================================================
\clearpage
\setcounter{section}{0}
\setcounter{figure}{0}
\setcounter{table}{0}
\setcounter{equation}{0}
\renewcommand{\thesection}{Web Appendix \Alph{section}}
\renewcommand{\theequation}{\Alph{section}.\arabic{equation}}
\renewcommand{\thefigure}{\Alph{section}.\arabic{figure}}
\renewcommand{\thetable}{\Alph{section}.\arabic{table}}
\renewcommand{\tablename}{Web Table}
\renewcommand{\figurename}{Web Figure}
\renewcommand{\thepage}{S\arabic{page}}
\setcounter{page}{1}
{\centering\Large\bfseries Supplementary Material\par}
\medskip
{\centering\normalsize for ``Prognosis-equivalent mapping of clinical measurements via survival analysis''\par}
\bigskip

\noindent This Supplementary Material collects material referenced in the main text: proofs and the consistency of the plug-in inverse mappings (\ref{appA}), analytic delta-method inference for the linear observed-practice and fixed-treatment policy-standardized mappings (\ref{appB}), and additional simulation scenarios with the full performance-metric definitions (\ref{appC}).
Equation, assumption, and proposition numbers refer to the main text unless stated otherwise.

% ===========================================================================
\section{Proofs of Propositions 1--3 and consistency of the plug-in mappings}
\label{appA}
% ===========================================================================
\setcounter{equation}{0}

We restate and prove Propositions~1--3 of the main text.
The identification assumptions (A1)--(A4) used by Proposition~2 are stated formally below; the main text gives them in prose.
The maintained prognostic-monotonicity condition (strict monotonicity of the prognostic score, hence of the excess score) makes every supported solution unique, but does not imply existence when the target lies outside the reference score range; it is not a separate assumption.
The measurement special case (Proposition~3) uses the following conditions on the latent model $X=g_M(X^*)$:
\begin{enumerate}
\item[\textnormal{(G1)}] {\bf Invertibility.}
For each modifier value $m$, the measurement map $g_m(x^*)$ is strictly increasing in $x^*$, hence invertible.

\item[\textnormal{(G2)}] {\bf Origin preservation.}
$g_m(0)=0$ for every modifier value $m$, so $X=0$ corresponds to $X^*=0$.

\item[\textnormal{(G3)}] {\bf Latent excess-score invariance.}
There exists a function $\widetilde\Psi_t^{\pi,\circ}(x^*,z)$ that does not depend on $m$ such that
\begin{gather}
\Psi_t^{\pi,\ast,\circ}(x^*,m,z)
=
\widetilde\Psi_t^{\pi,\circ}(x^*,z)
\end{gather}
for every $(x^*,m,z)$ under consideration.

\item[\textnormal{(G4)}] {\bf No residual modifier effect.}
$T^\pi \ci M \mid X^*,Z$. 
\end{enumerate}

\paragraph{Proposition~\ref{prop:recovery} (Equivalence of the absolute and origin-referenced mappings).}
\emph{Under prognostic monotonicity, fix $t$, $(m,z)$, and $m_{\rm ref}$, and suppose that the supported domains of the absolute and origin-referenced mappings have a nonempty intersection.
If $\ell_t(m,z)=\ell_t(m_{\rm ref},z)$, then the two supported domains coincide and
$L_t^{\mathcal Q}(x,m,z;m_{\rm ref})
=
L_t^{\mathcal Q,\circ}(x,m,z;m_{\rm ref})$
at every $x$ in that domain.
Conversely, if the two mappings agree at one point in their common supported domain, then
$\ell_t(m,z)=\ell_t(m_{\rm ref},z)$.
Consequently, the two mappings coincide throughout their supported domains for all $(m,z)$ under consideration if and only if the prognostic level is modifier-invariant.}

\begin{proof}
Decompose $\Psi_t^{\mathcal Q}(x,m,z)=\ell_t(m,z)+\Psi_t^{\mathcal Q,\circ}(x,m,z)$ with $\ell_t(m,z)=\Psi_t^{\mathcal Q}(0,m,z)$ and $\Psi_t^{\mathcal Q,\circ}(0,m,z)=0$.
On their respective supported domains, the origin-referenced mapping $L^\circ$ solves $\Psi_t^{\mathcal Q,\circ}(L^\circ,m_{\rmref},z)=\Psi_t^{\mathcal Q,\circ}(x,m,z)$, while the absolute mapping $L$ solves $\Psi_t^{\mathcal Q}(L,m_{\rmref},z)=\Psi_t^{\mathcal Q}(x,m,z)$, that is, $\ell_t(m_{\rmref},z)+\Psi_t^{\mathcal Q,\circ}(L,m_{\rmref},z)=\ell_t(m,z)+\Psi_t^{\mathcal Q,\circ}(x,m,z)$.
By the maintained monotonicity, $\Psi_t^{\mathcal Q,\circ}(\cdot,m_{\rmref},z)$ has an inverse on its range, so wherever the absolute mapping is supported,
\begin{gather}
    L=\bigl[\Psi_t^{\mathcal Q,\circ}(\cdot,m_{\rmref},z)\bigr]^{-1}
      \!\left\{\Psi_t^{\mathcal Q,\circ}(x,m,z)+\ell_t(m,z)-\ell_t(m_{\rmref},z)\right\}.
\end{gather}
If $\ell_t(m,z)=\ell_t(m_{\rmref},z)$, the absolute and origin target arguments are identical.
Their supported domains therefore coincide, the offset vanishes, and the displayed inverse reduces to $L^\circ$.
Conversely, suppose that the two mappings agree at one $x$ in their common supported domain.
The origin equation gives $\Psi_t^{\mathcal Q,\circ}(L,m_{\rmref},z)=\Psi_t^{\mathcal Q,\circ}(x,m,z)$; subtracting it from the absolute equation gives $\ell_t(m_{\rmref},z)=\ell_t(m,z)$.

Thus level invariance implies equality throughout the coincident supported domains, whereas equality at even one point in a nonempty common supported domain implies level invariance.
When the level is not invariant, the displayed expression carries exactly the offset $\ell_t(m,z)-\ell_t(m_{\rmref},z)$ on the excess scale wherever that inverse exists.
\end{proof}

\paragraph{Identification assumptions (A1)--(A4).}
Let $T^a$ denote the potential event time under the point intervention $A=a$.
Treatment takes values in a discrete set, with no interference and a single version of each treatment, so each $T^a$ is well defined; the policy-counterfactual $T^\pi$ is induced by drawing $A\sim\pi(\cdot\mid x,m,z)$ independently of $\{T^a\}$ given $(x,m,z)$.
These conditions are understood to apply at the source point, at the reference-side points over which the inverse is sought, and, for the origin-referenced mapping, at both anchor points.
The g-formula conditions \citep{Robins1986, HernanRobins2020} are:
\begin{enumerate}
\item[\textnormal{(A1)}] {\bf Consistency.}
If $A=a$, then $T=T^a$.
\item[\textnormal{(A2)}] {\bf Conditional exchangeability.}
$T^a \ci A\mid X,M,Z$ for every treatment value $a$ with $\pi(a\mid x,m,z)>0$ at the target $(x,m,z)$.
\item[\textnormal{(A3)}] {\bf Treatment positivity.}
If $\pi(a\mid x,m,z)>0$, then $P(A=a\mid X=x,M=m,Z=z)>0$ for the target values of $(x,m,z)$.
\item[\textnormal{(A4)}] {\bf Independent censoring and censoring positivity.}
Event times are independently censored conditional on at least $(X,M,Z,A)$, and $P(C\ge t\mid X=x,M=m,Z=z,A=a)>0$ for the target time $t$ and treatment values $a$ with $\pi(a\mid x,m,z)>0$.
\end{enumerate}

\paragraph{Proposition 2 (Identification of $Q_t^\pi$).}
\emph{Under {\rm (A1)--(A4)}, the policy-standardized prognostic quantity satisfies $Q_t^\pi(x,m,z)=\sum_a P(T>t \mid X=x,M=m,Z=z,A=a)\,\pi(a\mid x,m,z)$.
Suppose that these g-formula quantities are identified at the source point and throughout $\mathcal{X}_{m_{\rmref},z}$.
If $Q_t^\pi(\cdot,m_{\rmref},z)$ is strictly monotone in $x$ and $Q_t^\pi(x,m,z)$ belongs to its range over $\mathcal{X}_{m_{\rmref},z}$, the unique policy-standardized mapping $L_t^\pi(x,m,z;m_{\rmref})$ is identified from the observed data.
The analogous conclusion holds for $L_t^{\pi,\circ}$ when both anchor scores are identified and finite and the source excess score belongs to the range of the reference excess score.}

\begin{proof}
By conditional exchangeability (A2), $T^a \ci A \mid (X,M,Z)$, and by consistency (A1) the realized outcome equals $T^a$ on $\{A=a\}$; hence for any $(x,m,z,a)$ with $\pi(a\mid x,m,z)>0$,
\begin{align}
    P(T^a>t \mid X=x,M=m,Z=z)
    &= P(T^a>t \mid X=x,M=m,Z=z,A=a) \notag \\
    &=~ P(T>t \mid X=x,M=m,Z=z,A=a). \label{eq:fixed-a-id}
\end{align}
This is expressed in observed variables only, and the positivity condition (A3) justifies its evaluation at each $(x,m,z,a)$.
The counterfactual $T^\pi$ under the stochastic policy $\pi$ is the potential outcome induced by $A\sim\pi(\cdot\mid x,m,z)$, so
\begin{align}
    Q_t^\pi(x,m,z)
    &=
    P(T^\pi>t \mid X=x,M=m,Z=z) \notag\\
    &=
    \sum_a P(T^a>t \mid X=x,M=m,Z=z)\,\pi(a\mid x,m,z) \notag\\
    &=
    \sum_a P(T>t \mid X=x,M=m,Z=z,A=a)\,\pi(a\mid x,m,z),
\end{align}
where the last line uses \eqref{eq:fixed-a-id}.
Under independent censoring and the censoring-positivity clause of (A4), each $P(T>t\mid X=x,M=m,Z=z,A=a)$ is identified up to $t$ as a functional of the right-censored observed-data law.
Consistent estimation of that functional additionally requires the regularity conditions of the chosen model or learner and is separate from the identification identity established here.
Finally, if the reference curve is strictly monotone and $Q_t^\pi(x,m,z)$ belongs to its range over $\mathcal{X}_{m_{\rmref},z}$, the prognostic-equivalence equation has exactly one solution in that domain.
Because both the source quantity and reference curve are identified there, this solution is identified from the observed data.
For the origin-referenced mapping, additionally require $0<Q_t^\pi(0,m,z)<1$ and $0<Q_t^\pi(0,m_{\rmref},z)<1$, so that both transformed anchor scores are finite.
Subtracting these identified anchor scores gives the same conclusion when the source excess score belongs to the range of the reference excess score; this range condition supplies existence, while strict monotonicity supplies uniqueness.
\end{proof}

\paragraph{Proposition 3 (Reference-unit conversion).}
\emph{Assume the measurement model $X=g_M(X^*)$ and the maintained prognostic monotonicity.
{\rm (i)} Under {\rm (G1)}, {\rm (G2)}, and {\rm (G3)}, the origin-referenced policy-standardized mapping is the reference-unit conversion
$L_t^{\pi,\circ}(x,m,z;m_{\rm ref}) = g_{m_{\rm ref}}\!\left\{g_m^{-1}(x)\right\}$.
{\rm (ii)} Under {\rm (G1)} and {\rm (G4)}, the absolute policy-standardized mapping equals the same conversion; condition {\rm (G2)} is not required.}

\begin{proof}
Conditions (G3) and (G4) are understood to concern the prognostic score induced by the chosen target policy $\pi$.
Recall that the latent-scale prognostic score and its excess relative to the latent origin are
\begin{align}
\Psi_t^{\pi,\ast}(x^*,m,z)
&:=
\Psi_t^\pi\!\left(g_m(x^*),m,z\right), \notag\\
\Psi_t^{\pi,\ast,\circ}(x^*,m,z)
&:=
\Psi_t^{\pi,\ast}(x^*,m,z)
-
\Psi_t^{\pi,\ast}(0,m,z).
\notag
\end{align}
Fix a source value $x$ such that $x^*=g_m^{-1}(x)$ belongs to the common latent domain of $g_m$ and $g_{m_{\rm ref}}$, and suppose that
$g_{m_{\rm ref}}(x^*)\in\mathcal{X}_{m_{\rm ref},z}$.
For part (i), suppose additionally that the source and reference anchors belong to their respective domains and that the corresponding anchor scores are finite.

\textit{(i)}
By (G2), $g_m(0)=0$, and therefore
\begin{align}
\Psi_t^{\pi,\circ}(x,m,z)
&=
\Psi_t^\pi(x,m,z)
-
\Psi_t^\pi(0,m,z) \notag\\
&=
\Psi_t^\pi\!\left(g_m(x^*),m,z\right)
-
\Psi_t^\pi\!\left(g_m(0),m,z\right) \notag\\
&=
\Psi_t^{\pi,\ast}(x^*,m,z)
-
\Psi_t^{\pi,\ast}(0,m,z) \notag\\
&=
\Psi_t^{\pi,\ast,\circ}(x^*,m,z).
\label{eq:proof-latent-source-excess}
\end{align}
Condition (G3) gives
\begin{gather}
\Psi_t^{\pi,\ast,\circ}(x^*,m,z)
=
\widetilde\Psi_t^{\pi,\circ}(x^*,z),
\end{gather}
where $\widetilde\Psi_t^{\pi,\circ}$ does not depend on $m$.
Applying the same argument on the reference side and using $g_{m_{\rm ref}}(0)=0$ gives
\begin{gather}
\Psi_t^{\pi,\circ}
\!\left(g_{m_{\rm ref}}(x^*),m_{\rm ref},z\right)
=
\widetilde\Psi_t^{\pi,\circ}(x^*,z).
\end{gather}
Hence
\begin{gather}
\Psi_t^{\pi,\circ}
\!\left(g_{m_{\rm ref}}(x^*),m_{\rm ref},z\right)
=
\Psi_t^{\pi,\circ}(x,m,z).
\end{gather}
Because the reference-side excess score is strictly monotone on its supported domain, the origin-referenced prognostic-equivalence equation has the unique solution
\begin{gather}
L_t^{\pi,\circ}(x,m,z;m_{\rm ref})
=
g_{m_{\rm ref}}(x^*)
=
g_{m_{\rm ref}}\!\left\{g_m^{-1}(x)\right\}.
\end{gather}

\textit{(ii)}
Under (G4), the counterfactual outcome distribution at fixed $(X^*,Z)$ does not depend on $M$.
Consequently, there exists a function $\widetilde\Psi_t^\pi(x^*,z)$ such that
\begin{gather}
\Psi_t^{\pi,\ast}(x^*,m,z)
=
\widetilde\Psi_t^\pi(x^*,z)
\end{gather}
for every $(x^*,m,z)$ under consideration.
It follows that
\begin{align}
\Psi_t^\pi(x,m,z)
&=
\widetilde\Psi_t^\pi\!\left(g_m^{-1}(x),z\right), \notag\\
\Psi_t^\pi(L,m_{\rm ref},z)
&=
\widetilde\Psi_t^\pi\!\left(g_{m_{\rm ref}}^{-1}(L),z\right).
\notag
\end{align}
The absolute prognostic-equivalence equation therefore becomes
\begin{gather}
\widetilde\Psi_t^\pi\!\left(g_{m_{\rm ref}}^{-1}(L),z\right)
=
\widetilde\Psi_t^\pi\!\left(g_m^{-1}(x),z\right).
\end{gather}
By prognostic monotonicity,
\begin{gather}
g_{m_{\rm ref}}^{-1}(L)
=
g_m^{-1}(x),
\end{gather}
and hence
\begin{gather}
L_t^\pi(x,m,z;m_{\rm ref})
=
g_{m_{\rm ref}}\!\left\{g_m^{-1}(x)\right\}.
\end{gather}
This argument does not reference the origin, so condition (G2) is not required.
\end{proof}

This proof establishes a structural identity on the shared latent domain for the chosen policy; it neither identifies the measurement maps $g_m$ themselves nor guarantees observed-data support.
Estimating the prognosis-equivalent mapping from observed data additionally requires the identification and support conditions of Proposition~2 at the source and converted reference points and, for the origin-referenced mapping, at both anchors.

\paragraph{Consistency of the plug-in inverse mappings.}
\label{app:plugin-consistency}
We justify the consistency statements of the main text (Section~4.3).
The argument is a direct application of uniform convergence, monotone rearrangement, and continuity of the inverse map.
The uniform claims below require a common positive interior margin from the relevant reference-range boundaries and a common modulus of continuity for the associated family of inverse curves.
When the reference curves are differentiable, a uniform lower bound $c>0$ on their absolute slopes over the inversion ranges is a sufficient inverse-conditioning condition.

Consider first the observed-practice mapping.
Throughout this argument, a fitted reference curve is inverted after rearrangement in the maintained monotonicity direction; without rearrangement, the same conclusions require the fitted curve to be monotone with probability tending to one.
Let
\begin{gather}
r_{m_{\rmref},z}(u):=\eta(u,m_{\rmref},z),
\qquad
r_{m,z}(x):=\eta(x,m,z).
\end{gather}
The target mapping satisfies
\begin{gather}
r_{m_{\rmref},z}\{L^\obs(x,m,z;m_{\rmref})\}
=
r_{m,z}(x).
\end{gather}
Suppose that $\hat\eta$ is uniformly consistent on the relevant compact domain and that $r_{m_{\rmref},z}$ is strictly monotone on the relevant compact interval.
Then
\begin{gather}
\sup_{(x,m,z)\in\mathcal{D}}
\left|
\hat r_{m,z}(x)-r_{m,z}(x)
\right|
\overset{p}{\longrightarrow}0,
\end{gather}
and the same holds for the fitted reference curve $\hat r_{m_{\rmref},z}(\cdot)$.
Since directional rearrangement preserves uniform consistency, and since uniform convergence to a continuous strictly monotone limit implies uniform convergence of the corresponding inverses on compact subsets of the interior of the limiting range,
\begin{gather}
\sup_{(x,m,z)\in\mathcal{D}}
\left|
\hat L^\obs(x,m,z;m_{\rmref})
-
L^\obs(x,m,z;m_{\rmref})
\right|
\overset{p}{\longrightarrow}0.
\end{gather}

The origin-referenced estimator is handled by replacing $\eta$ with the centered score
\begin{gather}
\eta^\circ(x,m,z)
:=
\eta(x,m,z)-\eta(0,m,z).
\end{gather}
Uniform consistency of $\hat\eta$ implies uniform consistency of $\hat\eta^\circ$, because the anchor term is evaluated on the same compact domain.
The same inverse-continuity argument therefore gives consistency of $\hat L^{\obs,\circ}$.

For the policy-standardized mapping, assume
\begin{gather}
\sup_{(x,m,z)\in\mathcal{D}}
\left|
\widehat Q_t^\pi(x,m,z)-Q_t^\pi(x,m,z)
\right|
\overset{p}{\longrightarrow}0.
\end{gather}
If the reference curve $Q_t^\pi(\cdot,m_{\rmref},z)$ is strictly monotone on the relevant compact range and satisfies the common inverse-continuity condition above, then the same argument yields
\begin{gather}
\sup_{(x,m,z)\in\mathcal{D}}
\left|
\hat L_t^\pi(x,m,z;m_{\rmref})
-
L_t^\pi(x,m,z;m_{\rmref})
\right|
\overset{p}{\longrightarrow}0.
\end{gather}

For the origin-referenced policy-standardized estimator, define
\begin{gather}
\Psi_t^{\pi,\circ}(x,m,z)
:=
\log\{-\log Q_t^\pi(x,m,z)\}
-
\log\{-\log Q_t^\pi(0,m,z)\}.
\end{gather}
If $Q_t^\pi$ is bounded away from $0$ and $1$ on the relevant range, the map $q\mapsto\log\{-\log q\}$ is Lipschitz on that range.
Hence uniform consistency of $\widehat Q_t^\pi$ implies uniform consistency of $\widehat\Psi_t^{\pi,\circ}$, including at the anchor $x=0$.
Consistency of the origin-referenced inverse then follows from the same monotone inverse argument.

Rearrangement in the maintained monotonicity direction is non-expansive in sup-norm, so the rearranged reference curve has the same limiting inverse on the interior supported range.
These conclusions concern inversion of the continuum curve, or of an interpolation that is exact in the limit.
If inversion is performed only on a numerical grid with mesh width $h_n$, a fixed grid targets its discretized or interpolated inverse, whereas consistency for the continuum mapping additionally requires the interpolation error to vanish, for example through $h_n\to0$.

% ===========================================================================
\section{Analytic inference for linear single-score mappings}
\label{appB}
% ===========================================================================
\setcounter{equation}{0}

The derivation in this appendix applies both to the observed-practice mapping and to a policy-standardized mapping under a fixed treatment regime $\pi=\delta_{a_0}$.
For the fixed regime, the g-formula reduces to the treatment-specific Cox survival at $A=a_0$, so prognostic equivalence again reduces to equality of a single fitted Cox score.
Suppose that the relevant score, after collecting terms that remain when $A=a_0$ is fixed, has the linear interaction form
\begin{gather}
    \eta(x,m,z) = \beta_X x + \beta_M m + \beta_{XM}\,m x + \gamma^\top z,
    \label{eq:linear-model-inference}
\end{gather}
where terms depending only on the fixed arm and the common covariate value cancel from the mapping equation.
For OP, $(\beta_X,\beta_M,\beta_{XM})$ are the corresponding coefficients of the OP score.
For fixed-treatment PS, let $\theta_Y$ denote the coefficient vector of the treatment-inclusive outcome model and write the effective coefficient vector as $b(a_0)=R_{a_0}\theta_Y$, where the known matrix $R_{a_0}$ selects and combines any treatment-interaction terms at $a_0$.
Then $\widehat{\mathrm{Var}}\{\hat b(a_0)\}=R_{a_0}\hat V_YR_{a_0}^\top$, and all calculations below apply by the chain rule with the effective coefficients substituted for $(\beta_X,\beta_M,\beta_{XM})$.

All pair-specific derivations below are written for a scalar modifier and concern a fixed nontrivial pair $m\ne m_{\rmref}$; they assume that the source and reference slopes have a common nonzero sign and that the reference slope is bounded away from zero.
We also assume root-$n$ asymptotic normality of the relevant Cox coefficient estimator and consistency of its covariance estimator under the identification and censoring conditions appropriate to the OP or fixed-treatment PS target.
In the formulas below, $\hat V(\hat\theta)$ denotes a consistent estimator of $\mathrm{Var}(\hat\theta)$.
The inverse observed information is appropriate when the Cox model is correctly specified; under misspecification, inference for the pseudo-true working-score mapping instead requires a suitable sandwich covariance estimator.
Let $\alpha\in(0,1)$ denote the nominal noncoverage probability, and define $z_p:=\Phi^{-1}(p)$.

\paragraph{Notation and estimators.}
For brevity, let
\begin{gather}
    c_1 = \beta_M(m - m_{\rmref}),
    \quad
    c_2 = \beta_X + \beta_{XM} m,
    \quad
    c_3 = \beta_X + \beta_{XM} m_{\rmref}.
\end{gather}
Define the population mapping slope and intercept by
\begin{gather}
    \phi
    := \frac{c_2}{c_3},
    \qquad
    \alpha
    := \frac{c_1}{c_3},
\end{gather}
so that the population origin-referenced and absolute mappings are, respectively, $L^\circ(x)=\phi x$ and $L(x)=\alpha+\phi x$.
Here $\phi$ denotes the mapping slope and is distinct from the prognostic transformation $\varphi$, while $\alpha$ denotes the mapping intercept only.
Their plug-in estimators are
\begin{gather}
    \hat\phi
    := \frac{\hat{c}_2}{\hat{c}_3}
    = \frac{\hat\beta_X + \hat\beta_{XM} m}{\hat\beta_X + \hat\beta_{XM} m_{\rmref}},
    \qquad
    \hat\alpha
    := \frac{\hat{c}_1}{\hat{c}_3}
    = \frac{\hat\beta_M(m - m_{\rmref})}{\hat\beta_X + \hat\beta_{XM} m_{\rmref}},
    \label{eq:alpha-phi-def}
\end{gather}
and hence $\hat L^\circ(x)=\hat\phi x$ and $\hat L(x)=\hat\alpha+\hat\phi x$.
For a scalar function $g(\theta)$, the delta-method variance estimator is $\widehat{\mathrm{Var}}\{g(\hat\theta)\}=(\nabla_\theta g\vert_{\hat\theta})^\top\hat V(\hat\theta)(\nabla_\theta g\vert_{\hat\theta})$.

\paragraph{Inference for the origin-referenced mapping.}
The gradient of the functional $\phi=c_2/c_3$, evaluated at $\hat\theta$, is
\begin{gather}
    \nabla_\theta \phi\big|_{\hat\theta}
    =
    \frac{1}{\hat{c}_3} \nabla_\theta c_2\big|_{\hat\theta}
    -
    \frac{\hat{c}_2}{\hat{c}_3^2} \nabla_\theta c_3\big|_{\hat\theta},
    \label{eq:grad-phi}
\end{gather}
and its standard error is
\begin{gather}
    \widehat{\mathrm{se}}(\hat\phi)
    =
    \left\{
        (\nabla_\theta \phi)^\top \hat{V}(\hat\theta)\, (\nabla_\theta \phi)
    \right\}^{1/2}.
    \label{eq:se-phi}
\end{gather}

\textit{Confidence interval for the mapping slope $\phi$:} the Wald $1-\alpha$ interval is
\begin{gather}
    \hat\phi \pm z_{1-\alpha/2}\, \widehat{\mathrm{se}}(\hat\phi).
    \label{eq:ci-phi}
\end{gather}

\textit{Pointwise interval at a fixed point $x_0$:} since $\widehat{\mathrm{se}}\{\hat L^\circ(x_0)\}=|x_0|\widehat{\mathrm{se}}(\hat\phi)$, the pointwise interval is $\hat\phi x_0\pm z_{1-\alpha/2}|x_0|\widehat{\mathrm{se}}(\hat\phi)$.
At $x_0=0$, the origin-referenced mapping, its estimator, and its estimated variance are all fixed at zero by construction, so this degenerate anchor carries no pointwise inferential content and is excluded from coverage summaries.

\textit{Simultaneous band for the mapping curve:} the uncertainty of $L^\circ(x)=\phi x$ is carried by the scalar $\phi$ alone.
On a fixed interval $[x_l,x_u]\subset(0,\infty)$, the band
\begin{gather}
    \hat\phi\, x \pm z_{1-\alpha/2}\, x\, \widehat{\mathrm{se}}(\hat\phi),
    \quad x \in [x_l, x_u]
    \label{eq:band-origin-linear}
\end{gather}
is obtained by scaling the Wald interval \eqref{eq:ci-phi} by $x$ and therefore has asymptotic simultaneous coverage $1-\alpha$ over that interval under the stated regularity conditions.
This is the origin-referenced specialization of the Gaussian plug-in sup-$t$ band introduced below: because all standardized errors are generated by the same scalar $\hat\phi-\phi$, the limiting sup-$t$ critical value reduces to $z_{1-\alpha/2}$.

\paragraph{Inference for the absolute mapping.}
The uncertainty of $\hat L(x)=\hat\alpha+\hat\phi x$ is distributed over the two-dimensional estimator $(\hat\alpha,\hat\phi)$.
The gradient of the functional $\alpha=c_1/c_3$, evaluated at $\hat\theta$, is
\begin{gather}
    \nabla_\theta \alpha\big|_{\hat\theta}
    =
    \frac{1}{\hat{c}_3} \nabla_\theta c_1\big|_{\hat\theta}
    -
    \frac{\hat{c}_1}{\hat{c}_3^2} \nabla_\theta c_3\big|_{\hat\theta},
\end{gather}
and by the delta method the elements of the estimated covariance matrix $\hat{\bs{\Sigma}} \in \doubleR^{2 \times 2}$ of $(\hat\alpha,\hat\phi)$ are
\begin{align}
    \hat\sigma_\alpha^2
    &= (\nabla_\theta \alpha)^\top \hat{V}(\hat\theta)\, (\nabla_\theta \alpha),
    \quad
    \hat\sigma_{\alpha\phi}
    = (\nabla_\theta \alpha)^\top \hat{V}(\hat\theta)\, (\nabla_\theta \phi),
    \quad
    \hat\sigma_\phi^2
    = (\nabla_\theta \phi)^\top \hat{V}(\hat\theta)\, (\nabla_\theta \phi).
    \label{eq:sigma-elements}
\end{align}

\textit{Confidence interval for the mapping intercept $\alpha$:} writing $\hat\sigma_\alpha=(\hat\sigma_\alpha^2)^{1/2}$, the scalar Wald $1-\alpha$ interval is $\hat\alpha\pm z_{1-\alpha/2}\hat\sigma_\alpha$.

\textit{Pointwise interval at a fixed point $x_0$:} the variance estimator of $\hat L(x_0)=\hat\alpha+\hat\phi x_0$ is
\begin{gather}
    \widehat{\mathrm{Var}}\{\hat L(x_0)\}
    =
    \hat\sigma_\alpha^2
    + 2 x_0\, \hat\sigma_{\alpha\phi}
    + x_0^2\, \hat\sigma_\phi^2,
    \label{eq:var-full}
\end{gather}
and the pointwise Wald interval is $\hat L(x_0)\pm z_{1-\alpha/2}\{\widehat{\mathrm{Var}}\{\hat L(x_0)\}\}^{1/2}$.

\textit{Simultaneous band for the mapping curve (Gaussian plug-in sup-$t$):} because $L(x)=\alpha+\phi x$ depends on two parameters, its standardized grid errors need not be perfectly correlated and the simultaneous critical value is calibrated \citep{MontielOleaPlagborgMoller2019}.
On a fixed grid $x_1<\cdots<x_K$, let $\hat{\bs{\Sigma}}_L=J\hat{\bs{\Sigma}}J^\top$ be the implied covariance of $\bigl(\hat L(x_1),\dots,\hat L(x_K)\bigr)$, where $J$ is the $K\times2$ Jacobian with $k$th row $(1,x_k)$, and let $\hat R$ be the corresponding correlation matrix.
Conditionally drawing $\xi\sim N(0,\hat R)$, let $c_{1-\alpha}$ be the $1-\alpha$ quantile of $\max_{1\le k\le K}|\xi_k|$.
The simultaneous band is
\begin{gather}
    \hat L(x_k)
    \pm
    c_{1-\alpha}\,
    \bigl\{\widehat{\mathrm{Var}}\{\hat L(x_k)\}\bigr\}^{1/2},
    \qquad k = 1,\dots,K.
    \label{eq:band-full}
\end{gather}
Under the stated regularity conditions and consistent covariance estimation, this plug-in band has asymptotic simultaneous coverage $1-\alpha$ on the fixed grid.
Its Gaussian critical value satisfies $z_{1-\alpha/2}\le c_{1-\alpha}\le\sqrt{\chi^2_{2,1-\alpha}}$, with the upper bound corresponding to the Scheff\'e construction based on the full two-parameter ellipse \citep{Scheffe1953}.
The band $\hat L(x)\pm\sqrt{\chi^2_{2,1-\alpha}\widehat{\mathrm{Var}}\{\hat L(x)\}}$ is therefore an asymptotically valid, generally conservative closed-form alternative for the fixed grid.

\paragraph{Band width of the origin-referenced and absolute mappings.}
The origin-referenced pointwise variance is $x^2\hat\sigma_\phi^2$, so band \eqref{eq:band-origin-linear} widens from the origin in proportion to $|x|$.
The absolute pointwise variance is $\hat\sigma_\alpha^2+2x\hat\sigma_{\alpha\phi}+x^2\hat\sigma_\phi^2$ and therefore also reflects intercept uncertainty and its covariance with the slope.
Because the cross term can be negative, there is no universal ordering of the two band widths, although the absolute mapping retains nonzero intercept uncertainty at $x=0$ whenever $\hat\sigma_\alpha^2>0$.
The origin-referenced simultaneous band uses $z_{1-\alpha/2}$, whereas the absolute band uses the calibrated $c_{1-\alpha}\in[z_{1-\alpha/2},\sqrt{\chi^2_{2,1-\alpha}}]$.

\paragraph{Pair-specific identity tests.}
For the fixed pair $(m,m_{\rmref})$, identity of the origin-referenced mapping is the scalar null $H_0^\circ:\phi=1$, whereas identity of the absolute mapping is the joint null $H_0:\alpha=0,\ \phi=1$.

\textit{Origin-referenced pair:} the Wald statistic
\begin{gather}
    W_\phi
    =
    \frac{(\hat\phi - 1)^2}{\widehat{\mathrm{Var}}(\hat\phi)}
    \xrightarrow{d} \chi^2_1
    \quad \text{under } H_0^\circ,
    \label{eq:wald-phi}
\end{gather}
tests the one-dimensional pair-specific contrast $c_2=c_3$.

\textit{Absolute pair:} the Wald statistic
\begin{gather}
    W_{(\alpha,\phi)}
    =
    \bigl(\hat\alpha,\; \hat\phi - 1\bigr)\,
    \hat{\bs{\Sigma}}^{-1}
    \begin{pmatrix}\hat\alpha \\ \hat\phi - 1\end{pmatrix}
    \xrightarrow{d} \chi^2_2
    \quad \text{under } H_0,
    \label{eq:wald-full}
\end{gather}
tests the two-dimensional pair-specific contrasts $c_1=0$ and $c_2=c_3$.
These Wald limits require the displayed contrast covariance matrices to be nonsingular.

\paragraph{Global tests for a categorical modifier.}
If $M$ has $K$ levels and is represented by $K-1$ indicators, let $\beta_{XM}\in\doubleR^{K-1}$ and $\beta_M\in\doubleR^{K-1}$ denote the corresponding interaction and main-effect vectors.
The global origin-referenced null is $H_{0,G}^\circ:\beta_{XM}=0$, and the global absolute-identity null is $H_{0,G}:\beta_{XM}=0,\ \beta_M=0$.
Under correct model specification, these restrictions may be tested by Wald or likelihood-ratio tests with, respectively, $K-1$ and ordinarily $2(K-1)$ degrees of freedom.
For a misspecified working model, a sandwich Wald test for the corresponding pseudo-true restrictions is required rather than the usual likelihood-ratio calibration.
These global nulls are distinct from a test for one selected pair and coincide with it only in the corresponding binary single-contrast case.
When several pair-specific hypotheses are reported, multiplicity should be handled separately from the single global test.

% ===========================================================================
\section{Additional simulation scenarios and performance metrics}
\label{appC}
% ===========================================================================
\setcounter{equation}{0}

This appendix collects the full data-generating structure for all reported simulation scenarios, the inference and computational scale of the study, the performance-metric definitions, and the comprehensive per-cell results.
The learners (LL, SL), policy variants (OP, PS-stc), and treatment axis are summarized in the main text.

\paragraph{Common data-generating setup.}
The generating structure common to all scenarios is as follows (common parameters $\beta_X^{*}=0.7$, $\beta_Z=0.5$, $\delta=0.5$, $\tau=-0.5$).
Here $\beta_X^{*}$ is the data-generating prognostic slope on the latent common-unit quantity $X^*$, distinct from the observed-scale coefficient $\beta_X$ that the main-text working model fits; the two differ through the measurement map $X=g_M(X^*)$.
The modifier coefficients $\gamma_M$ (level) and $\gamma_{MX}$ ($M\times X^*$) introduced below are likewise data-generating parameters.
\begin{itemize}
    \item \textbf{Modifier}: in the binary scenarios (S1a--S1b, S2a--S2c) $M \sim \mathrm{Bernoulli}(0.3)$; in the continuous scenarios (S3a, S3b) $M \sim \mathrm{Uniform}(0,1)$.
    \item \textbf{Covariate}: $Z \sim \mathcal{N}(0,1)$.
    \item \textbf{Measurement construction}: in the S1a \emph{linear} map and the S2 scenarios, the latent $X^* \sim \mathrm{LogNormal}(0,0.4)$ is drawn and the observed $X=g_M(X^*)$ follows the scenario map.
    In the matched-marginal scenarios---the nonlinear smooth-hinge S1b, the continuous bilinear S3a, and the continuous smooth-hinge S3b---the observed $X \sim \mathrm{LogNormal}(0,0.4)$ is instead drawn from a common marginal for every $m$ and the latent driver is set $X^*=g_m^{-1}(X)$, so the $M$-conditional distributions of $X$ coincide and a contrast is not a support artefact; the hazard depends on $X^*$ only.
    \item \textbf{Baseline hazard}: Weibull, $\Lambda_0(t)=(t/\sigma)^{\kappa}$ ($\kappa=1.5$, $\sigma=3.0$); event times $T = \sigma\{-\log U/\exp\eta\}^{1/\kappa}$, $U\sim\mathrm{Uniform}(0,1)$.
    \item \textbf{Linear predictor}: $\eta$ takes $X^*,M,X^*\times M,Z$ (and $A$ in the with-treatment version) as inputs, with the covariate entering additively as $\beta_Z Z$.
    This common additive term cancels when a single score is equated, as in the no-treatment and fixed-treatment mappings, but need not cancel from the treatment mixture underlying with-treatment OP; the reported comparison is evaluated at $Z=0$.
    \item \textbf{Treatment} (with-treatment version only): $A \sim \mathrm{Bernoulli}(\mathrm{logit}^{-1}(a_0+a_X X+a_M M+a_Z Z))$ with $a_0=-1$, $a_X=0.3$, $a_M=1.0$, $a_Z=0.3$; the hazard includes $\tau A$ ($\tau=-0.5$).
    \item \textbf{Censoring}: $C \sim \mathrm{Uniform}(0, c_{\max})$ with $c_{\max}=10$; $Y=\min(T,C)$, $\Delta=\mathbf{1}\{T\le C\}$.
    \item \textbf{Evaluation grid}: $x \in \{0.5,1.0,1.5,2.0,2.5\}$; for binary $M$, $m=1\to m_{\rmref}=0$; for continuous $M$, the reported OP and PS-stc summaries use $m\in\{0.25,0.5,0.75\}$ by $m_{\rmref}\in\{0,0.5,1\}$ in S3a and $m_{\rmref}=0$ in S3b.
    All mappings are evaluated at $Z=0$.
\end{itemize}

\paragraph{Scenario-specific generating structures.}
The measurement maps and the resulting true mappings are illustrated in Web Figures~\ref{webfig:scenarios-binary} (binary $M$) and~\ref{webfig:scenarios-continuous} (continuous $M$).
\begin{itemize}
\item \emph{S1a (linear):} measurement map $X=(1+\delta M)X^*$ and a hazard with no direct effect of $M$, $\lambda(t\mid X^*,Z)=\lambda_0(t)\exp(\beta_X^{*}X^*+\beta_Z Z)$; (G1)--(G4) hold for the no-treatment OP and PS-stc targets.
\item \emph{S2a (level shift; G4 violated):} the S1a map with a direct level effect, $\lambda=\lambda_0\exp(\beta_X^{*}X^*+\gamma_M M+\beta_Z Z)$, $\gamma_M=-0.5$.
\item \emph{S2b ($M\times X^*$ interaction; G3 violated):} $\lambda=\lambda_0\exp[(\beta_X^{*}+\gamma_{MX}M)X^*+\beta_Z Z]$, $\gamma_{MX}=-0.20$.
\item \emph{S2c (origin non-preservation; G2 violated):} an intercept under $M=1$, $X=c_1 M+(1+\delta M)X^*$, $c_1=0.30$ (hazard as in S1a).
\item \emph{S1b (nonlinear smooth hinge):} construction as above with latent driver $X^*=h(X)$ for the source arm and $X^*=X$ for the reference arm (hazard as in S1a), where $h$ is the smooth concave hinge
        \begin{gather}
          h(x)=b(x)+\{a_y-b(a_x)\},\qquad b(u)=s_0 u-(s_0-s_1)\,w\log\!\bigl(1+e^{(u-c_1)/w}\bigr),
          \label{eq:hinge}
        \end{gather}
with slopes $s_0=1$, $s_1=1/10$, knot $c_1=1.3$, corner width $w=0.45$, anchored through the origin $(a_x,a_y)=(0,0)$.
Then $h$ is strictly increasing with $h(0)=0$, and $h'(x)<1$ gives $h(x)<x$ for $x>0$; for the no-treatment OP and PS-stc targets, (G1)--(G4) hold and both mappings coincide, $L_{\mathrm{true}}(x;1\to0)=h(x)$.
\item \emph{S3a (continuous, linear in $x$ and the $X\times M$ interaction, no $M$ main effect):} the construction with $X^*=\kappa(m)X$, $\kappa(m)=1-m/3$, so $\eta(x,m,z)=\beta_X^{*}\kappa(m)x+\beta_Z z$ has $(x,m)$-part $\beta_X^{*}x-\tfrac{\beta_X^{*}}{3}\,mx$, linear in $x$ and in the $X\times M$ interaction with no main effect of $M$; for the no-treatment OP and PS-stc targets, (G4) holds and $L_{\mathrm{true}}(x;m\to m_{\rmref})=x\,\kappa(m)/\kappa(m_{\rmref})$.
\item \emph{S3b (continuous, hinge in $x$):} the construction with $X^*=(1-m)X+m\,h(X)$, so $\eta(x,m,z)=\beta_X^{*}\{(1-m)x+m\,h(x)\}+\beta_Z z$ is linear in $m$ and a hinge in $x$; the source map at $m=1$ equals the binary hinge $h$.
For the no-treatment OP and PS-stc targets, the true mapping is $L_{\mathrm{true}}(x;m\to 0)=(1-m)x+m\,h(x)$ at the origin reference and solves $g_{m_{\rmref}}^{-1}(L)=g_m^{-1}(x)$ numerically otherwise; standardizing to a non-origin, compressed reference inverts the near-flat $m=1$ arm (the true $L$ reaching about $13$ at the grid edge), which is ill-conditioned, so we report $m_{\rmref}=0$ only.
\end{itemize}
A spline-origin caveat applies across the spline-fitted scenarios (S1b, S3a, S3b): because origin-referencing a fitted $X$-spline at the anchor $x=0$ extrapolates beyond the observed support, the absolute mapping is the headline target there, while the origin-referenced mapping is still reported in the comprehensive tables to make that instability explicit (its bias and under-coverage for the SL learner).

\begin{figure}[htbp]
\centering
\includegraphics[width=0.8\textwidth]{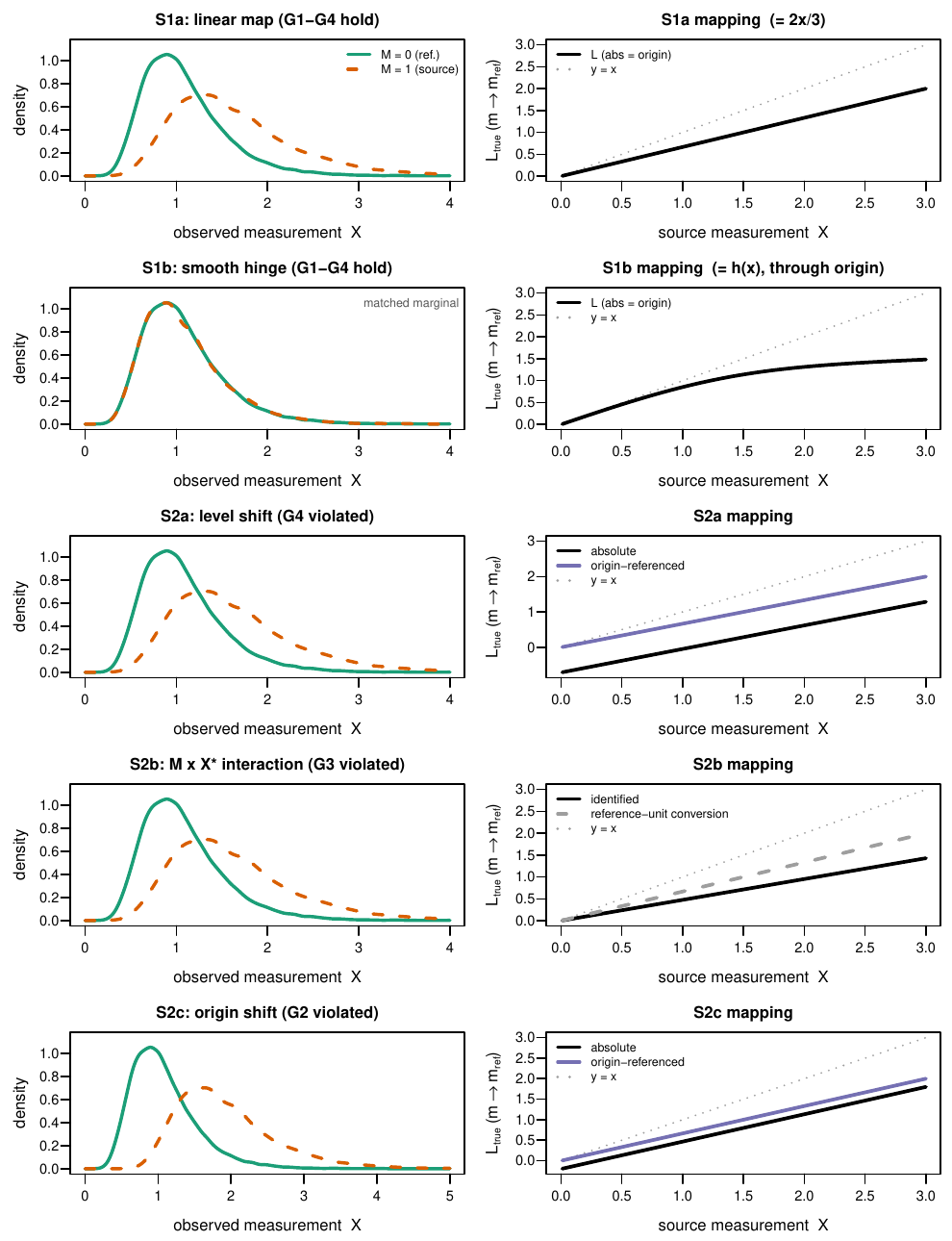}
\caption{Binary-modifier simulation scenarios ($m=1\to m_{\rmref}=0$).
For each scenario the left panel shows the distribution of the observed measurement $X$ by modifier value and the right panel the true prognosis-equivalent mapping $L_{\mathrm{true}}$ against the identity $y=x$.
For the no-treatment OP and PS-stc targets, the (G1)--(G4) maps S1a (linear) and S1b (smooth hinge) both pass through the origin, so the absolute and origin-referenced mappings coincide; S1b draws $X$ from a common marginal for both arms (matched marginals).
The targeted-departure maps S2a (G4), S2b (G3), and S2c (G2) show the asymmetric departure of the absolute and origin-referenced mappings from the reference-unit conversion (Proposition~3).}
\label{webfig:scenarios-binary}
\end{figure}

\begin{figure}[htbp]
\centering
\includegraphics[width=0.8\textwidth]{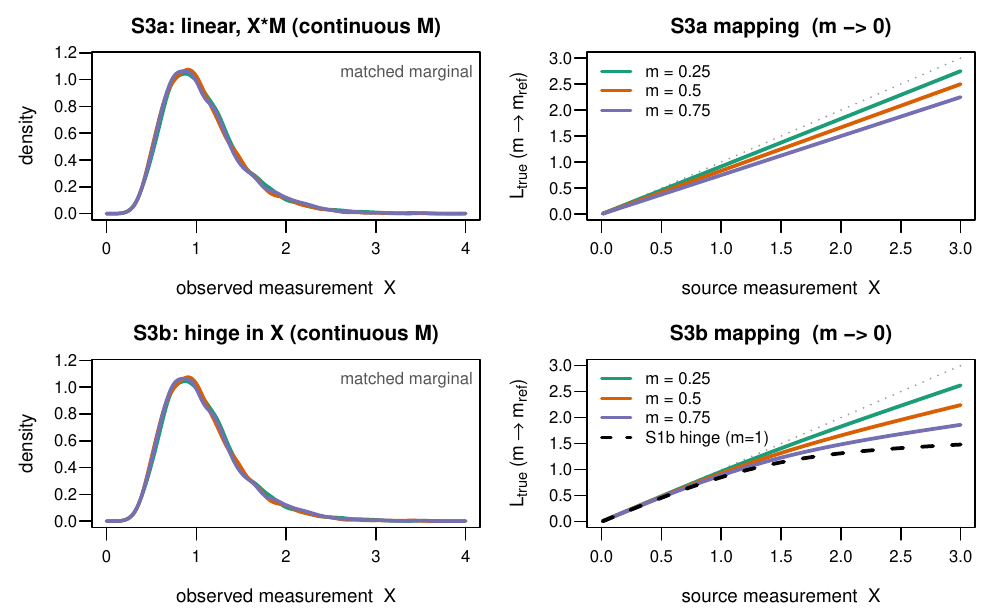}
\caption{Continuous-modifier simulation scenarios ($m\in\{0.25,0.5,0.75\}\to m_{\rmref}=0$), both generated with matched observed-$X$ marginals.
S3a is linear in $x$ and the $X\times M$ interaction with no $M$ main effect; S3b is linear in $m$ and a hinge in $x$, its $m=1$ source map equal to the S1b binary hinge (dashed).
All maps pass through the origin.}
\label{webfig:scenarios-continuous}
\end{figure}

\paragraph{Inference and computational scale.}
The mode of inference and the scale of the study differ by estimator (Web Table~\ref{tab:sim-inference}).
LL under OP and PS-stc admits the inexpensive analytic delta-method bands of \ref{appB}---pointwise Wald intervals and the Gaussian plug-in sup-$t$ simultaneous band---so these cells are evaluated at $R=1000$ replications and $n\in\{500,1000,2000\}$ without resampling.
The spline learner SL under OP and PS-stc relies on the nonparametric bootstrap---pointwise percentile intervals and the calibrated percentile simultaneous band (common tail probability $\zeta$) of Section~5 of the main text---so we use $R=500$ replications at $n\in\{1000,2000\}$ with $B=200$ bootstrap replicates.
For spline learners, the centering and natural-spline basis, including its knots and boundary knots, are constructed from the outer sample and held fixed while individuals are resampled.
Within this conditional-on-basis bootstrap, the Cox model, rearrangement, and inversion are recomputed in every draw.
The recorded failure rate---defined as a missing estimator output or a nonfinite absolute-map value---was zero in every scenario and cell.

\begin{table}[htbp]
\centering
\caption{Mode of inference for each simulation estimator (learner $\times$ policy).
A cell uses the nonparametric bootstrap if and only if its learner is spline-based (SL); the linear learner LL under the observed-practice (OP) or static-policy (PS-stc) mapping uses the analytic delta method instead.
Pointwise intervals are $95\%$, and the simultaneous-band algorithm uses $95\%$ as its target bootstrap content.
``Wald'' is the delta-method normal interval and ``Gaussian plug-in sup-$t$'' the analytic simultaneous band of \ref{appB}; ``percentile'' is the bootstrap percentile interval and ``calibrated percentile'' the simultaneous band of Section \ref{sec:inference} of the main text---the equal-tailed percentile rectangle at the calibrated common tail probability $\zeta$ \citep{MontielOleaPlagborgMoller2019}.
$R$ is the number of Monte Carlo replications and $B$ the number of bootstrap resamples.
For LL, the scalar parameters $\phi,\alpha$ and the test of $H_0:\phi=1$ likewise use analytic delta-method (Wald) inference.}
\label{tab:sim-inference}
\begin{tabular}{ll l l c}
\hline
Estimator & Pointwise CI & Simultaneous band & $R$ / $n$ & $B$ \\
\hline
\multicolumn{5}{l}{\textit{Analytic (delta method; no resampling)}}\\
\quad LL-OP     & Wald       & Gaussian plug-in sup-$t$ & $1000$ / $\{500,1000,2000\}$ & --- \\
\quad LL-PS-stc & Wald       & Gaussian plug-in sup-$t$ & $1000$ / $\{500,1000,2000\}$ & --- \\
\multicolumn{5}{l}{\textit{Nonparametric bootstrap}}\\
% \quad LL-PS-dyn & percentile & calibrated percentile    & $500$ / $\{1000,2000\}$       & $200$ \\
\quad SL-OP     & percentile & calibrated percentile    & $500$ / $\{1000,2000\}$       & $200$ \\
\quad SL-PS-stc & percentile & calibrated percentile    & $500$ / $\{1000,2000\}$       & $200$ \\
% \quad SL-PS-dyn & percentile & calibrated percentile    & $500$ / $\{1000,2000\}$       & $200$ \\
\hline
\end{tabular}
\end{table}

\paragraph{Performance metrics.}
For each evaluation point $k$ (rows indexed by $x$, or by the grid $(x,m,m_{\rmref})$ for continuous $M$), we aggregate the following over $R$ replications (with estimate $\hat L_k^{(r)}$, truth $L_k$, reported SE $\widehat{\rm se}_k^{(r)}$, and 95\% CI $[\ell_k^{(r)},u_k^{(r)}]$): $\mathrm{bias}=\tfrac1R\sum_r(\hat L_k^{(r)}-L_k)$, $\mathrm{RMSE}=\{\tfrac1R\sum_r(\hat L_k^{(r)}-L_k)^2\}^{1/2}$, empirical SE (empSE) $=\mathrm{sd}_r(\hat L_k^{(r)})$, mean reported SE (meanSE) $=\tfrac1R\sum_r\widehat{\rm se}_k^{(r)}$, pointwise coverage $=\tfrac1R\sum_r\mathbf1\{\ell_k^{(r)}\le L_k\le u_k^{(r)}\}$, and width $=\tfrac1R\sum_r(u_k^{(r)}-\ell_k^{(r)})$.
Let $\mathcal K_{\rm sup}$ denote the set of interior-supported evaluation points, whose target scores lie in the interior of the reference score range, and let $[\ell_{k,\mathrm{sim}}^{(r)},u_{k,\mathrm{sim}}^{(r)}]$ denote the corresponding simultaneous-band endpoints.
The simultaneous coverage is $\tfrac1R\sum_r\mathbf1\{\forall k\in\mathcal K_{\rm sup}:\ \ell_{k,\mathrm{sim}}^{(r)}\le L_k\le u_{k,\mathrm{sim}}^{(r)}\}$.
For LL we additionally aggregate the bias, RMSE, empSE, and coverage of the scalars $\phi$ and $\alpha$ and the rejection rate of $H_0:\phi=1$ (Type I error when $\phi_{\rm true}=1$, power when $\phi_{\rm true}\ne1$).
For continuous $M$ (S3a, S3b) the grid-averaged ($x\times$ source-$m$) metrics are reported by $m_{\rmref}$.

These primary performance metrics are defined only for points at which the target score belongs to the interior of the reference score range; the simultaneous event likewise ranges only over that interior-supported subset.
An attainable boundary root defines a mapping but is excluded from these regular performance summaries.
If the target lies below the reference range, the simulation's practical output applies the prespecified lower-bound fallback $X=0$, projecting the point estimate and both pointwise and simultaneous interval endpoints to the nonnegative measurement domain.
``Supp'' reports the number of interior-supported points over the requested grid size, and ``T-clamp'' the fraction of grid truths requiring the fallback.
For analytic LL--OP and LL--PS-stc, the reported SE remains the unprojected working-model SE.
The full-grid difference from the reference-unit conversion is kept separate from interior-supported bias and coverage; no regular coverage interpretation is assigned to a boundary-clamped interval.

\paragraph{Comprehensive results.}
Web Tables~\ref{tab:sim-comp-obs}--\ref{tab:sim-comp-params} report every scenario $\times$ learner $\times$ reported policy $\times$ target $\times$ $n$ ($\times\,m_{\rmref}$) cell: the without-treatment mapping accuracy (Web Table~\ref{tab:sim-comp-obs}), the difference from the reference-unit conversion for every reported method (Web Table~\ref{tab:sim-comp-recovery}), the with-treatment OP versus fixed-treatment PS accuracy (Web Table~\ref{tab:sim-comp-treat}), and the scalar $\phi/\alpha$ calibration for LL in the linear scenarios S1a and S2 (Web Table~\ref{tab:sim-comp-params}).
Both the absolute and origin-referenced targets are reported.
For the spline learner SL the origin-referenced mapping is obtained by anchoring the fitted $X$-spline at $x=0$, an extrapolation beyond the observed support; SL-origin is reported nonetheless to make this instability explicit---its bias and under-coverage relative to the well-behaved SL-absolute and LL-origin mappings---rather than to silently suppress it.
They corroborate the main-text summaries: the linear $X\times M$-interaction continuous scenario S3a is recovered by the interaction-capable LL at every $m_{\rmref}$ (most accurately at the interior $m_{\rmref}=0.5$); under the hinge scenarios the spline learner SL holds simultaneous coverage under the nonlinear inversion at the cost of higher variance (reducing the inversion bias in the continuous S3b), while the linear model's simultaneous coverage degrades with $n$ as the approximation bias becomes detectable; and policy-standardized estimators are evaluated against their corresponding common-policy truths.
For the no-treatment OP mapping and the PS-stc mapping, the difference from the reference-unit conversion is negligible under G1--G4 (S1, S3), confirming that the prognosis-equivalent mapping coincides with that conversion, whereas the S2 departures produce the systematic, target-asymmetric offsets of Proposition~3.
With-treatment OP retains the natural treatment mechanism; its reported difference from the reference-unit conversion and its bias, RMSE, and coverage relative to the scenario-specific structural/common-policy benchmark are comparator diagnostics, not performance measures for the OP estimand itself.
The recorded absolute-output failure rate is zero in every cell under the definition above.

\begingroup\scriptsize
\setlength{\tabcolsep}{2.5pt}
\begin{longtable}{llccccrrrrrrrr}
\caption{Mapping accuracy without treatment (observed practice), all scenarios, grid-averaged.
Unsuffixed performance summaries are averaged only over regular interior-supported inversion points.
Supp is the number of regular interior-supported points over the requested grid size.
Bias, RMSE, empirical SE (empSE), mean reported SE (meanSE) and band width are on the mapping scale; Cov$_{\rm pt}$/Cov$_{\rm sim}$ are interior-supported pointwise/simultaneous coverage (\%).
When a target score lies below the reference range, the operational output is clamped at $X=0$; T-clamp is the fraction of grid truths requiring this fallback.
An attainable boundary root can define the mapping but is excluded from regular performance summaries.
The spline learner SL is evaluated only at $n\in\{1000,2000\}$.}
\label{tab:sim-comp-obs}\\
\hline
Scn & Learner & Target & $m_{\rmref}$ & $n$ & Supp & Bias & RMSE & empSE & meanSE & Cov$_{\rm pt}$ & Cov$_{\rm sim}$ & Width & T-clamp \\
\hline\endfirsthead
\multicolumn{14}{l}{\small\itshape continued}\\ \hline
Scn & Learner & Target & $m_{\rmref}$ & $n$ & Supp & Bias & RMSE & empSE & meanSE & Cov$_{\rm pt}$ & Cov$_{\rm sim}$ & Width & T-clamp \\
\hline\endhead
S1a & LL & abs & 0 & 500 & 5/5 & -0.007 & 0.249 & 0.248 & 0.237 & 96 & 96 & 0.93 & 0 \\
 & LL & abs & 0 & 1000 & 5/5 & -0.001 & 0.164 & 0.164 & 0.160 & 95 & 95 & 0.63 & 0 \\
 & LL & abs & 0 & 2000 & 5/5 & 0.003 & 0.110 & 0.110 & 0.110 & 96 & 95 & 0.43 & 0 \\
 & LL & origin & 0 & 500 & 5/5 & 0.058 & 0.411 & 0.407 & 0.375 & 95 & 95 & 1.47 & 0 \\
 & LL & origin & 0 & 1000 & 5/5 & 0.028 & 0.256 & 0.255 & 0.250 & 96 & 96 & 0.98 & 0 \\
 & LL & origin & 0 & 2000 & 5/5 & 0.002 & 0.169 & 0.169 & 0.170 & 96 & 96 & 0.67 & 0 \\
S1b & LL & abs & 0 & 500 & 5/5 & 0.064 & 0.310 & 0.293 & 0.287 & 94 & 92 & 1.13 & 0 \\
 & LL & abs & 0 & 1000 & 5/5 & 0.056 & 0.219 & 0.197 & 0.194 & 93 & 88 & 0.76 & 0 \\
 & LL & abs & 0 & 2000 & 5/5 & 0.049 & 0.165 & 0.137 & 0.135 & 90 & 82 & 0.53 & 0 \\
 & LL & origin & 0 & 500 & 5/5 & -0.185 & 0.534 & 0.475 & 0.480 & 85 & 73 & 1.88 & 0 \\
 & LL & origin & 0 & 1000 & 5/5 & -0.219 & 0.425 & 0.335 & 0.321 & 74 & 50 & 1.26 & 0 \\
 & LL & origin & 0 & 2000 & 5/5 & -0.230 & 0.346 & 0.227 & 0.223 & 62 & 28 & 0.88 & 0 \\
 & SL & abs & 0 & 1000 & 5/5 & -0.082 & 0.394 & 0.375 & 0.376 & 96 & 94 & 1.38 & 0 \\
 & SL & abs & 0 & 2000 & 5/5 & -0.085 & 0.307 & 0.291 & 0.292 & 94 & 92 & 1.06 & 0 \\
 & SL & origin & 0 & 1000 & 5/5 & 0.255 & 0.924 & 0.886 & 0.835 & 97 & 96 & 2.88 & 0 \\
 & SL & origin & 0 & 2000 & 5/5 & 0.106 & 0.699 & 0.689 & 0.681 & 97 & 97 & 2.48 & 0 \\
S2a & LL & abs & 0 & 500 & 3/5 & 0.006 & 0.212 & 0.212 & 0.231 & 96 & 96 & 0.80 & 40 \\
 & LL & abs & 0 & 1000 & 3/5 & 0.002 & 0.156 & 0.156 & 0.156 & 95 & 96 & 0.57 & 40 \\
 & LL & abs & 0 & 2000 & 3/5 & 0.002 & 0.112 & 0.112 & 0.109 & 94 & 96 & 0.41 & 40 \\
 & LL & origin & 0 & 500 & 5/5 & 0.065 & 0.405 & 0.400 & 0.384 & 95 & 95 & 1.49 & 0 \\
 & LL & origin & 0 & 1000 & 5/5 & 0.021 & 0.252 & 0.251 & 0.254 & 96 & 97 & 0.99 & 0 \\
 & LL & origin & 0 & 2000 & 5/5 & 0.012 & 0.181 & 0.181 & 0.176 & 95 & 96 & 0.69 & 0 \\
S2b & LL & abs & 0 & 500 & 5/5 & 0.008 & 0.222 & 0.222 & 0.243 & 95 & 95 & 0.81 & 0 \\
 & LL & abs & 0 & 1000 & 5/5 & -0.004 & 0.153 & 0.152 & 0.163 & 96 & 95 & 0.58 & 0 \\
 & LL & abs & 0 & 2000 & 5/5 & -0.000 & 0.109 & 0.109 & 0.113 & 95 & 95 & 0.42 & 0 \\
 & LL & origin & 0 & 500 & 5/5 & 0.042 & 0.363 & 0.360 & 0.349 & 96 & 96 & 1.29 & 0 \\
 & LL & origin & 0 & 1000 & 5/5 & 0.021 & 0.241 & 0.240 & 0.232 & 96 & 96 & 0.90 & 0 \\
 & LL & origin & 0 & 2000 & 5/5 & 0.006 & 0.162 & 0.162 & 0.160 & 95 & 95 & 0.63 & 0 \\
S2c & LL & abs & 0 & 500 & 5/5 & 0.020 & 0.216 & 0.213 & 0.248 & 96 & 95 & 0.79 & 0 \\
 & LL & abs & 0 & 1000 & 5/5 & 0.012 & 0.153 & 0.152 & 0.170 & 96 & 96 & 0.57 & 0 \\
 & LL & abs & 0 & 2000 & 5/5 & 0.002 & 0.109 & 0.108 & 0.118 & 95 & 96 & 0.41 & 0 \\
 & LL & origin & 0 & 500 & 5/5 & 0.044 & 0.384 & 0.381 & 0.368 & 94 & 95 & 1.43 & 0 \\
 & LL & origin & 0 & 1000 & 5/5 & 0.030 & 0.256 & 0.255 & 0.249 & 96 & 96 & 0.98 & 0 \\
 & LL & origin & 0 & 2000 & 5/5 & 0.009 & 0.174 & 0.174 & 0.172 & 96 & 96 & 0.67 & 0 \\
S3a & LL & abs & 0 & 500 & 15/15 & 0.003 & 0.300 & 0.296 & 0.280 & 95 & 92 & 1.10 & 0 \\
 & LL & abs & 0 & 1000 & 15/15 & 0.002 & 0.164 & 0.163 & 0.162 & 96 & 94 & 0.63 & 0 \\
 & LL & abs & 0 & 2000 & 15/15 & 0.000 & 0.106 & 0.105 & 0.107 & 96 & 96 & 0.42 & 0 \\
 & LL & abs & 0.5 & 500 & 15/15 & 0.000 & 0.103 & 0.102 & 0.104 & 98 & 97 & 0.41 & 0 \\
 & LL & abs & 0.5 & 1000 & 15/15 & 0.000 & 0.069 & 0.069 & 0.070 & 97 & 96 & 0.27 & 0 \\
 & LL & abs & 0.5 & 2000 & 15/15 & 0.000 & 0.047 & 0.047 & 0.048 & 97 & 96 & 0.19 & 0 \\
 & LL & abs & 1 & 500 & 15/15 & 0.401 & 3.617 & 3.586 & 8.397 & 94 & 92 & 32.92 & 0 \\
 & LL & abs & 1 & 1000 & 15/15 & 0.172 & 2.085 & 2.076 & 2.407 & 94 & 92 & 9.43 & 0 \\
 & LL & abs & 1 & 2000 & 15/15 & 0.039 & 0.281 & 0.277 & 0.270 & 94 & 94 & 1.06 & 0 \\
 & LL & origin & 0 & 500 & 15/15 & 0.111 & 0.560 & 0.550 & 0.512 & 93 & 93 & 2.01 & 0 \\
 & LL & origin & 0 & 1000 & 15/15 & 0.050 & 0.289 & 0.285 & 0.282 & 94 & 95 & 1.10 & 0 \\
 & LL & origin & 0 & 2000 & 15/15 & 0.028 & 0.184 & 0.182 & 0.183 & 96 & 96 & 0.72 & 0 \\
 & LL & origin & 0.5 & 500 & 15/15 & 0.000 & 0.175 & 0.175 & 0.178 & 98 & 97 & 0.70 & 0 \\
 & LL & origin & 0.5 & 1000 & 15/15 & 0.000 & 0.117 & 0.117 & 0.120 & 98 & 97 & 0.47 & 0 \\
 & LL & origin & 0.5 & 2000 & 15/15 & 0.000 & 0.081 & 0.081 & 0.083 & 97 & 96 & 0.32 & 0 \\
 & LL & origin & 1 & 500 & 15/15 & 0.911 & 6.538 & 6.478 & 15.497 & 91 & 92 & 60.75 & 0 \\
 & LL & origin & 1 & 1000 & 15/15 & 0.382 & 4.145 & 4.130 & 4.710 & 93 & 93 & 18.46 & 0 \\
 & LL & origin & 1 & 2000 & 15/15 & 0.080 & 0.475 & 0.469 & 0.453 & 92 & 93 & 1.77 & 0 \\
 & SL & abs & 0 & 1000 & 15/15 & 0.017 & 0.368 & 0.357 & 0.352 & 97 & 96 & 1.30 & 0 \\
 & SL & abs & 0 & 2000 & 15/15 & 0.009 & 0.288 & 0.280 & 0.282 & 96 & 94 & 1.02 & 0 \\
 & SL & abs & 0.5 & 1000 & 15/15 & -0.002 & 0.151 & 0.150 & 0.169 & 97 & 94 & 0.64 & 0 \\
 & SL & abs & 0.5 & 2000 & 15/15 & -0.002 & 0.100 & 0.100 & 0.122 & 97 & 91 & 0.46 & 0 \\
 & SL & abs & 1 & 1000 & 15/15 & 0.091 & 0.581 & 0.559 & 0.482 & 93 & 64 & 1.70 & 0 \\
 & SL & abs & 1 & 2000 & 15/15 & 0.133 & 0.635 & 0.611 & 0.477 & 94 & 83 & 1.62 & 0 \\
 & SL & origin & 0 & 1000 & 15/15 & 0.071 & 0.679 & 0.669 & 0.609 & 97 & 99 & 2.16 & 0 \\
 & SL & origin & 0 & 2000 & 15/15 & 0.047 & 0.571 & 0.566 & 0.522 & 97 & 98 & 1.90 & 0 \\
 & SL & origin & 0.5 & 1000 & 15/15 & 0.042 & 0.341 & 0.338 & 0.351 & 98 & 97 & 1.30 & 0 \\
 & SL & origin & 0.5 & 2000 & 15/15 & 0.020 & 0.254 & 0.254 & 0.270 & 98 & 96 & 1.02 & 0 \\
 & SL & origin & 1 & 1000 & 15/15 & -0.224 & 0.906 & 0.862 & 0.755 & 92 & 64 & 2.57 & 0 \\
 & SL & origin & 1 & 2000 & 15/15 & -0.024 & 0.817 & 0.809 & 0.724 & 95 & 86 & 2.51 & 0 \\
S3b & LL & abs & 0 & 500 & 15/15 & 0.055 & 0.433 & 0.425 & 0.350 & 95 & 91 & 1.37 & 0 \\
 & LL & abs & 0 & 1000 & 15/15 & 0.027 & 0.164 & 0.156 & 0.151 & 96 & 92 & 0.59 & 0 \\
 & LL & abs & 0 & 2000 & 15/15 & 0.029 & 0.115 & 0.104 & 0.101 & 94 & 91 & 0.40 & 0 \\
 & LL & origin & 0 & 500 & 15/15 & -0.003 & 0.831 & 0.827 & 0.639 & 82 & 73 & 2.51 & 0 \\
 & LL & origin & 0 & 1000 & 15/15 & -0.082 & 0.294 & 0.271 & 0.251 & 79 & 64 & 0.98 & 0 \\
 & LL & origin & 0 & 2000 & 15/15 & -0.098 & 0.212 & 0.173 & 0.165 & 73 & 52 & 0.65 & 0 \\
 & SL & abs & 0 & 1000 & 15/15 & 0.000 & 0.350 & 0.342 & 0.350 & 97 & 96 & 1.32 & 0 \\
 & SL & abs & 0 & 2000 & 15/15 & -0.016 & 0.234 & 0.228 & 0.266 & 96 & 95 & 0.98 & 0 \\
 & SL & origin & 0 & 1000 & 15/15 & -0.052 & 0.628 & 0.621 & 0.597 & 98 & 98 & 2.14 & 0 \\
 & SL & origin & 0 & 2000 & 15/15 & 0.005 & 0.526 & 0.523 & 0.514 & 97 & 99 & 1.88 & 0 \\
\hline
\end{longtable}
\endgroup
\begingroup\scriptsize
\setlength{\tabcolsep}{1.4pt}
\begin{longtable}{lllllcccrrrrr}
\caption{Difference from the reference-unit conversion, all scenarios $\times$ learner $\times$ axis $\times$ policy $\times$ target $\times$ $n$ ($\times\,m_{\rmref}$); grid-averaged.
The signed supported-point difference is the grid average, over regular interior-supported inversion points, of $\tfrac1R\sum_r\hat L_k^{(r)}-L_k^{\rm rec}$, where $L^{\rm rec}=g_{m_{\rmref}}\!\circ g_m^{-1}$ (Proposition~3); the corresponding absolute difference averages the absolute value of this point-specific quantity.
The signed full-grid difference instead averages $\hat L_k^{\rm op}-L_k^{\rm rec}$ over the full requested grid after the prespecified $X=0$ lower-bound fallback.
For the no-treatment OP mapping and the PS-stc mapping, G1--G4 (S1, S3) imply that the difference is approximately zero (the prognosis-equivalent mapping coincides with the reference-unit conversion); for those same targets, the targeted departures (S2) produce systematic recovery offsets that appear asymmetrically across the absolute (abs) and origin-referenced (origin) targets.
Axis is without (no-trt) or with (trt) treatment.
With-treatment OP retains the natural treatment mechanism: its reported difference here, and its bias, RMSE, and coverage in Table~\ref{tab:sim-comp-treat}, are comparator diagnostics relative to the scenario-specific structural/common-policy benchmark, not bias, RMSE, or coverage for the OP estimand itself.
PS-stc is the static policy-standardized mapping.
Supp and Cov$_{\rm pt}$ refer to the regular interior-supported range; an attainable boundary root can define the mapping but is excluded from regular performance summaries.
T-clamp is the percentage of grid truths requiring the operational fallback.
A dash in the full-grid difference denotes a legacy result object that predates lower-bound projection and did not record that diagnostic.
The spline learner SL is at $n\in\{1000,2000\}$.}
\label{tab:sim-comp-recovery}\\
\hline
Scn & Learner & Axis & Policy & Target & $m_{\rmref}$ & $n$ & Supp & \multicolumn{3}{c}{Difference from $L^{\rm rec}$} & Cov$_{\rm pt}$ & T-clamp \\
 & & & & & & & & \shortstack{Mean signed\\(supported)} & \shortstack{Mean absolute\\(supported)} & \shortstack{Mean signed\\(full grid,\\with fallback)} & & \\
\hline\endfirsthead
\multicolumn{13}{l}{\small\itshape continued}\\ \hline
Scn & Learner & Axis & Policy & Target & $m_{\rmref}$ & $n$ & Supp & \multicolumn{3}{c}{Difference from $L^{\rm rec}$} & Cov$_{\rm pt}$ & T-clamp \\
 & & & & & & & & \shortstack{Mean signed\\(supported)} & \shortstack{Mean absolute\\(supported)} & \shortstack{Mean signed\\(full grid,\\with fallback)} & & \\
\hline\endhead
S1a & LL & no-trt & OP & abs & 0 & 500 & 5/5 & -0.007 & 0.024 & --- & 96 & 0 \\
 & LL & no-trt & OP & abs & 0 & 1000 & 5/5 & -0.001 & 0.012 & --- & 95 & 0 \\
 & LL & no-trt & OP & abs & 0 & 2000 & 5/5 & 0.003 & 0.003 & --- & 96 & 0 \\
 & LL & no-trt & OP & origin & 0 & 500 & 5/5 & 0.058 & 0.058 & --- & 95 & 0 \\
 & LL & no-trt & OP & origin & 0 & 1000 & 5/5 & 0.028 & 0.028 & --- & 96 & 0 \\
 & LL & no-trt & OP & origin & 0 & 2000 & 5/5 & 0.002 & 0.002 & --- & 96 & 0 \\
 & LL & trt & OP & abs & 0 & 500 & 5/5 & -0.210 & 0.210 & --- & 90 & 0 \\
 & LL & trt & OP & abs & 0 & 1000 & 5/5 & -0.206 & 0.206 & --- & 79 & 0 \\
 & LL & trt & OP & abs & 0 & 2000 & 5/5 & -0.204 & 0.204 & --- & 60 & 0 \\
 & LL & trt & OP & origin & 0 & 500 & 5/5 & 0.038 & 0.038 & --- & 96 & 0 \\
 & LL & trt & OP & origin & 0 & 1000 & 5/5 & 0.003 & 0.003 & --- & 96 & 0 \\
 & LL & trt & OP & origin & 0 & 2000 & 5/5 & -0.005 & 0.005 & --- & 95 & 0 \\
 & LL & trt & PS-stc & abs & 0 & 500 & 5/5 & -0.012 & 0.027 & --- & 97 & 0 \\
 & LL & trt & PS-stc & abs & 0 & 1000 & 5/5 & -0.004 & 0.011 & --- & 96 & 0 \\
 & LL & trt & PS-stc & abs & 0 & 2000 & 5/5 & -0.001 & 0.007 & --- & 96 & 0 \\
 & LL & trt & PS-stc & origin & 0 & 500 & 5/5 & 0.061 & 0.061 & --- & 96 & 0 \\
 & LL & trt & PS-stc & origin & 0 & 1000 & 5/5 & 0.026 & 0.026 & --- & 97 & 0 \\
 & LL & trt & PS-stc & origin & 0 & 2000 & 5/5 & 0.017 & 0.017 & --- & 96 & 0 \\
S1b & LL & no-trt & OP & abs & 0 & 500 & 5/5 & 0.064 & 0.095 & --- & 94 & 0 \\
 & LL & no-trt & OP & abs & 0 & 1000 & 5/5 & 0.056 & 0.089 & --- & 93 & 0 \\
 & LL & no-trt & OP & abs & 0 & 2000 & 5/5 & 0.049 & 0.086 & --- & 90 & 0 \\
 & LL & no-trt & OP & origin & 0 & 500 & 5/5 & -0.185 & 0.187 & --- & 85 & 0 \\
 & LL & no-trt & OP & origin & 0 & 1000 & 5/5 & -0.219 & 0.219 & --- & 74 & 0 \\
 & LL & no-trt & OP & origin & 0 & 2000 & 5/5 & -0.230 & 0.230 & --- & 62 & 0 \\
 & LL & trt & OP & abs & 0 & 500 & 5/5 & -0.144 & 0.162 & --- & 92 & 0 \\
 & LL & trt & OP & abs & 0 & 1000 & 5/5 & -0.149 & 0.150 & --- & 83 & 0 \\
 & LL & trt & OP & abs & 0 & 2000 & 5/5 & -0.154 & 0.154 & --- & 67 & 0 \\
 & LL & trt & OP & origin & 0 & 500 & 5/5 & -0.187 & 0.188 & --- & 84 & 0 \\
 & LL & trt & OP & origin & 0 & 1000 & 5/5 & -0.241 & 0.241 & --- & 77 & 0 \\
 & LL & trt & OP & origin & 0 & 2000 & 5/5 & -0.288 & 0.288 & --- & 60 & 0 \\
 & LL & trt & PS-stc & abs & 0 & 500 & 5/5 & 0.025 & 0.094 & --- & 95 & 0 \\
 & LL & trt & PS-stc & abs & 0 & 1000 & 5/5 & 0.048 & 0.090 & --- & 95 & 0 \\
 & LL & trt & PS-stc & abs & 0 & 2000 & 5/5 & 0.041 & 0.082 & --- & 91 & 0 \\
 & LL & trt & PS-stc & origin & 0 & 500 & 5/5 & -0.136 & 0.170 & --- & 87 & 0 \\
 & LL & trt & PS-stc & origin & 0 & 1000 & 5/5 & -0.200 & 0.200 & --- & 79 & 0 \\
 & LL & trt & PS-stc & origin & 0 & 2000 & 5/5 & -0.250 & 0.250 & --- & 62 & 0 \\
 & SL & no-trt & OP & abs & 0 & 1000 & 5/5 & -0.082 & 0.114 & --- & 96 & 0 \\
 & SL & no-trt & OP & abs & 0 & 2000 & 5/5 & -0.085 & 0.085 & --- & 94 & 0 \\
 & SL & no-trt & OP & origin & 0 & 1000 & 5/5 & 0.255 & 0.255 & --- & 97 & 0 \\
 & SL & no-trt & OP & origin & 0 & 2000 & 5/5 & 0.106 & 0.106 & --- & 97 & 0 \\
 & SL & trt & OP & abs & 0 & 1000 & 5/5 & -0.306 & 0.306 & --- & 84 & 0 \\
 & SL & trt & OP & abs & 0 & 2000 & 5/5 & -0.296 & 0.296 & --- & 75 & 0 \\
 & SL & trt & OP & origin & 0 & 1000 & 5/5 & 0.309 & 0.309 & --- & 97 & 0 \\
 & SL & trt & OP & origin & 0 & 2000 & 5/5 & 0.255 & 0.255 & --- & 96 & 0 \\
 & SL & trt & PS-stc & abs & 0 & 1000 & 5/5 & -0.113 & 0.152 & --- & 97 & 0 \\
 & SL & trt & PS-stc & abs & 0 & 2000 & 5/5 & -0.101 & 0.118 & --- & 94 & 0 \\
 & SL & trt & PS-stc & origin & 0 & 1000 & 5/5 & 0.290 & 0.290 & --- & 99 & 0 \\
 & SL & trt & PS-stc & origin & 0 & 2000 & 5/5 & 0.245 & 0.245 & --- & 98 & 0 \\
S2a & LL & no-trt & OP & abs & 0 & 500 & 3/5 & -0.709 & 0.709 & -0.604 & 96 & 40 \\
 & LL & no-trt & OP & abs & 0 & 1000 & 3/5 & -0.712 & 0.712 & -0.615 & 95 & 40 \\
 & LL & no-trt & OP & abs & 0 & 2000 & 3/5 & -0.712 & 0.712 & -0.619 & 94 & 40 \\
 & LL & no-trt & OP & origin & 0 & 500 & 5/5 & 0.065 & 0.065 & 0.065 & 95 & 0 \\
 & LL & no-trt & OP & origin & 0 & 1000 & 5/5 & 0.021 & 0.021 & 0.021 & 96 & 0 \\
 & LL & no-trt & OP & origin & 0 & 2000 & 5/5 & 0.012 & 0.012 & 0.012 & 95 & 0 \\
 & LL & trt & OP & abs & 0 & 500 & 3/5 & -0.935 & 0.935 & -0.754 & 95 & 40 \\
 & LL & trt & OP & abs & 0 & 1000 & 3/5 & -0.947 & 0.947 & -0.766 & 81 & 40 \\
 & LL & trt & OP & abs & 0 & 2000 & 3/5 & -0.950 & 0.950 & -0.770 & 56 & 40 \\
 & LL & trt & OP & origin & 0 & 500 & 5/5 & 0.041 & 0.041 & 0.041 & 95 & 0 \\
 & LL & trt & OP & origin & 0 & 1000 & 5/5 & 0.003 & 0.003 & 0.003 & 95 & 0 \\
 & LL & trt & OP & origin & 0 & 2000 & 5/5 & -0.004 & 0.004 & -0.004 & 94 & 0 \\
 & LL & trt & PS-stc & abs & 0 & 500 & 3/5 & -0.718 & 0.718 & -0.602 & 94 & 40 \\
 & LL & trt & PS-stc & abs & 0 & 1000 & 3/5 & -0.720 & 0.720 & -0.615 & 95 & 40 \\
 & LL & trt & PS-stc & abs & 0 & 2000 & 3/5 & -0.717 & 0.717 & -0.620 & 95 & 40 \\
 & LL & trt & PS-stc & origin & 0 & 500 & 5/5 & 0.078 & 0.078 & 0.078 & 96 & 0 \\
 & LL & trt & PS-stc & origin & 0 & 1000 & 5/5 & 0.033 & 0.033 & 0.033 & 96 & 0 \\
 & LL & trt & PS-stc & origin & 0 & 2000 & 5/5 & 0.019 & 0.019 & 0.019 & 96 & 0 \\
S2b & LL & no-trt & OP & abs & 0 & 500 & 5/5 & -0.277 & 0.277 & -0.277 & 95 & 0 \\
 & LL & no-trt & OP & abs & 0 & 1000 & 5/5 & -0.290 & 0.290 & -0.290 & 96 & 0 \\
 & LL & no-trt & OP & abs & 0 & 2000 & 5/5 & -0.286 & 0.286 & -0.286 & 95 & 0 \\
 & LL & no-trt & OP & origin & 0 & 500 & 5/5 & -0.244 & 0.244 & -0.244 & 96 & 0 \\
 & LL & no-trt & OP & origin & 0 & 1000 & 5/5 & -0.264 & 0.264 & -0.264 & 96 & 0 \\
 & LL & no-trt & OP & origin & 0 & 2000 & 5/5 & -0.280 & 0.280 & -0.280 & 95 & 0 \\
 & LL & trt & OP & abs & 0 & 500 & 5/5 & -0.535 & 0.535 & --- & 94 & 0 \\
 & LL & trt & OP & abs & 0 & 1000 & 5/5 & -0.518 & 0.518 & --- & 80 & 0 \\
 & LL & trt & OP & abs & 0 & 2000 & 5/5 & -0.506 & 0.506 & --- & 61 & 0 \\
 & LL & trt & OP & origin & 0 & 500 & 5/5 & -0.275 & 0.275 & --- & 96 & 0 \\
 & LL & trt & OP & origin & 0 & 1000 & 5/5 & -0.311 & 0.311 & --- & 95 & 0 \\
 & LL & trt & OP & origin & 0 & 2000 & 5/5 & -0.312 & 0.312 & --- & 95 & 0 \\
 & LL & trt & PS-stc & abs & 0 & 500 & 5/5 & -0.318 & 0.318 & --- & 97 & 0 \\
 & LL & trt & PS-stc & abs & 0 & 1000 & 5/5 & -0.301 & 0.301 & --- & 96 & 0 \\
 & LL & trt & PS-stc & abs & 0 & 2000 & 5/5 & -0.291 & 0.291 & --- & 96 & 0 \\
 & LL & trt & PS-stc & origin & 0 & 500 & 5/5 & -0.254 & 0.254 & --- & 97 & 0 \\
 & LL & trt & PS-stc & origin & 0 & 1000 & 5/5 & -0.273 & 0.273 & --- & 96 & 0 \\
 & LL & trt & PS-stc & origin & 0 & 2000 & 5/5 & -0.274 & 0.274 & --- & 96 & 0 \\
S2c & LL & no-trt & OP & abs & 0 & 500 & 5/5 & 0.020 & 0.025 & 0.020 & 96 & 0 \\
 & LL & no-trt & OP & abs & 0 & 1000 & 5/5 & 0.012 & 0.016 & 0.012 & 96 & 0 \\
 & LL & no-trt & OP & abs & 0 & 2000 & 5/5 & 0.002 & 0.006 & 0.002 & 95 & 0 \\
 & LL & no-trt & OP & origin & 0 & 500 & 5/5 & 0.244 & 0.244 & 0.244 & 94 & 0 \\
 & LL & no-trt & OP & origin & 0 & 1000 & 5/5 & 0.230 & 0.230 & 0.230 & 96 & 0 \\
 & LL & no-trt & OP & origin & 0 & 2000 & 5/5 & 0.209 & 0.209 & 0.209 & 96 & 0 \\
 & LL & trt & OP & abs & 0 & 500 & 5/5 & -0.172 & 0.172 & -0.172 & 90 & 0 \\
 & LL & trt & OP & abs & 0 & 1000 & 5/5 & -0.187 & 0.187 & -0.187 & 77 & 0 \\
 & LL & trt & OP & abs & 0 & 2000 & 5/5 & -0.189 & 0.189 & -0.189 & 58 & 0 \\
 & LL & trt & OP & origin & 0 & 500 & 5/5 & 0.242 & 0.242 & 0.242 & 94 & 0 \\
 & LL & trt & OP & origin & 0 & 1000 & 5/5 & 0.216 & 0.216 & 0.216 & 95 & 0 \\
 & LL & trt & OP & origin & 0 & 2000 & 5/5 & 0.187 & 0.187 & 0.187 & 95 & 0 \\
 & LL & trt & PS-stc & abs & 0 & 500 & 5/5 & 0.013 & 0.031 & 0.013 & 96 & 0 \\
 & LL & trt & PS-stc & abs & 0 & 1000 & 5/5 & 0.003 & 0.019 & 0.003 & 96 & 0 \\
 & LL & trt & PS-stc & abs & 0 & 2000 & 5/5 & 0.008 & 0.009 & 0.008 & 95 & 0 \\
 & LL & trt & PS-stc & origin & 0 & 500 & 5/5 & 0.264 & 0.264 & 0.264 & 96 & 0 \\
 & LL & trt & PS-stc & origin & 0 & 1000 & 5/5 & 0.236 & 0.236 & 0.236 & 96 & 0 \\
 & LL & trt & PS-stc & origin & 0 & 2000 & 5/5 & 0.207 & 0.207 & 0.207 & 96 & 0 \\
S3a & LL & no-trt & OP & abs & 0 & 500 & 15/15 & 0.003 & 0.045 & --- & 95 & 0 \\
 & LL & no-trt & OP & abs & 0 & 1000 & 15/15 & 0.002 & 0.020 & --- & 96 & 0 \\
 & LL & no-trt & OP & abs & 0 & 2000 & 15/15 & 0.000 & 0.011 & --- & 96 & 0 \\
 & LL & no-trt & OP & abs & 0.5 & 500 & 15/15 & 0.000 & 0.006 & --- & 98 & 0 \\
 & LL & no-trt & OP & abs & 0.5 & 1000 & 15/15 & 0.000 & 0.002 & --- & 97 & 0 \\
 & LL & no-trt & OP & abs & 0.5 & 2000 & 15/15 & 0.000 & 0.001 & --- & 97 & 0 \\
 & LL & no-trt & OP & abs & 1 & 500 & 15/15 & 0.401 & 0.484 & --- & 94 & 0 \\
 & LL & no-trt & OP & abs & 1 & 1000 & 15/15 & 0.172 & 0.205 & --- & 94 & 0 \\
 & LL & no-trt & OP & abs & 1 & 2000 & 15/15 & 0.039 & 0.045 & --- & 94 & 0 \\
 & LL & no-trt & OP & origin & 0 & 500 & 15/15 & 0.111 & 0.111 & --- & 93 & 0 \\
 & LL & no-trt & OP & origin & 0 & 1000 & 15/15 & 0.050 & 0.050 & --- & 94 & 0 \\
 & LL & no-trt & OP & origin & 0 & 2000 & 15/15 & 0.028 & 0.028 & --- & 96 & 0 \\
 & LL & no-trt & OP & origin & 0.5 & 500 & 15/15 & 0.000 & 0.010 & --- & 98 & 0 \\
 & LL & no-trt & OP & origin & 0.5 & 1000 & 15/15 & 0.000 & 0.001 & --- & 98 & 0 \\
 & LL & no-trt & OP & origin & 0.5 & 2000 & 15/15 & 0.000 & 0.002 & --- & 97 & 0 \\
 & LL & no-trt & OP & origin & 1 & 500 & 15/15 & 0.911 & 0.911 & --- & 91 & 0 \\
 & LL & no-trt & OP & origin & 1 & 1000 & 15/15 & 0.382 & 0.382 & --- & 93 & 0 \\
 & LL & no-trt & OP & origin & 1 & 2000 & 15/15 & 0.080 & 0.080 & --- & 92 & 0 \\
 & LL & trt & OP & abs & 0 & 500 & 15/15 & -0.095 & 0.095 & --- & 94 & 0 \\
 & LL & trt & OP & abs & 0 & 1000 & 15/15 & -0.104 & 0.104 & --- & 92 & 0 \\
 & LL & trt & OP & abs & 0 & 2000 & 15/15 & -0.107 & 0.107 & --- & 84 & 0 \\
 & LL & trt & OP & abs & 0.5 & 500 & 15/15 & 0.000 & 0.046 & --- & 97 & 0 \\
 & LL & trt & OP & abs & 0.5 & 1000 & 15/15 & 0.000 & 0.043 & --- & 95 & 0 \\
 & LL & trt & OP & abs & 0.5 & 2000 & 15/15 & 0.000 & 0.046 & --- & 90 & 0 \\
 & LL & trt & OP & abs & 1 & 500 & 15/15 & 0.504 & 0.504 & --- & 93 & 0 \\
 & LL & trt & OP & abs & 1 & 1000 & 15/15 & 0.329 & 0.329 & --- & 93 & 0 \\
 & LL & trt & OP & abs & 1 & 2000 & 15/15 & 0.292 & 0.292 & --- & 93 & 0 \\
 & LL & trt & OP & origin & 0 & 500 & 15/15 & 0.032 & 0.032 & --- & 92 & 0 \\
 & LL & trt & OP & origin & 0 & 1000 & 15/15 & 0.081 & 0.081 & --- & 93 & 0 \\
 & LL & trt & OP & origin & 0 & 2000 & 15/15 & 0.008 & 0.008 & --- & 92 & 0 \\
 & LL & trt & OP & origin & 0.5 & 500 & 15/15 & 0.000 & 0.005 & --- & 97 & 0 \\
 & LL & trt & OP & origin & 0.5 & 1000 & 15/15 & 0.000 & 0.001 & --- & 97 & 0 \\
 & LL & trt & OP & origin & 0.5 & 2000 & 15/15 & 0.000 & 0.012 & --- & 97 & 0 \\
 & LL & trt & OP & origin & 1 & 500 & 15/15 & 0.753 & 0.753 & --- & 88 & 0 \\
 & LL & trt & OP & origin & 1 & 1000 & 15/15 & 0.318 & 0.318 & --- & 90 & 0 \\
 & LL & trt & OP & origin & 1 & 2000 & 15/15 & 0.270 & 0.270 & --- & 95 & 0 \\
 & LL & trt & PS-stc & abs & 0 & 500 & 15/15 & 0.124 & 0.286 & --- & 96 & 0 \\
 & LL & trt & PS-stc & abs & 0 & 1000 & 15/15 & 0.004 & 0.066 & --- & 97 & 0 \\
 & LL & trt & PS-stc & abs & 0 & 2000 & 15/15 & -0.004 & 0.009 & --- & 95 & 0 \\
 & LL & trt & PS-stc & abs & 0.5 & 500 & 15/15 & 0.000 & 0.003 & --- & 98 & 0 \\
 & LL & trt & PS-stc & abs & 0.5 & 1000 & 15/15 & 0.000 & 0.004 & --- & 98 & 0 \\
 & LL & trt & PS-stc & abs & 0.5 & 2000 & 15/15 & 0.000 & 0.003 & --- & 97 & 0 \\
 & LL & trt & PS-stc & abs & 1 & 500 & 15/15 & 0.218 & 0.218 & --- & 93 & 0 \\
 & LL & trt & PS-stc & abs & 1 & 1000 & 15/15 & 0.039 & 0.049 & --- & 93 & 0 \\
 & LL & trt & PS-stc & abs & 1 & 2000 & 15/15 & 0.058 & 0.064 & --- & 95 & 0 \\
 & LL & trt & PS-stc & origin & 0 & 500 & 15/15 & 0.653 & 0.653 & --- & 94 & 0 \\
 & LL & trt & PS-stc & origin & 0 & 1000 & 15/15 & 0.162 & 0.162 & --- & 95 & 0 \\
 & LL & trt & PS-stc & origin & 0 & 2000 & 15/15 & 0.020 & 0.020 & --- & 93 & 0 \\
 & LL & trt & PS-stc & origin & 0.5 & 500 & 15/15 & 0.000 & 0.006 & --- & 98 & 0 \\
 & LL & trt & PS-stc & origin & 0.5 & 1000 & 15/15 & 0.000 & 0.010 & --- & 98 & 0 \\
 & LL & trt & PS-stc & origin & 0.5 & 2000 & 15/15 & 0.000 & 0.005 & --- & 97 & 0 \\
 & LL & trt & PS-stc & origin & 1 & 500 & 15/15 & 0.036 & 0.036 & --- & 90 & 0 \\
 & LL & trt & PS-stc & origin & 1 & 1000 & 15/15 & 0.095 & 0.095 & --- & 89 & 0 \\
 & LL & trt & PS-stc & origin & 1 & 2000 & 15/15 & 0.109 & 0.109 & --- & 94 & 0 \\
 & SL & no-trt & OP & abs & 0 & 1000 & 15/15 & 0.017 & 0.086 & --- & 97 & 0 \\
 & SL & no-trt & OP & abs & 0 & 2000 & 15/15 & 0.009 & 0.065 & --- & 96 & 0 \\
 & SL & no-trt & OP & abs & 0.5 & 1000 & 15/15 & -0.002 & 0.012 & --- & 97 & 0 \\
 & SL & no-trt & OP & abs & 0.5 & 2000 & 15/15 & -0.002 & 0.005 & --- & 97 & 0 \\
 & SL & no-trt & OP & abs & 1 & 1000 & 15/15 & 0.091 & 0.143 & --- & 93 & 0 \\
 & SL & no-trt & OP & abs & 1 & 2000 & 15/15 & 0.133 & 0.162 & --- & 94 & 0 \\
 & SL & no-trt & OP & origin & 0 & 1000 & 15/15 & 0.071 & 0.096 & --- & 97 & 0 \\
 & SL & no-trt & OP & origin & 0 & 2000 & 15/15 & 0.047 & 0.062 & --- & 97 & 0 \\
 & SL & no-trt & OP & origin & 0.5 & 1000 & 15/15 & 0.042 & 0.044 & --- & 98 & 0 \\
 & SL & no-trt & OP & origin & 0.5 & 2000 & 15/15 & 0.020 & 0.020 & --- & 98 & 0 \\
 & SL & no-trt & OP & origin & 1 & 1000 & 15/15 & -0.224 & 0.241 & --- & 92 & 0 \\
 & SL & no-trt & OP & origin & 1 & 2000 & 15/15 & -0.024 & 0.094 & --- & 95 & 0 \\
 & SL & trt & OP & abs & 0 & 1000 & 15/15 & -0.102 & 0.118 & --- & 96 & 0 \\
 & SL & trt & OP & abs & 0 & 2000 & 15/15 & -0.109 & 0.116 & --- & 92 & 0 \\
 & SL & trt & OP & abs & 0.5 & 1000 & 15/15 & 0.001 & 0.065 & --- & 97 & 0 \\
 & SL & trt & OP & abs & 0.5 & 2000 & 15/15 & -0.002 & 0.053 & --- & 95 & 0 \\
 & SL & trt & OP & abs & 1 & 1000 & 15/15 & 0.258 & 0.285 & --- & 92 & 0 \\
 & SL & trt & OP & abs & 1 & 2000 & 15/15 & 0.257 & 0.263 & --- & 91 & 0 \\
 & SL & trt & OP & origin & 0 & 1000 & 15/15 & 0.007 & 0.075 & --- & 98 & 0 \\
 & SL & trt & OP & origin & 0 & 2000 & 15/15 & 0.108 & 0.117 & --- & 97 & 0 \\
 & SL & trt & OP & origin & 0.5 & 1000 & 15/15 & 0.057 & 0.057 & --- & 99 & 0 \\
 & SL & trt & OP & origin & 0.5 & 2000 & 15/15 & 0.033 & 0.037 & --- & 98 & 0 \\
 & SL & trt & OP & origin & 1 & 1000 & 15/15 & -0.168 & 0.209 & --- & 94 & 0 \\
 & SL & trt & OP & origin & 1 & 2000 & 15/15 & -0.176 & 0.184 & --- & 95 & 0 \\
 & SL & trt & PS-stc & abs & 0 & 1000 & 15/15 & 0.024 & 0.068 & --- & 97 & 0 \\
 & SL & trt & PS-stc & abs & 0 & 2000 & 15/15 & 0.008 & 0.075 & --- & 96 & 0 \\
 & SL & trt & PS-stc & abs & 0.5 & 1000 & 15/15 & 0.000 & 0.020 & --- & 98 & 0 \\
 & SL & trt & PS-stc & abs & 0.5 & 2000 & 15/15 & -0.001 & 0.012 & --- & 97 & 0 \\
 & SL & trt & PS-stc & abs & 1 & 1000 & 15/15 & 0.112 & 0.162 & --- & 94 & 0 \\
 & SL & trt & PS-stc & abs & 1 & 2000 & 15/15 & 0.103 & 0.122 & --- & 95 & 0 \\
 & SL & trt & PS-stc & origin & 0 & 1000 & 15/15 & 0.001 & 0.068 & --- & 98 & 0 \\
 & SL & trt & PS-stc & origin & 0 & 2000 & 15/15 & 0.095 & 0.103 & --- & 98 & 0 \\
 & SL & trt & PS-stc & origin & 0.5 & 1000 & 15/15 & 0.049 & 0.049 & --- & 99 & 0 \\
 & SL & trt & PS-stc & origin & 0.5 & 2000 & 15/15 & 0.033 & 0.033 & --- & 99 & 0 \\
 & SL & trt & PS-stc & origin & 1 & 1000 & 15/15 & -0.123 & 0.180 & --- & 96 & 0 \\
 & SL & trt & PS-stc & origin & 1 & 2000 & 15/15 & -0.146 & 0.154 & --- & 96 & 0 \\
S3b & LL & no-trt & OP & abs & 0 & 500 & 15/15 & 0.055 & 0.065 & --- & 95 & 0 \\
 & LL & no-trt & OP & abs & 0 & 1000 & 15/15 & 0.027 & 0.048 & --- & 96 & 0 \\
 & LL & no-trt & OP & abs & 0 & 2000 & 15/15 & 0.029 & 0.046 & --- & 94 & 0 \\
 & LL & no-trt & OP & origin & 0 & 500 & 15/15 & -0.003 & 0.076 & --- & 82 & 0 \\
 & LL & no-trt & OP & origin & 0 & 1000 & 15/15 & -0.082 & 0.090 & --- & 79 & 0 \\
 & LL & no-trt & OP & origin & 0 & 2000 & 15/15 & -0.098 & 0.098 & --- & 73 & 0 \\
 & LL & trt & OP & abs & 0 & 500 & 15/15 & -0.079 & 0.113 & --- & 93 & 0 \\
 & LL & trt & OP & abs & 0 & 1000 & 15/15 & -0.081 & 0.088 & --- & 90 & 0 \\
 & LL & trt & OP & abs & 0 & 2000 & 15/15 & -0.074 & 0.076 & --- & 83 & 0 \\
 & LL & trt & OP & origin & 0 & 500 & 15/15 & 0.011 & 0.078 & --- & 81 & 0 \\
 & LL & trt & OP & origin & 0 & 1000 & 15/15 & -0.087 & 0.092 & --- & 76 & 0 \\
 & LL & trt & OP & origin & 0 & 2000 & 15/15 & -0.117 & 0.117 & --- & 70 & 0 \\
 & LL & trt & PS-stc & abs & 0 & 500 & 15/15 & 0.026 & 0.121 & --- & 97 & 0 \\
 & LL & trt & PS-stc & abs & 0 & 1000 & 15/15 & 0.014 & 0.047 & --- & 96 & 0 \\
 & LL & trt & PS-stc & abs & 0 & 2000 & 15/15 & 0.024 & 0.045 & --- & 95 & 0 \\
 & LL & trt & PS-stc & origin & 0 & 500 & 15/15 & 0.125 & 0.131 & --- & 84 & 0 \\
 & LL & trt & PS-stc & origin & 0 & 1000 & 15/15 & -0.069 & 0.086 & --- & 78 & 0 \\
 & LL & trt & PS-stc & origin & 0 & 2000 & 15/15 & -0.098 & 0.098 & --- & 72 & 0 \\
 & SL & no-trt & OP & abs & 0 & 1000 & 15/15 & 0.000 & 0.070 & --- & 97 & 0 \\
 & SL & no-trt & OP & abs & 0 & 2000 & 15/15 & -0.016 & 0.047 & --- & 96 & 0 \\
 & SL & no-trt & OP & origin & 0 & 1000 & 15/15 & -0.052 & 0.078 & --- & 98 & 0 \\
 & SL & no-trt & OP & origin & 0 & 2000 & 15/15 & 0.005 & 0.040 & --- & 97 & 0 \\
 & SL & trt & OP & abs & 0 & 1000 & 15/15 & -0.106 & 0.113 & --- & 95 & 0 \\
 & SL & trt & OP & abs & 0 & 2000 & 15/15 & -0.128 & 0.128 & --- & 93 & 0 \\
 & SL & trt & OP & origin & 0 & 1000 & 15/15 & 0.001 & 0.088 & --- & 98 & 0 \\
 & SL & trt & OP & origin & 0 & 2000 & 15/15 & -0.008 & 0.054 & --- & 97 & 0 \\
 & SL & trt & PS-stc & abs & 0 & 1000 & 15/15 & 0.017 & 0.059 & --- & 98 & 0 \\
 & SL & trt & PS-stc & abs & 0 & 2000 & 15/15 & -0.023 & 0.051 & --- & 97 & 0 \\
 & SL & trt & PS-stc & origin & 0 & 1000 & 15/15 & -0.028 & 0.081 & --- & 99 & 0 \\
 & SL & trt & PS-stc & origin & 0 & 2000 & 15/15 & -0.024 & 0.054 & --- & 98 & 0 \\
\hline
\end{longtable}
\endgroup
\begingroup\scriptsize
\setlength{\tabcolsep}{2.5pt}
\begin{longtable}{llllcccrrrrrr}
\caption{Mapping accuracy with treatment, all scenarios, grid-averaged.
OP retains the natural treatment mechanism; PS-stc is the policy-standardized mapping under the static regime assigning $A=1$ to everyone.
Target is the absolute (abs) or origin-referenced (origin) mapping.
For OP, the displayed bias, RMSE, and coverage compare with the scenario-specific structural benchmark (equivalently here, the static common-policy benchmark) and are comparator diagnostics rather than bias, RMSE, or coverage for the OP estimand itself.
Unsuffixed performance columns use only regular interior-supported inversion points.
Supp and T-clamp report the regular interior-supported fraction and truth-level operational fallback, respectively, as defined in Table~\ref{tab:sim-comp-obs}.}
\label{tab:sim-comp-treat}\\
\hline
Scn & Learner & Policy & Target & $m_{\rmref}$ & $n$ & Supp & Bias & RMSE & Cov$_{\rm pt}$ & Cov$_{\rm sim}$ & Width & T-clamp \\
\hline\endfirsthead
\multicolumn{13}{l}{\small\itshape continued}\\ \hline
Scn & Learner & Policy & Target & $m_{\rmref}$ & $n$ & Supp & Bias & RMSE & Cov$_{\rm pt}$ & Cov$_{\rm sim}$ & Width & T-clamp \\
\hline\endhead
S1a & LL & OP & abs & 0 & 500 & 5/5 & -0.210 & 0.340 & 90 & 88 & 1.06 & 0 \\
 & LL & OP & origin & 0 & 500 & 5/5 & 0.038 & 0.408 & 96 & 96 & 1.63 & 0 \\
 & LL & OP & abs & 0 & 1000 & 5/5 & -0.206 & 0.277 & 79 & 72 & 0.71 & 0 \\
 & LL & OP & origin & 0 & 1000 & 5/5 & 0.003 & 0.281 & 96 & 96 & 1.08 & 0 \\
 & LL & OP & abs & 0 & 2000 & 5/5 & -0.204 & 0.240 & 60 & 46 & 0.49 & 0 \\
 & LL & OP & origin & 0 & 2000 & 5/5 & -0.005 & 0.190 & 95 & 95 & 0.74 & 0 \\
 & LL & PS-stc & abs & 0 & 500 & 5/5 & -0.012 & 0.328 & 97 & 97 & 1.27 & 0 \\
 & LL & PS-stc & origin & 0 & 500 & 5/5 & 0.061 & 0.410 & 96 & 96 & 1.64 & 0 \\
 & LL & PS-stc & abs & 0 & 1000 & 5/5 & -0.004 & 0.209 & 96 & 97 & 0.82 & 0 \\
 & LL & PS-stc & origin & 0 & 1000 & 5/5 & 0.026 & 0.272 & 97 & 97 & 1.04 & 0 \\
 & LL & PS-stc & abs & 0 & 2000 & 5/5 & -0.001 & 0.140 & 96 & 96 & 0.56 & 0 \\
 & LL & PS-stc & origin & 0 & 2000 & 5/5 & 0.017 & 0.178 & 96 & 96 & 0.70 & 0 \\
S1b & LL & OP & abs & 0 & 500 & 5/5 & -0.144 & 0.408 & 92 & 92 & 1.33 & 0 \\
 & LL & OP & origin & 0 & 500 & 5/5 & -0.187 & 0.652 & 84 & 73 & 2.20 & 0 \\
 & LL & OP & abs & 0 & 1000 & 5/5 & -0.149 & 0.298 & 83 & 75 & 0.88 & 0 \\
 & LL & OP & origin & 0 & 1000 & 5/5 & -0.241 & 0.475 & 77 & 58 & 1.44 & 0 \\
 & LL & OP & abs & 0 & 2000 & 5/5 & -0.154 & 0.240 & 67 & 39 & 0.60 & 0 \\
 & LL & OP & origin & 0 & 2000 & 5/5 & -0.288 & 0.409 & 60 & 29 & 0.97 & 0 \\
 & LL & PS-stc & abs & 0 & 500 & 5/5 & 0.025 & 0.872 & 95 & 94 & 3.70 & 0 \\
 & LL & PS-stc & origin & 0 & 500 & 5/5 & -0.136 & 1.071 & 87 & 74 & 3.87 & 0 \\
 & LL & PS-stc & abs & 0 & 1000 & 5/5 & 0.048 & 0.288 & 95 & 92 & 1.02 & 0 \\
 & LL & PS-stc & origin & 0 & 1000 & 5/5 & -0.200 & 0.454 & 79 & 60 & 1.41 & 0 \\
 & LL & PS-stc & abs & 0 & 2000 & 5/5 & 0.041 & 0.197 & 91 & 86 & 0.67 & 0 \\
 & LL & PS-stc & origin & 0 & 2000 & 5/5 & -0.250 & 0.377 & 62 & 30 & 0.93 & 0 \\
 & SL & OP & abs & 0 & 1000 & 5/5 & -0.306 & 0.514 & 84 & 74 & 1.35 & 0 \\
 & SL & OP & origin & 0 & 1000 & 5/5 & 0.309 & 1.010 & 97 & 94 & 2.98 & 0 \\
 & SL & OP & abs & 0 & 2000 & 5/5 & -0.296 & 0.456 & 75 & 58 & 1.11 & 0 \\
 & SL & OP & origin & 0 & 2000 & 5/5 & 0.255 & 0.826 & 96 & 95 & 2.77 & 0 \\
 & SL & PS-stc & abs & 0 & 1000 & 5/5 & -0.113 & 0.488 & 97 & 94 & 1.61 & 0 \\
 & SL & PS-stc & origin & 0 & 1000 & 5/5 & 0.290 & 0.976 & 99 & 98 & 3.00 & 0 \\
 & SL & PS-stc & abs & 0 & 2000 & 5/5 & -0.101 & 0.384 & 94 & 90 & 1.33 & 0 \\
 & SL & PS-stc & origin & 0 & 2000 & 5/5 & 0.245 & 0.825 & 98 & 96 & 2.77 & 0 \\
S2a & LL & OP & abs & 0 & 500 & 3/5 & -0.221 & 0.316 & 95 & 97 & 0.81 & 40 \\
 & LL & OP & origin & 0 & 500 & 5/5 & 0.041 & 0.429 & 95 & 95 & 1.64 & 0 \\
 & LL & OP & abs & 0 & 1000 & 3/5 & -0.232 & 0.284 & 81 & 84 & 0.59 & 40 \\
 & LL & OP & origin & 0 & 1000 & 5/5 & 0.003 & 0.291 & 95 & 95 & 1.11 & 0 \\
 & LL & OP & abs & 0 & 2000 & 3/5 & -0.236 & 0.263 & 56 & 59 & 0.43 & 40 \\
 & LL & OP & origin & 0 & 2000 & 5/5 & -0.004 & 0.202 & 94 & 94 & 0.77 & 0 \\
 & LL & PS-stc & abs & 0 & 500 & 3/5 & -0.004 & 0.297 & 94 & 95 & 1.12 & 40 \\
 & LL & PS-stc & origin & 0 & 500 & 5/5 & 0.078 & 0.509 & 96 & 96 & 1.72 & 0 \\
 & LL & PS-stc & abs & 0 & 1000 & 3/5 & -0.005 & 0.217 & 95 & 96 & 0.78 & 40 \\
 & LL & PS-stc & origin & 0 & 1000 & 5/5 & 0.033 & 0.286 & 96 & 96 & 1.07 & 0 \\
 & LL & PS-stc & abs & 0 & 2000 & 3/5 & -0.003 & 0.152 & 95 & 96 & 0.57 & 40 \\
 & LL & PS-stc & origin & 0 & 2000 & 5/5 & 0.019 & 0.194 & 96 & 96 & 0.73 & 0 \\
S2b & LL & OP & abs & 0 & 500 & 5/5 & -0.249 & 0.395 & 94 & 94 & 1.16 & 0 \\
 & LL & OP & origin & 0 & 500 & 5/5 & 0.011 & 0.381 & 96 & 96 & 1.54 & 0 \\
 & LL & OP & abs & 0 & 1000 & 5/5 & -0.232 & 0.305 & 80 & 78 & 0.77 & 0 \\
 & LL & OP & origin & 0 & 1000 & 5/5 & -0.025 & 0.263 & 95 & 95 & 1.02 & 0 \\
 & LL & OP & abs & 0 & 2000 & 5/5 & -0.220 & 0.260 & 61 & 50 & 0.53 & 0 \\
 & LL & OP & origin & 0 & 2000 & 5/5 & -0.026 & 0.180 & 95 & 95 & 0.70 & 0 \\
 & LL & PS-stc & abs & 0 & 500 & 5/5 & -0.032 & 0.395 & 97 & 97 & 1.41 & 0 \\
 & LL & PS-stc & origin & 0 & 500 & 5/5 & 0.032 & 0.436 & 97 & 98 & 1.59 & 0 \\
 & LL & PS-stc & abs & 0 & 1000 & 5/5 & -0.015 & 0.222 & 96 & 96 & 0.86 & 0 \\
 & LL & PS-stc & origin & 0 & 1000 & 5/5 & 0.013 & 0.248 & 96 & 97 & 0.97 & 0 \\
 & LL & PS-stc & abs & 0 & 2000 & 5/5 & -0.005 & 0.150 & 96 & 95 & 0.59 & 0 \\
 & LL & PS-stc & origin & 0 & 2000 & 5/5 & 0.012 & 0.168 & 96 & 96 & 0.66 & 0 \\
S2c & LL & OP & abs & 0 & 500 & 5/5 & -0.172 & 0.291 & 90 & 87 & 0.82 & 0 \\
 & LL & OP & origin & 0 & 500 & 5/5 & 0.042 & 0.446 & 94 & 94 & 1.60 & 0 \\
 & LL & OP & abs & 0 & 1000 & 5/5 & -0.187 & 0.246 & 77 & 68 & 0.58 & 0 \\
 & LL & OP & origin & 0 & 1000 & 5/5 & 0.016 & 0.291 & 95 & 95 & 1.09 & 0 \\
 & LL & OP & abs & 0 & 2000 & 5/5 & -0.189 & 0.221 & 58 & 38 & 0.41 & 0 \\
 & LL & OP & origin & 0 & 2000 & 5/5 & -0.013 & 0.181 & 95 & 95 & 0.74 & 0 \\
 & LL & PS-stc & abs & 0 & 500 & 5/5 & 0.013 & 0.286 & 96 & 96 & 1.05 & 0 \\
 & LL & PS-stc & origin & 0 & 500 & 5/5 & 0.064 & 0.434 & 96 & 96 & 1.60 & 0 \\
 & LL & PS-stc & abs & 0 & 1000 & 5/5 & 0.003 & 0.198 & 96 & 97 & 0.73 & 0 \\
 & LL & PS-stc & origin & 0 & 1000 & 5/5 & 0.036 & 0.277 & 96 & 96 & 1.05 & 0 \\
 & LL & PS-stc & abs & 0 & 2000 & 5/5 & 0.008 & 0.141 & 95 & 95 & 0.52 & 0 \\
 & LL & PS-stc & origin & 0 & 2000 & 5/5 & 0.007 & 0.174 & 96 & 96 & 0.70 & 0 \\
S3a & LL & OP & abs & 0 & 500 & 15/15 & -0.095 & 2.576 & 94 & 91 & 19.97 & 0 \\
 & LL & OP & origin & 0 & 500 & 15/15 & 0.032 & 6.250 & 92 & 92 & 48.25 & 0 \\
 & LL & OP & abs & 0 & 1000 & 15/15 & -0.104 & 0.241 & 92 & 91 & 0.78 & 0 \\
 & LL & OP & origin & 0 & 1000 & 15/15 & 0.081 & 0.395 & 93 & 93 & 1.37 & 0 \\
 & LL & OP & abs & 0 & 2000 & 15/15 & -0.107 & 0.165 & 84 & 79 & 0.47 & 0 \\
 & LL & OP & origin & 0 & 2000 & 15/15 & 0.008 & 0.216 & 92 & 93 & 0.79 & 0 \\
 & LL & OP & abs & 0.5 & 500 & 15/15 & 0.000 & 0.136 & 97 & 97 & 0.48 & 0 \\
 & LL & OP & origin & 0.5 & 500 & 15/15 & 0.000 & 0.212 & 97 & 96 & 0.80 & 0 \\
 & LL & OP & abs & 0.5 & 1000 & 15/15 & 0.000 & 0.095 & 95 & 93 & 0.32 & 0 \\
 & LL & OP & origin & 0.5 & 1000 & 15/15 & 0.000 & 0.142 & 97 & 96 & 0.54 & 0 \\
 & LL & OP & abs & 0.5 & 2000 & 15/15 & 0.000 & 0.074 & 90 & 83 & 0.22 & 0 \\
 & LL & OP & origin & 0.5 & 2000 & 15/15 & 0.000 & 0.098 & 97 & 96 & 0.37 & 0 \\
 & LL & OP & abs & 1 & 500 & 15/15 & 0.504 & 5.737 & 93 & 88 & 82.17 & 0 \\
 & LL & OP & origin & 1 & 500 & 15/15 & 0.753 & 9.576 & 88 & 88 & 140.68 & 0 \\
 & LL & OP & abs & 1 & 1000 & 15/15 & 0.329 & 1.811 & 93 & 86 & 7.49 & 0 \\
 & LL & OP & origin & 1 & 1000 & 15/15 & 0.318 & 3.074 & 90 & 90 & 12.63 & 0 \\
 & LL & OP & abs & 1 & 2000 & 15/15 & 0.292 & 1.050 & 93 & 83 & 2.67 & 0 \\
 & LL & OP & origin & 1 & 2000 & 15/15 & 0.270 & 1.885 & 95 & 95 & 4.56 & 0 \\
 & LL & PS-stc & abs & 0 & 500 & 15/15 & 0.124 & 6.490 & 96 & 95 & 149.60 & 0 \\
 & LL & PS-stc & origin & 0 & 500 & 15/15 & 0.653 & 14.460 & 94 & 94 & 333.11 & 0 \\
 & LL & PS-stc & abs & 0 & 1000 & 15/15 & 0.004 & 0.760 & 97 & 96 & 3.01 & 0 \\
 & LL & PS-stc & origin & 0 & 1000 & 15/15 & 0.162 & 1.621 & 95 & 95 & 6.16 & 0 \\
 & LL & PS-stc & abs & 0 & 2000 & 15/15 & -0.004 & 0.140 & 95 & 94 & 0.54 & 0 \\
 & LL & PS-stc & origin & 0 & 2000 & 15/15 & 0.020 & 0.204 & 93 & 94 & 0.74 & 0 \\
 & LL & PS-stc & abs & 0.5 & 500 & 15/15 & 0.000 & 0.139 & 98 & 97 & 0.54 & 0 \\
 & LL & PS-stc & origin & 0.5 & 500 & 15/15 & 0.000 & 0.200 & 98 & 97 & 0.79 & 0 \\
 & LL & PS-stc & abs & 0.5 & 1000 & 15/15 & 0.000 & 0.090 & 98 & 97 & 0.34 & 0 \\
 & LL & PS-stc & origin & 0.5 & 1000 & 15/15 & 0.000 & 0.136 & 98 & 97 & 0.50 & 0 \\
 & LL & PS-stc & abs & 0.5 & 2000 & 15/15 & 0.000 & 0.060 & 97 & 96 & 0.23 & 0 \\
 & LL & PS-stc & origin & 0.5 & 2000 & 15/15 & 0.000 & 0.091 & 97 & 95 & 0.34 & 0 \\
 & LL & PS-stc & abs & 1 & 500 & 15/15 & 0.218 & 7.905 & 93 & 90 & 203.48 & 0 \\
 & LL & PS-stc & origin & 1 & 500 & 15/15 & 0.036 & 10.823 & 90 & 90 & 325.66 & 0 \\
 & LL & PS-stc & abs & 1 & 1000 & 15/15 & 0.039 & 1.663 & 93 & 90 & 8.10 & 0 \\
 & LL & PS-stc & origin & 1 & 1000 & 15/15 & 0.095 & 3.071 & 89 & 89 & 14.92 & 0 \\
 & LL & PS-stc & abs & 1 & 2000 & 15/15 & 0.058 & 0.727 & 95 & 94 & 1.89 & 0 \\
 & LL & PS-stc & origin & 1 & 2000 & 15/15 & 0.109 & 1.487 & 94 & 94 & 3.29 & 0 \\
 & SL & OP & abs & 0 & 1000 & 15/15 & -0.102 & 0.373 & 96 & 94 & 1.38 & 0 \\
 & SL & OP & origin & 0 & 1000 & 15/15 & 0.007 & 0.659 & 98 & 98 & 2.25 & 0 \\
 & SL & OP & abs & 0 & 2000 & 15/15 & -0.109 & 0.326 & 92 & 90 & 1.15 & 0 \\
 & SL & OP & origin & 0 & 2000 & 15/15 & 0.108 & 0.648 & 97 & 98 & 2.07 & 0 \\
 & SL & OP & abs & 0.5 & 1000 & 15/15 & 0.001 & 0.191 & 97 & 94 & 0.73 & 0 \\
 & SL & OP & origin & 0.5 & 1000 & 15/15 & 0.057 & 0.359 & 99 & 99 & 1.44 & 0 \\
 & SL & OP & abs & 0.5 & 2000 & 15/15 & -0.002 & 0.140 & 95 & 89 & 0.54 & 0 \\
 & SL & OP & origin & 0.5 & 2000 & 15/15 & 0.033 & 0.290 & 98 & 97 & 1.15 & 0 \\
 & SL & OP & abs & 1 & 1000 & 15/15 & 0.258 & 0.697 & 92 & 60 & 1.82 & 0 \\
 & SL & OP & origin & 1 & 1000 & 15/15 & -0.168 & 0.914 & 94 & 67 & 2.71 & 0 \\
 & SL & OP & abs & 1 & 2000 & 15/15 & 0.257 & 0.644 & 91 & 76 & 1.73 & 0 \\
 & SL & OP & origin & 1 & 2000 & 15/15 & -0.176 & 0.832 & 95 & 86 & 2.59 & 0 \\
 & SL & PS-stc & abs & 0 & 1000 & 15/15 & 0.024 & 0.393 & 97 & 97 & 1.55 & 0 \\
 & SL & PS-stc & origin & 0 & 1000 & 15/15 & 0.001 & 0.643 & 98 & 99 & 2.28 & 0 \\
 & SL & PS-stc & abs & 0 & 2000 & 15/15 & 0.008 & 0.353 & 96 & 96 & 1.31 & 0 \\
 & SL & PS-stc & origin & 0 & 2000 & 15/15 & 0.095 & 0.599 & 98 & 99 & 2.08 & 0 \\
 & SL & PS-stc & abs & 0.5 & 1000 & 15/15 & 0.000 & 0.190 & 98 & 97 & 0.83 & 0 \\
 & SL & PS-stc & origin & 0.5 & 1000 & 15/15 & 0.049 & 0.354 & 99 & 99 & 1.47 & 0 \\
 & SL & PS-stc & abs & 0.5 & 2000 & 15/15 & -0.001 & 0.143 & 97 & 96 & 0.63 & 0 \\
 & SL & PS-stc & origin & 0.5 & 2000 & 15/15 & 0.033 & 0.290 & 99 & 98 & 1.20 & 0 \\
 & SL & PS-stc & abs & 1 & 1000 & 15/15 & 0.112 & 0.666 & 94 & 65 & 1.93 & 0 \\
 & SL & PS-stc & origin & 1 & 1000 & 15/15 & -0.123 & 0.919 & 96 & 66 & 2.76 & 0 \\
 & SL & PS-stc & abs & 1 & 2000 & 15/15 & 0.103 & 0.590 & 95 & 86 & 1.78 & 0 \\
 & SL & PS-stc & origin & 1 & 2000 & 15/15 & -0.146 & 0.831 & 96 & 87 & 2.61 & 0 \\
S3b & LL & OP & abs & 0 & 500 & 15/15 & -0.079 & 1.073 & 93 & 90 & 6.20 & 0 \\
 & LL & OP & origin & 0 & 500 & 15/15 & 0.011 & 2.464 & 81 & 72 & 14.20 & 0 \\
 & LL & OP & abs & 0 & 1000 & 15/15 & -0.081 & 0.216 & 90 & 86 & 0.69 & 0 \\
 & LL & OP & origin & 0 & 1000 & 15/15 & -0.087 & 0.348 & 76 & 64 & 1.15 & 0 \\
 & LL & OP & abs & 0 & 2000 & 15/15 & -0.074 & 0.150 & 83 & 73 & 0.44 & 0 \\
 & LL & OP & origin & 0 & 2000 & 15/15 & -0.117 & 0.242 & 70 & 51 & 0.72 & 0 \\
 & LL & PS-stc & abs & 0 & 500 & 15/15 & 0.026 & 1.471 & 97 & 95 & 8.65 & 0 \\
 & LL & PS-stc & origin & 0 & 500 & 15/15 & 0.125 & 3.154 & 84 & 76 & 18.16 & 0 \\
 & LL & PS-stc & abs & 0 & 1000 & 15/15 & 0.014 & 0.234 & 96 & 94 & 0.84 & 0 \\
 & LL & PS-stc & origin & 0 & 1000 & 15/15 & -0.069 & 0.338 & 78 & 66 & 1.14 & 0 \\
 & LL & PS-stc & abs & 0 & 2000 & 15/15 & 0.024 & 0.146 & 95 & 93 & 0.53 & 0 \\
 & LL & PS-stc & origin & 0 & 2000 & 15/15 & -0.098 & 0.230 & 72 & 52 & 0.69 & 0 \\
 & SL & OP & abs & 0 & 1000 & 15/15 & -0.106 & 0.400 & 95 & 93 & 1.41 & 0 \\
 & SL & OP & origin & 0 & 1000 & 15/15 & 0.001 & 0.729 & 98 & 99 & 2.28 & 0 \\
 & SL & OP & abs & 0 & 2000 & 15/15 & -0.128 & 0.307 & 93 & 87 & 1.11 & 0 \\
 & SL & OP & origin & 0 & 2000 & 15/15 & -0.008 & 0.578 & 97 & 98 & 2.04 & 0 \\
 & SL & PS-stc & abs & 0 & 1000 & 15/15 & 0.017 & 0.407 & 98 & 98 & 1.56 & 0 \\
 & SL & PS-stc & origin & 0 & 1000 & 15/15 & -0.028 & 0.674 & 99 & 100 & 2.33 & 0 \\
 & SL & PS-stc & abs & 0 & 2000 & 15/15 & -0.023 & 0.305 & 97 & 96 & 1.23 & 0 \\
 & SL & PS-stc & origin & 0 & 2000 & 15/15 & -0.024 & 0.524 & 98 & 99 & 2.05 & 0 \\
\hline
\end{longtable}
\endgroup
\begin{longtable}{lllrrrrrrrr}
\caption{Scalar slope/intercept calibration for the linear learner (LL) in the linear scenarios (S1a, S2).
$\phi$ and $\alpha$ are the slope and level-shift scalars; rej.\ is the rejection rate of $H_0:\phi=1$ (power when $\phi_{\rm true}\ne1$).
Cov is Wald 95\% coverage (\%).}
\label{tab:sim-comp-params}\\
\hline
Scn & Axis & Policy & $n$ & $\phi$ bias & $\phi$ RMSE & $\phi$ Cov & rej.\ & $\alpha$ bias & $\alpha$ RMSE & $\alpha$ Cov \\
\hline\endfirsthead
\hline\endhead
S1a & obs & OP & 500 & 0.039 & 0.274 & 95 & 36 & -0.065 & 0.478 & 96 \\
 & obs & OP & 1000 & 0.019 & 0.171 & 96 & 54 & -0.029 & 0.303 & 95 \\
 & obs & OP & 2000 & 0.001 & 0.113 & 96 & 79 & 0.001 & 0.202 & 95 \\
 & treat & OP & 500 & 0.025 & 0.272 & 96 & 31 & -0.248 & 0.556 & 98 \\
 & treat & OP & 1000 & 0.002 & 0.187 & 96 & 51 & -0.209 & 0.406 & 97 \\
 & treat & OP & 2000 & -0.003 & 0.126 & 95 & 75 & -0.199 & 0.309 & 92 \\
 & treat & PS-stc & 500 & 0.041 & 0.274 & 96 & 33 & -0.073 & 0.532 & 96 \\
 & treat & PS-stc & 1000 & 0.017 & 0.182 & 97 & 53 & -0.030 & 0.350 & 96 \\
 & treat & PS-stc & 2000 & 0.011 & 0.119 & 96 & 74 & -0.018 & 0.230 & 96 \\
S2a & obs & OP & 500 & 0.043 & 0.270 & 95 & 32 & -0.090 & 0.581 & 95 \\
 & obs & OP & 1000 & 0.014 & 0.168 & 96 & 51 & -0.028 & 0.357 & 96 \\
 & obs & OP & 2000 & 0.008 & 0.121 & 95 & 77 & -0.013 & 0.260 & 95 \\
 & treat & OP & 500 & 0.027 & 0.286 & 95 & 31 & -0.322 & 0.727 & 98 \\
 & treat & OP & 1000 & 0.002 & 0.194 & 95 & 47 & -0.261 & 0.510 & 97 \\
 & treat & OP & 2000 & -0.002 & 0.135 & 94 & 70 & -0.245 & 0.392 & 93 \\
 & treat & PS-stc & 500 & 0.052 & 0.340 & 96 & 31 & -0.146 & 0.861 & 95 \\
 & treat & PS-stc & 1000 & 0.022 & 0.190 & 96 & 50 & -0.059 & 0.461 & 96 \\
 & treat & PS-stc & 2000 & 0.013 & 0.129 & 96 & 72 & -0.031 & 0.307 & 95 \\
S2b & obs & OP & 500 & 0.027 & 0.244 & 96 & 64 & -0.049 & 0.478 & 96 \\
 & obs & OP & 1000 & 0.014 & 0.161 & 96 & 85 & -0.031 & 0.312 & 94 \\
 & obs & OP & 2000 & 0.004 & 0.108 & 95 & 99 & -0.008 & 0.205 & 95 \\
 & treat & OP & 500 & 0.007 & 0.254 & 96 & 58 & -0.260 & 0.591 & 98 \\
 & treat & OP & 1000 & -0.017 & 0.176 & 95 & 85 & -0.207 & 0.403 & 96 \\
 & treat & OP & 2000 & -0.018 & 0.120 & 95 & 98 & -0.194 & 0.311 & 93 \\
 & treat & PS-stc & 500 & 0.021 & 0.291 & 97 & 59 & -0.064 & 0.651 & 97 \\
 & treat & PS-stc & 1000 & 0.009 & 0.166 & 96 & 84 & -0.028 & 0.340 & 97 \\
 & treat & PS-stc & 2000 & 0.008 & 0.112 & 96 & 98 & -0.017 & 0.241 & 96 \\
S2c & obs & OP & 500 & 0.029 & 0.256 & 94 & 36 & -0.050 & 0.527 & 95 \\
 & obs & OP & 1000 & 0.020 & 0.171 & 96 & 52 & -0.032 & 0.352 & 96 \\
 & obs & OP & 2000 & 0.006 & 0.116 & 96 & 79 & -0.012 & 0.237 & 96 \\
 & treat & OP & 500 & 0.028 & 0.298 & 94 & 33 & -0.280 & 0.705 & 97 \\
 & treat & OP & 1000 & 0.011 & 0.194 & 95 & 50 & -0.246 & 0.481 & 96 \\
 & treat & OP & 2000 & -0.009 & 0.121 & 95 & 76 & -0.201 & 0.333 & 94 \\
 & treat & PS-stc & 500 & 0.043 & 0.291 & 96 & 34 & -0.092 & 0.652 & 95 \\
 & treat & PS-stc & 1000 & 0.024 & 0.184 & 96 & 50 & -0.052 & 0.402 & 96 \\
 & treat & PS-stc & 2000 & 0.005 & 0.116 & 96 & 76 & -0.008 & 0.268 & 95 \\
\hline
\end{longtable}

\end{document}